\newtheorem{theorem}{Theorem}[section]
\newtheorem{lemma}[theorem]{Lemma}
\newtheorem{corollary}[theorem]{Corollary}
\newtheorem{proposition}[theorem]{Proposition}
\newtheorem{remark}[theorem]{Remark}
\newtheorem{definition}[theorem]{Definition}
\newtheorem{notation}[theorem]{Notation}
\numberwithin{equation}{section}
\numberwithin{figure}{section}
\newcommand{\BM}{{\mathbb B}}
\newcommand{\CM}{{\mathbb C}}
\newcommand{\NM}{{\mathbb N}}
\newcommand{\PM}{{\mathbb P}}
\newcommand{\RM}{{\mathbb R}}
\newcommand{\SM}{{\mathbb S}}
\newcommand{\ZM}{{\mathbb Z}}
\newcommand{\KM}{{\mathbb K}}
\newcommand{\Aa}{{\mathcal A}}
\newcommand{\Bb}{{\mathcal B}}
\newcommand{\Dd}{{\mathcal D}}
\newcommand{\Ff}{{\mathcal F}}
\newcommand{\Gg}{{\mathcal G}}
\newcommand{\Ss}{{\mathcal S}}
\newcommand{\Tt}{{\mathcal T}}
\newcommand{\Nn}{{\mathcal N}}
\newcommand{\Mm}{{\mathcal M}}
\newcommand{\Cc}{{\mathcal C}}
\newcommand{\Ll}{{\mathcal L}}
\newcommand{\Qq}{{\mathcal Q}}
\newcommand{\Kk}{{\mathcal K}}
\newcommand{\Hh}{{\mathcal H}}
\begin{document}
\title[Cyclic cocycles and quantized pairings in materials science]{Cyclic cocycles and quantized pairings in materials science}

\author{Emil Prodan}

\address{Department of Physics and
\\ Department of Mathematical Sciences 
\\Yeshiva University 
\\New York, NY 10016, USA}

\email{prodan@yu.edu}

\date{\today}

\thanks{This work was supported by the U.S. National Science Foundation through the grants DMR-1823800 and CMMI-2131760.}

\begin{abstract} 
The pairings between the cyclic cohomologies and the K-theories of separable $C^\ast$-algebras supply topological invariants that often relate to physical response coefficients of materials. Using three numerical simulations, we exemplify how some of these invariants survive throughout the full Sobolev domains of the cocycles. These interesting phenomena, which can be explained by index theorems derived from Alain Connes' quantized calculus, are now well understood in the independent electron picture. Here, we review recent developments addressing the dynamics of correlated many-fermions systems, obtained in collaboration with Bram Mesland. They supply a complete characterization of an algebra of relevant derivations over the $C^\ast$-algebra of canonical anti-commutation relations indexed by a generic discrete Delone lattice. It is argued here that these results already supply the means to generate interesting and relevant states over this algebra of derivations and to identify the cyclic cocycles corresponding to the transport coefficients of the many-fermion systems. The existing index theorems for the pairings of these cocycles, in the restrictive single fermion setting, are reviewed and updated with an emphasis on pushing the analysis on Sobolev domains. An assessment of possible generalizations to the many-body setting is given. 
\end{abstract}

\maketitle

\setcounter{tocdepth}{2}

{\scriptsize \tableofcontents}

\section{Introduction, motivations and challenges}
\label{Sec:Introduction}

Cyclic cocycles and their pairings with the K-theories of underlining algebras of macroscopic physical observables can be often related to physical response coefficients quantifying the reactions of physical systems to external stimuli. For example, the cyclic cocycles generated from circle actions  and invariant traces often relate to transport coefficients \cite{ProdanSpringer2016}, which quantify the relation between the current-density of particle ensembles and macroscopic potential-gradients across samples. One immediate value of this connection to the physicists comes from the fact that disorder can be incorporated in the analysis with virtually no extra effort. For example, the dynamics of un-correlated electrons in a perfect crystal is governed by Hamiltonians drawn from the $C^\ast$-algebra $C^\ast(\ZM^d)$ (its stabilization, to be more precise), where $d$ is the space dimension of the sample. Including disorder amounts to promoting the group $C^\ast$-algebra to a crossed product algebra $C(\Xi)\rtimes \ZM^d$, where $\Xi$ is the space of disorder configurations.

\vspace{0.2cm}

Noncommutative Geometry \cite{ConnesBook} supplies a framework where physics applications can be developed and, most importantly, where difficult calculations can be completed, {\it e.g} by following existing general templates. One such example is the charge transport theory by Bellissard and his collaborators \cite{Bellissard2003,BellissardLNP2003}, which was developed from first principles and culminated with expressions of the transport coefficients that rigorously incorporate the role of dissipation. In particular, with an expression for the Hall conductance at low temperatures and in the presence of disorder. Physicists often look at this expression through a left regular representation $\pi_\xi$ of the algebra $C(\Xi)\rtimes \ZM^2$ on $\ell^2(\ZM^2)$, which up to a multiplicative factor reads as
\begin{equation}\label{Eq:SigmaH1}
\sigma_H = \lim_{V \to \RM^2} \tfrac{1}{|V|} \sum_{x \in \ZM^2 \cap V} \langle x | P_F [[X_1,P_F],[X_2,P_F]]|x \rangle,
\end{equation}
where $P_F = \pi_\xi(p_F)$ is the Fermi projection corresponding to the Fermi energy $E_F$, i.e. the spectral projection of the Hamiltonian onto $[-\infty,E_F]$, and $(X_1,X_2)$ is the position operator on $\ell^2(\ZM^2)$. In the $C^\ast$-language, this expression reads
\begin{equation}\label{Eq:SigmaH2}
\sigma_H = \sum_{\lambda \in \Ss_2} (-1)^\lambda \Tt \big(p_F(\partial_{\lambda_1}p_F) (\partial_{\lambda_2}p_F)\big ),
\end{equation}
where $\Ss_N$ denotes the permutation group of $N$ objects and $(\Tt,\partial)$ is the standard differential structure on $C(\Xi) \rtimes \ZM^2$, with the trace $\Tt$ induced from a measure on $\Xi$. One immediately recognizes on the right side of \eqref{Eq:SigmaH2} a pairing of a cyclic cocycle with $p_F$, hence one can understand on the spot the topological content of \eqref{Eq:SigmaH2}. However, this simplicity is deceiving because we are dealing here with elements from the weak closure of the $C^\ast$-algebra, unless the Fermi energy falls in a spectral gap. This is the central theme of the applications we will discuss here, which are first developed in a $C^\ast$-algebraic context and then pushed into Sobolev domains.

\vspace{0.2cm}

For us, expression~\eqref{Eq:SigmaH2} also supplied the key to a very efficient numerical algorithm to evaluate expressions like \eqref{Eq:SigmaH1}. The difficulty in the numerical evaluation of \eqref{Eq:SigmaH1} comes from the incompatibility between the position operator and the periodic boundary conditions assumed in the computer simulations of finite-samples. The solution presented itself once we chose to work directly with the $C^\ast$-algebra rather than a representation \cite{ProdanARMA2013,ProdanSpringer2017} and, to give a flavor of what the $C^\ast$-formalism offered, we present below the optimal solution, which consists of the replacement
\begin{equation}
[X_j,A] \to  \sum_{z^{2L_j+1}=1} c_z z^{X_j} A z^{-X_j}, \quad c_z = \left \{ \begin{array}{ll}
\frac{z^{L_j}}{1-z}, \ & z \neq 1\\
0, & z =1
\end{array}
\right .
\end{equation}
where $2L_j+1$ is the size of the sample in direction $j$. Under mild assumption on the Hamiltonians, the above recipe leads to numerical algorithms that converge exponentially fast to the thermodynamic limit \cite{ProdanSpringer2017}. This gave us the opportunity, for the first time, to get close enough to the topological phase transitions, mentioned next, and compute related critical exponents and other physical quantities of interest. These ideas crystallized around the year 2009 and the first published results can be found in \cite{ProdanPRL2010}. After that, the workstations in my computational physics lab worked uninterrupted for years, on mapping the remarkable properties of the cyclic cocycle pairings. The best way to communicate what I mean by ``remarkable'' is to showcase a few examples.

\vspace{0.2cm}

Let us first examine the disordered Haldane model Hamiltonian \cite{HaldanePRL1988}
\begin{equation}\label{Eq:Model1}
H_\xi = \sum_{\langle x,y \rangle} |x\rangle \langle y| +0.6 \imath \sum_{\langle \langle x,y\rangle \rangle } \eta_x \big (|x\rangle \langle y | - |y\rangle \langle x | \big ) + W \sum_{x} \xi_x |x\rangle \langle x|,
\end{equation}
defined over the honeycomb lattice and where $\langle \cdot,\cdot \rangle / \langle\langle \cdot,\cdot \rangle \rangle$ indicate first/second near-neighboring sites and $\xi_x$ are drawn randomly and independently from the interval $[-1/2,1/2]$ (see \cite{ProdanJPA2011} for the rest of the details). $W$ will be referred to as the disorder strength. $H_\xi$ is a left regular representation of an element from $M_2(\CM) \otimes C\big ([-\tfrac{1}{2},\tfrac{1}{2}]^{\ZM^2}\big ) \rtimes \ZM^2$ and the phenomena exhibited here are generic, so the particular expression of this element is really not that important, though one needs to make sure that the Fermi projection falls into a non-trivial K-theoretic class when $W=0$ (the constant $0.6$ in Eq.~\eqref{Eq:Model1} assures just that). 

\begin{figure}
\center
  \includegraphics[width=\textwidth]{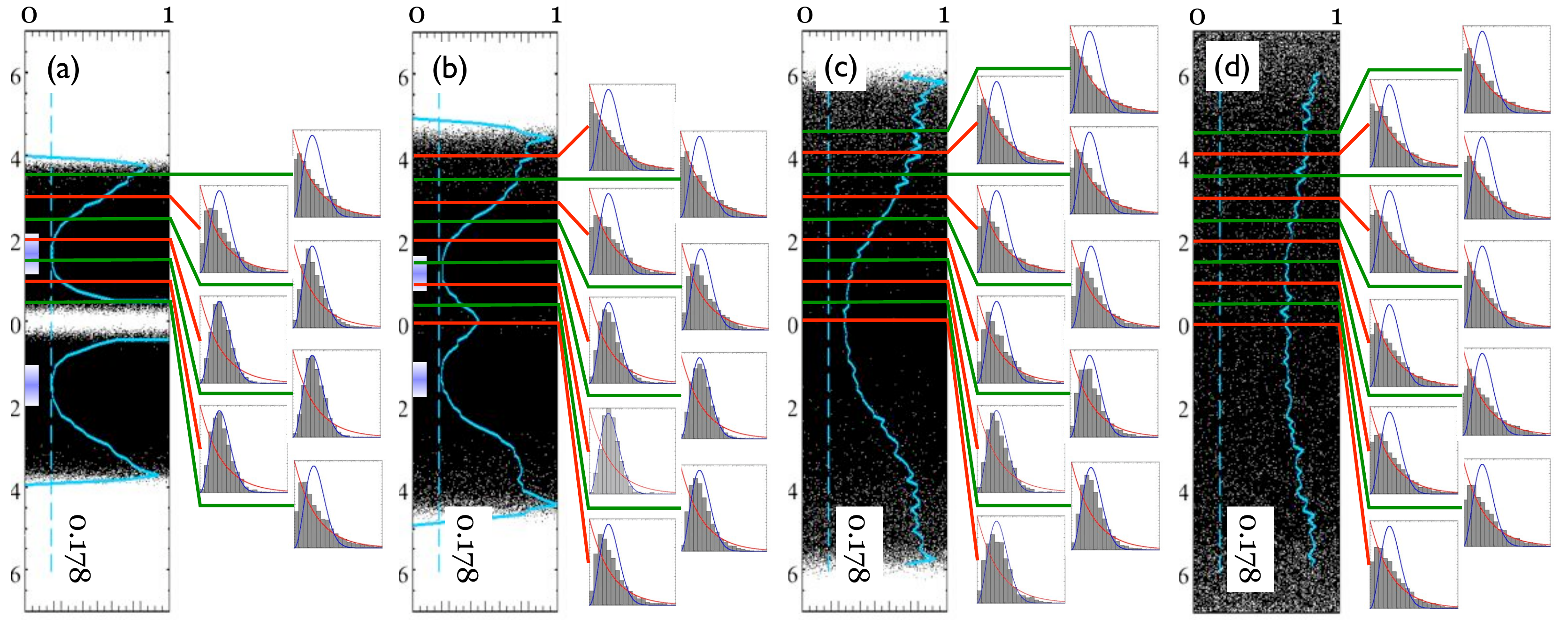}\\
  \caption{\small (Reproduced with permission from \cite{ProdanJPA2011}) Spectrum of $H_\xi$ from~\eqref{Eq:Model1}, sampled at many random configurations, together with statistical analysis indicating regions of localized and delocalized spectrum (see \cite{ProdanJPA2011} for full details). From left to right, the four panels correspond to disorder strengths $W=3$, 5, 8 and 11, respectively. Note the vanishing of the spectral gap, which occurs around $W=4$.}
\label{Fig:Fig1}
\end{figure} 

\vspace{0.2cm}

Fig.~\ref{Fig:Fig1} reports the spectrum of $H_\xi$ at various levels of disorder, evaluated for a large but finite sample. Let us point out that the spectral gap, still visible for $W=3$, vanishes at larger disorder strengths. Fig.~\ref{Fig:Fig1} also reports statistical analysis that detects the spatial character of the eigenvectors corresponding to the different parts of the spectrum. Specifically, at the highlighted points where the blue curves in Fig.~\ref{Fig:Fig1} touch the value 0.178, the eigenvectors are uniformly spread throughout the sample, while for the rest of the spectrum the eigenvectors have a localized character. In an influential paper \cite{AbrahamsPRL1979}, Anderson and his collaborators gave strong evidence that, in one and two space dimensions, the eigenvectors of a disordered Hamiltonian should {\it all} be localized, unless extraordinary conditions are met. This is the same as saying that two dimensional materials are barred from conducting electricity. But according to the data from Fig.~\ref{Fig:Fig1}, the two dimensional model displays spectral regions where diffusive dynamics still occurs, even when the randomly fluctuating part is five time stronger than the non-fluctuating part!

\vspace{0.2cm}

The resilience against the localization predicted by Anderson is almost always related to topological aspects of the dynamics. In Fig.~\ref{Fig:Fig2}, we report the evaluation of the pairing~\eqref{Eq:SigmaH2} for the model~\eqref{Eq:Model1}, as function of the Fermi energy. The calculation is for disorder strength $W=4$ where the spectral gap does not exist anymore, yet the pairing displays quantized values (the flat non-fluctuating values are quantized with more than 10 digits of precision!). It is clear from this data that something more than the ordinary quantization of the cocycle pairing is at work here. Indeed, the scaling analysis from panels (b) and (c) indicates that the quantization of the pairings holds for all Fermi energies except for the critical values where the de-localized spectral regions were detected in Fig.~\ref{Fig:Fig1}. So it seems that the change in the quantized value of the pairing and the existence of de-localized spectrum are interconnected. The phenomenon was observed in laboratories for real samples, {\it e.g.} in the classic integer quantum Hall experiments \cite{KlitzingPRL1980} but also in samples that are not subjected to magnetic fields \cite{ChangPRL2016,CheckelskyNatPhys2014}, as is the case for the model \eqref{Eq:Model1}.

\begin{figure}
\center
  \includegraphics[width=0.9\textwidth]{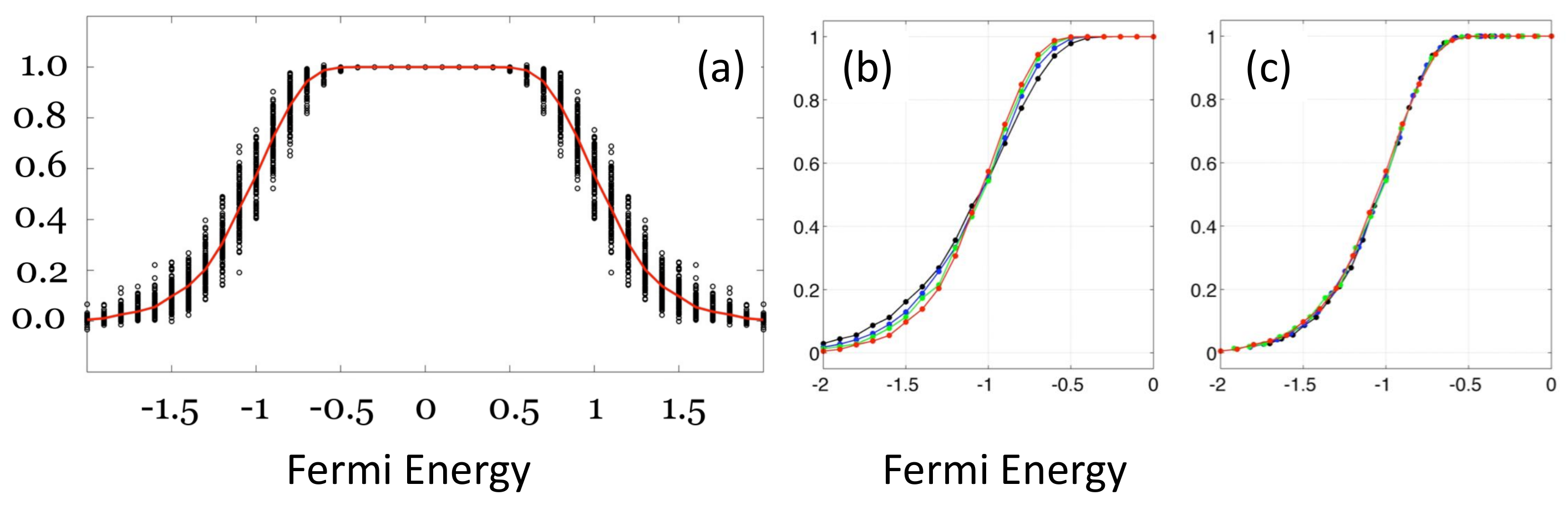}\\
  \caption{\small (Reproduced with permission from \cite{ProdanSpringer2017}) Numerical evaluation of the pairing~\eqref{Eq:SigmaH2} as function of $E_F$, using the algorithm outlined in the text and Hamiltonian~\eqref{Eq:Model1} at disorder strength $W=4$. Panel (a) shows results for individual random values $\xi$ (the dots), as well as the average over 100 disorder configurations (red line). Panel (b) overlays the results obtained for different sample sizes and panel (c) confirms that the curves from panel (b) overlap almost perfectly when the horizontal axis is rescaled, proving the existence of a critical point (see \cite{ProdanSpringer2017} for full details).}
\label{Fig:Fig2}
\end{figure} 

\vspace{0.2cm}

Perhaps a more convincing example is the following one dimensional model generated from the algebra $M_2(\CM) \otimes C\big([-\tfrac{1}{2},\tfrac{1}{2}]^\ZM\big) \rtimes \ZM$:
\begin{equation}
\begin{aligned}
H_{\xi} =  \sum_{x \in \ZM} \big \{\tfrac{1}{2} t(\tau_x \xi) & \big [\sigma_+ \otimes |x\rangle \langle x+1 | \\
& + \sigma_- \otimes |x+1\rangle \langle x | \big ] + m(\tau_x \xi) \sigma_2 \otimes |x\rangle \langle x | \big \},
\end{aligned}
\end{equation}
where $t,m: [-\tfrac{1}{2},\tfrac{1}{2}]^\ZM \to \RM$ are two continuous functions, $\tau$ is the standard shift on their domain and $\sigma$'s are Pauli's matrices. We point out that the model has the key symmetry $\sigma_3 H_\xi \sigma_3 = -H_\xi$.
We are going to solve the Schroedinger equation $H_\xi \psi = E \psi$ at $E=0$,
\begin{equation}
t_x \psi_{x-\alpha}^\alpha + i\alpha m_x \psi_x^\alpha = 0 \ \Rightarrow \  \psi_x^\alpha = \prod_{j=1}^x \left ( t_x/m_x \right ) \psi_0^\alpha,
\end{equation} 
where $\alpha = \pm 1$ is the index for the two components of $\psi$. As observed in \cite{MondragonPRL2014}, we are in a rare situation where the Lyapunov exponent quantifying the asymptotic spatial profile of the solution can be computed explicitly. Indeed
\begin{equation}
\Lambda : = \max_{\alpha = \pm} \left [ - \lim_{x \rightarrow \infty} \tfrac{1}{x} \log |\psi_x^\alpha| \right ]= \Big | \lim_{x \rightarrow \infty} \tfrac{1}{x} \sum_{n=1}^x \Big (\ln |t(\tau_x \xi)| - \ln |m(\tau_x \xi)| \big )\Big |
\end{equation}
and Birkhoff's theorem can be applied at this point. For example, if we take $\PM$ to be the normalized product measure over $[-\tfrac{1}{2},\tfrac{1}{2}]^\ZM$ and $t(\{\xi_x\}) = 1+ W_1\,  \xi_0$ as well as $m(\{\xi_x\}) = m+ W_2 \, \xi_0$, then
\begin{equation}
\Lambda = \left | \ln \left [ \frac{|2+W_1|^{1/W_1+1/2}}{\big |2-W_1|^{1/W_1-1/2}} \frac{ |2m-W_2|^{m/W_2-1/2}}{|2m+W_2|^{m/W_2+1/2}}\right ] \right |.
\end{equation}
The remarkable fact here is the existence of an entire manifold of parameters where $\Lambda$ vanishes. This critical manifold is shown in Fig.~\ref{Fig:Fig3} and, as one can see, it separates the parameter space $(W_1,W_2,m)$ into two disjoint regions.\footnote{It was verified numerically in \cite{MondragonPRL2014} that $\Lambda$ is finite for all parameters except for those belonging to the shown critical manifold.} This critical manifold was recently detected in a laboratory using cold atom experiments \cite{MeierScience2018}, which was one of the highlights in the field for year 2018.

\begin{figure}
\center
  \includegraphics[width=\textwidth]{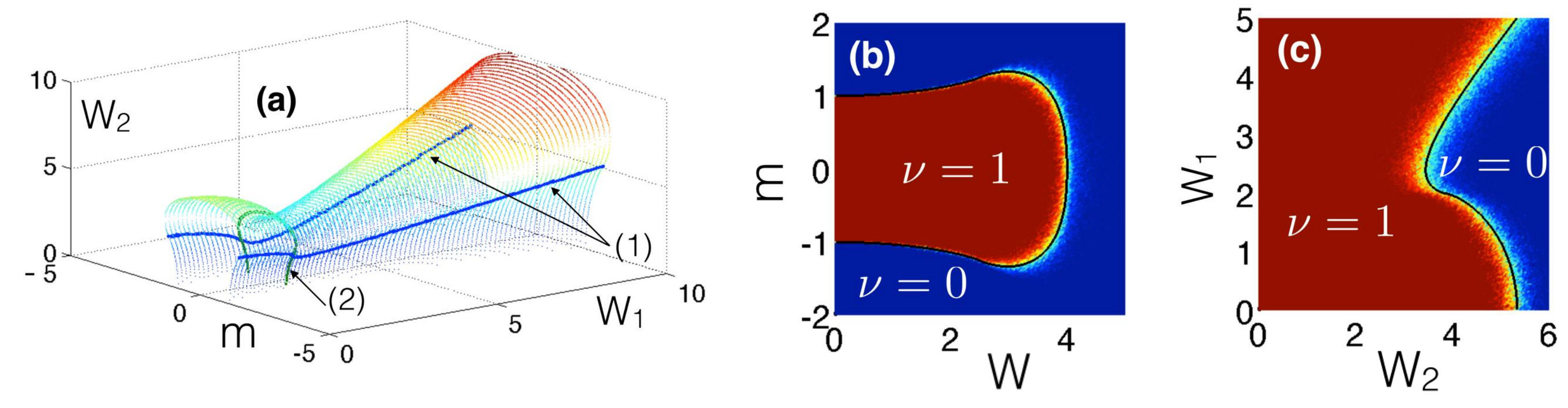}\\
  \caption{\small (Reproduced with permission from \cite{MondragonPRL2014}) (a) The critical surface ($\Lambda = 0$) in the 3-dimensional phase space $(W_1,W_2,m)$. (b,c) Maps of the pairing \eqref{Eq:Niu} for two sections of the phase space, defined by the constraints $W_2=2W_1=W$ (b) and $m=0.5$ (c). The analytic critical manifold is shown in these panels by the black line (see \cite{MondragonPRL2014} for more details).}
\label{Fig:Fig3}
\end{figure} 

\vspace{0.2cm}

Due to its particular symmetry,
\begin{equation}
\frac{H_\xi}{|H_\xi|} = \begin{pmatrix} 0 & U_\xi^\ast \\ U_\xi & 0 \end{pmatrix},
\end{equation}
and the topology of the dynamics is encoded in the unitary element $U_\xi$. When a spectral gap is open around $E=0$,  $U_\xi$ is the left regular representation of a unitary element $u \in C\big([-\tfrac{1}{2},\tfrac{1}{2}]^\ZM\big) \rtimes \ZM$ and we can evaluate the following odd pairing 
\begin{equation}\label{Eq:Niu}
\nu = \imath \, \Tt\big (u \partial u^{-1}\big ).
\end{equation}
As in the previous example, the numerical algorithm is not bound to the spectrally gapped cases and maps of this pairing are shown in Figs.~\ref{Fig:Fig3}(b,c) for entire sections of the parameter space. They reveal again that the pairing remains quantized everywhere except at the critical manifold, where an abrupt change between the quantized values 1 and 0 is observed. Thus, we see again a quantization of the pairing that holds beyond the $C^\ast$-umbrella and a strong connection between the jumps in the quantized values and the emergence of delocalized spectrum.

\vspace{0.2cm}

The connection between the phenomena we just discussed and Noncommutative Geometry was established by Bellissard and his collaborators \cite{BellissardJMP1994}. Let us denote the Fermi projection as $P_\xi$ to emphasize its dependence on the random variable and note that the second moment $\sum_{x \in \ZM^2} |x|^2 |\langle x | P_\xi |0\rangle |^2$ is a measure of the spatial spread of the kernel of this operator. If finite, it gives a length scale for the dynamics at energies near the Fermi level. For a non-disordered system, this quantity diverges as soon as the Fermi level dives into the spectrum, but so it does for a disordered system for $\xi$ in a set of non-zero measure. The quantity that remains finite is the average \cite{AizenmanIM2006}[Th.~1.1]
\begin{equation}\label{Eq:Cond1}
l_F^2 = \int_{\Xi} {\rm d}\PM(\xi) \, \sum_{x \in \ZM^2} |x|^2 |\langle x | P_\xi |0\rangle |^2 < \infty,
\end{equation}
which in the $C^\ast$-language is simply $\Tt(|\nabla p|^2)$. What seems to happen in the simulations we just showed is that the parings of the cocycles remain quantized not only over the smooth sub-algebra of $C(\Xi) \rtimes \ZM^2$, but also over a much larger Sobolev space. As we shall see in subsection~\ref{Sec:Sobolev}, these Sobolev spaces can be regarded as the natural domains of the cyclic cococycles. But this, by no means, guaranties the observed quantization. For the latter, guided by Alaine Connes' quantized calculus \cite{ConnesBook}[Ch.~4], Bellissard and collaborators established in \cite{BellissardJMP1994} and index theorem
\begin{equation}\label{Eq:Ind1}
{\rm Ind} \, F_\xi = 2\pi \imath \sum_{\lambda \in \Ss_2} (-1)^\lambda \Tt \big(p_F(\partial_{\lambda_1}p_F) (\partial_{\lambda_2}p_F)\big )
\end{equation}
with the equality holding $\PM$-almost surely. Above, $F_\xi$ comes from a Dirac Fredholm module and, in a nutshell, condition~\eqref{Eq:Cond1} assures the $(2+\epsilon)$-summability of the Fredholm module, but only $\PM$-almost surely, and the right side of \eqref{Eq:Ind1} is the evaluation of the Connes-Chern character, after the observation that the index is $\PM$-almost surely independent of $\xi$, hence it can be averaged out. The index theorem explains the phenomenon seen in our first example because a change in the quantized value of the pairing can only happen when $l_F \to \infty$.

\vspace{0.2cm}

The evaluation of the Connes-Chern character in \cite{BellissardJMP1994} shunts the Connes-Moscovici local index formula \cite{ConnesGAFA1995} by using a geometric identity due to Alain Connes \cite{ConnesIHESPM1985}. Many people have found inspiration in different parts of this work by Alain that is celebrated in this volume. As strange as it may sound, Lemma~2 from \cite{ConnesIHESPM1985}[Sec.~9], where this geometric identity is proved, had a pivotal influence on my research. Indeed, sometime in 2012, my student Bryan Leung and myself just completed a calculation of the magneto-electric $\alpha_{EM}$ coefficient in three space dimensions and the result showed that, under a cyclic deformation of a model in some parameter space, the variation of $\alpha_{EM}$ per cycle is given by \cite{LeungJPA2013}
\begin{equation}\label{Eq:AlphaEM}
\Delta \alpha_{EM} = \Lambda_4 \sum_{\lambda \in \Ss_4} (-1)^\lambda \Tt \Big(p_F \prod_{j=1}^4 \partial_{\lambda_j} p_F \Big )
\end{equation}
and the right hand side is again a pairing of a cyclic cocycle. At that time, we were not aware of any index theorem that can be related to this pairing, although hints appeared in the literature ({\it e.g.} in a preprint later published in \cite{CareyAMS2014}[Ch.~5]).\footnote{The index theorems, in their explicit form, were established later by Chris Bourne in his thesis \cite{BourneThesis2015}, which also deals with the Clifford indices.} However, in order to push the index theorem into the Sobolev domain, a generalization of Alain's geometric identity was needed. And together with Jean Bellissard and Bryan Leung, we found the following generalization \cite{ProdanJPA2013}: 
\begin{equation}\label{Eq:Id1}
\int_{\RM^d}{\rm d}  w \ \mathrm{tr}_\gamma \Big (\gamma_0 \prod_{i=1}^{d}\gamma \cdot \left(\widehat{y_i-w} -\widehat{y_{i+1}-w} \right )\Big ) 
 =\Lambda_d \sum_{\lambda} (-1)^\lambda \prod_{i=1}^{d} (y_i)_{\lambda_i},
\end{equation}
where $\gamma$'s are the Clifford matrices for even dimension $d$, $y_i \in \RM^d$ and $\hat x = x/|x|$. Many readers will immediately see the connection between the left side of this identity and the Connes-Chern characters derived from a Dirac Fredholm module either over $\RM^d$ or $\ZM^d$, and between the right side and the pairing~\eqref{Eq:AlphaEM}. An equivalent identity for odd dimensions was also derived in \cite{ProdanJFA2016}. The index theorems derived from these identities proved half of the conjectured table of strong topological phases \cite{SchnyderPRB2008,QiPRB2008,Kitaev2009,RyuNJP2010} ({\it i.e.} those labeled by $\ZM$), specifically, it confirmed that all the entries there are thermodynamically distinct because, when passing from one phase to another, one necessarily has to cross an experimentally detectable localization-delocalization phase transition. 

\begin{figure}
\center
  \includegraphics[width=\textwidth]{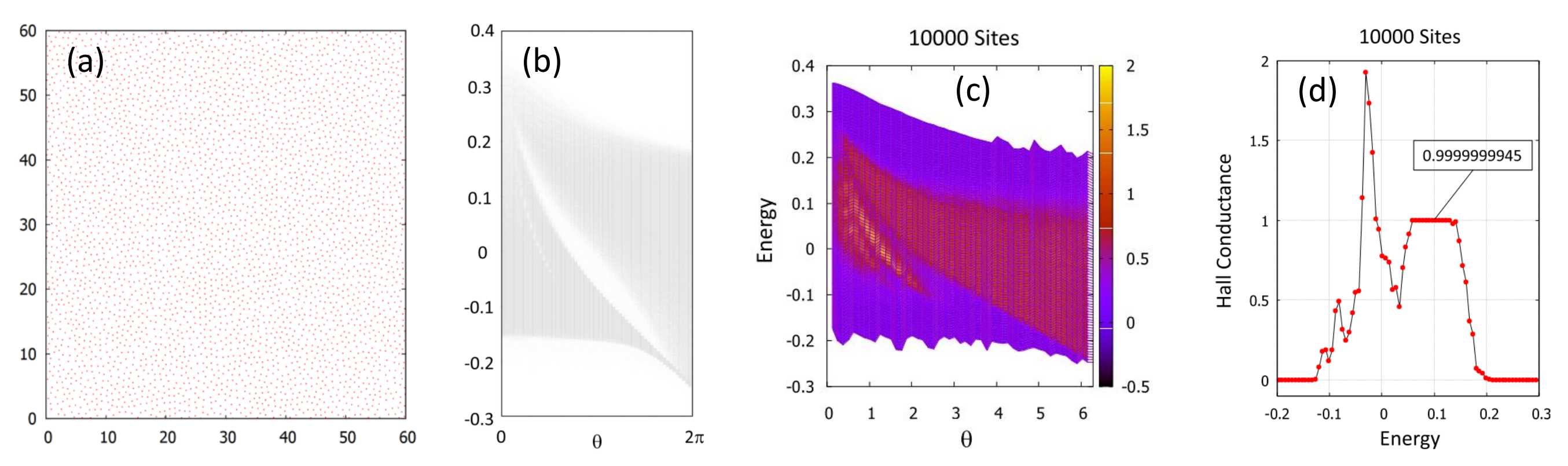}\\
  \caption{\small (Reproduced with permission from \cite{BourneJPA2018}) (a) Section of an amorphous Delone set containing 1000 sites. (b) The spectrum of the Hamiltonian~\eqref{Eq:HAmor} over the pattern from panel (a), rendered as function of the magnetic flux $\theta$. (c) Map of the pairing~\eqref{Eq:SigmaH2} in the plane of the Fermi energy and magnetic flux, as evaluated for the Hamiltonian~\eqref{Eq:HAmor}. (d) A section of the map from panel (c) cut at $\theta=1.5$, confirming a high degree of quantization for the pairing.}
\label{Fig:Fig4}
\end{figure} 

\vspace{0.2cm}

The formalism faced a challenge in 2017, when two research groups observed similar topological phase transitions in amorphous metamaterials \cite{AgarwalaPRL2017,MitchellNatPhys2018}. These models can no longer be generated from a crossed product algebra and the whole program, with both its numerical and theoretical components, had to be upgraded to groupoid algebras (see section~\ref{Sec:GAlg}). Once it was understood that the differential calculus comes from an $\RM^d$-action, the numerical algorithm was implemented almost without modifications and, in Fig.~\ref{Fig:Fig4}, we illustrate the outcome of a simulation \cite{BourneJPA2018}. It was performed for the amorphous pattern $\Ll$ shown in panel (a) subjected to a perpendicular uniform magnetic field. The dynamics was generated by the Hamiltonian
\begin{equation}\label{Eq:HAmor}
H_\Ll : \ell^2(\Ll) \to \ell^2(\Ll), \quad H_\Ll = \sum_{x,x' \in \Ll} e^{\imath \theta x \wedge x'} e^{-3|x -x'|} |x\rangle \langle x'|
\end{equation}
and panel (b) reports the spectrum of this Hamiltonian as function of the magnetic flux $\theta$. The light regions seen in this plot are regions of low spectral density, as opposed to spectral gaps, hence we are again in a gap-less  situation yet the pairing~\eqref{Eq:SigmaH2}, reported in panels (c) and (d), displays again quantized values. The index theorems that are behind this phenomenon were derived in \cite{BourneJPA2018,BourneAHP2020} using the local index formula \cite{ConnesGAFA1995}, as extended to the semifinite noncommutative geometries in \cite{CareyAMS2014}. In the present work, we undertook the challenge of proving the index theorems via a proper generalization of the identity~\eqref{Eq:Id1} (see Lemma~\ref{Lem:Id2}).

\vspace{0.2cm}

We are now coming to the greatest challenge of all, which is  extending the formalism to correlated many-fermion systems. The present paper is geared towards this challenge and reports, in great part, on a contribution that was recently completed by my collaborator Bram Mesland and myself \cite{MeslandArxiv2021}, luckily just in time for this celebratory volume. During this work, we first stepped back and re-assessed the groupoid algebras introduced by Bellissard and Kellendonk in the single fermion picture, \cite{Bellissard1986,Bellissard1995,KellendonkRMP95}, already mentioned above. These algebras were often referred to as the algebras of physical observables, but we took a different point of view and replaced the latter with the algebra $\KM\big (\ell^2(\Ll)\big )$ of compact operators. Then the Bellissard-Kellendonk groupoid algebra canonically associated with the pattern $\Ll$ can be seen as formalizing physically relevant derivations over the algebra $\KM\big (\ell^2(\Ll)\big )$. This apparently minor conceptual shift played an important role because it enabled us to identify more clearly the task when the context is upgraded to that of many-fermion systems: Repeat the program completed by Bellissard and Kellendonk, this time looking for algebras of derivations while replacing $\KM\big (\ell^2(\Ll)\big )$ by ${\rm CAR}\big (\ell^2(\Ll)\big )$, the algebra of canonical anti-commutation relations over the Hilbert space of the lattice. My work with Mesland was restricted to derivations over  ${\rm GICAR}\big (\ell^2(\Ll)\big )$, the sub-algebra of gauge invariant elements, and the reader can find our conclusion reproduced here in subsection~\ref{Sec:MConclusions}. As I will argue in this work, we are now in a position where we can generate interesting and relevant states over this algebra of physical derivations, as well as write down relevant cyclic cocycles and, more importantly, jump-start the numerical simulations of many-body transport coefficients. We are eagerly looking forward to seeing the outcomes of these simulations as well as to working on index theorems for the many-body context.

\vspace{0.2cm}

Unfortunately there was no room or time to survey all the work done in connection with Noncommutative Geometry and materials science, and I apologize to the reader for mostly showcasing works in which I was directly involved. My  primary goal for this somewhat un-conventional introduction was to communicate to a broader audience certain aspects that I believe are really valued by the materials science community. As one can see, I did not focus at all on the classification of states, primarily  because experimental scientists are more interested in learning about observable effects and much less in abstract labels and, secondly, because with meta-materials the possibilities are endless and cannot be captured by a table. Indeed, the algebra of physical observables can be engineered to be any AF-algebra. Lastly, this introduction and the paper itself left out half of the story, specifically, the so called bulk-boundary and bulk-defect correspondences. They refer to extremely robust spectral and dynamical effects that are observed along the interface between two samples with distinct bulk topological invariants (see \cite{ProdanJPA2021} for a unifying reformulation using groupoid algebras). Here again, certain index theorems pushed to Sobolev domains play an important role in establishing the dynamical characteristics  of these interfaces \cite{ProdanSpringer2016}[Sec.~6.6].

\vspace{0.2cm} 

The paper is organized as it follows. Section 2 covers classical concepts related to Delone sets and their many-body covers introduced in \cite{MeslandArxiv2021}. In particular, we review the Bellissard-Kellendonk groupoids canonically associated to a pattern and their natural generalization to the covers we just mentioned. Section 3 investigates the groupoid $C^\ast$-algebras and the bi-equivariant subalgebras relative to a specific 2-action of the symmetric group introduced in \cite{MeslandArxiv2021}. Additionally, this section introduces a differential calculus over the many-body groupoids and writes down cyclic cocycles that are relevant for the transport experiments. Section 4 introduces the algebra of physical derivations over the ${\rm GICAR}$-algebra and summarizes the conclusions from \cite{MeslandArxiv2021}, connecting this algebra and the many-body groupoid algebras introduced in section 3. The paper concludes with a new proof of the index theorems for the pairings of the cyclic cocycles mentioned above for the case $N=1$, based on a generalization of the identity~\eqref{Eq:Id1} that works on generic Delone sets. These can be found in section 5.

\section{Spaces of Patterns and Associated Concepts}
\label{SubSec:Patterns}

This section is about the subsets of the abelian topological group $\RM^d$, their interactions with the group itself and the canonical algebraic and topological structures that emerge from these interactions. Given a Delone point set $\Ll_0$, one considers the topological space $\Omega_{\Ll_0}$ supplied by the closure of its $\RM^d$-orbit in a specific topology. Everything discussed in this section is built on the resulting topological dynamical system $(\Omega_{\Ll_0},\RM^d)$. Specifically, the restriction of its transformation groupoid to a so called canonical transversal supplies the Bellissard-Kellendonk groupoid of the pattern \cite{Bellissard1986,Bellissard1995,KellendonkRMP95}. Then blow-ups of this groupoid, induced by so called many-body covers of $\Omega_{\Ll_0}$, generate canonical many-body groupoids that will be later connected with the dynamics of a fermion gas populating the lattice $\Ll_0$.

\subsection{The metric space of closed $\RM^d$ subsets} The metric space $(\Kk(\RM^d),{\rm d}_{\rm H})$ of compact subsets of $\RM^d$ equipped with the Hausdorff metric is well known and extensively used in pattern and image analysis (see {\it e.g.} \cite{BarnselyBook}). However, when dealing with extended physical systems, as in the present case, one needs to consider the larger space of closed subsets of $\RM^d$. One way to deal with these subsets is to promote them to compact subsets of the one-point compactification $\alpha \RM^d$. We recall that, since $\RM^d$ is proper, $\alpha \RM^d$ is metrizable in an essentially canonical way. Hence, one can put forward the following:

\begin{definition}[\cite{ForrestAMS2002,LenzTheta2003}] The space of patterns consists of the set $\Cc(\RM^d)$ of closed sub-sets of $\RM^d$ endowed with the metric
\begin{equation}\label{Eq:PMetric}
{\rm D}(\Ll,\Ll') =\overline{\rm d}_{\rm H}(\bar \Ll,\bar \Ll'), \quad \Ll,\Ll' \in \Cc(\RM^d),
\end{equation}
where $\bar \Ll$ and $\bar \Ll'$ are the closures of the embeddings of $\Ll$ and $\Ll'$ in the one-point compactification of $\RM^d$ and $\bar{\rm d}_{\rm H}$ is the Hausdorff metric on this compactification.
\end{definition}

The following statements supply the characterization we need for this space:

\begin{proposition}[\cite{LenzTheta2003}] The space of patterns is bounded, compact and complete. Furthermore, there is a continuous action of $\RM^d$ by translations, that is, a homomorphism between topological groups,
\begin{equation}\label{Eq:RDGroupAction}
\mathfrak t : \RM^d \rightarrow {\rm Homeo}\big ( \Cc(\RM^d),D \big ), \quad \mathfrak t_{x}(\Lambda) = \Lambda - x.
\end{equation} 
\end{proposition}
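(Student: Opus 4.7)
The strategy will be to realize $(\Cc(\RM^d),D)$ as a closed subspace of the hyperspace $\Kk(\alpha\RM^d)$ of non-empty compact subsets of the one-point compactification, and to leverage the classical compactness of that hyperspace. I will first fix a compatible metric on $\alpha\RM^d$ (such a metric exists and is unique up to uniform equivalence, since $\alpha\RM^d$ is compact metrizable) and invoke the standard theorem that $(\Kk(\alpha\RM^d),\bar d_{\rm H})$ is then a compact, bounded, complete metric space. By the very definition of $D$, the assignment $\iota:\Ll\mapsto\bar\Ll$ is an isometric embedding of $(\Cc(\RM^d),D)$ into this hyperspace, from which boundedness is immediate.

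The crux of the topological part will be to verify that $\iota(\Cc(\RM^d))$ is a closed subset of $\Kk(\alpha\RM^d)$; compactness and completeness of $\Cc(\RM^d)$ will then transfer automatically. The plan is to characterize the image intrinsically as the set of compact $K\subset\alpha\RM^d$ compatible with the chosen ``closure'' convention relative to the point at infinity, and to show this characterization survives Hausdorff limits. Given $\bar\Ll_n\to K$, the natural candidate preimage is $\Ll:=K\cap\RM^d$, which is closed in $\RM^d$ because $\RM^d$ is open in $\alpha\RM^d$; the remaining task is to check $\bar\Ll=K$. One inclusion is immediate, while the other reduces to proving that $\infty\in K$ forces $\Ll$ to accumulate at infinity. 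I would extract this from the Hausdorff convergence by choosing, inside each shrinking neighborhood of $\infty$, a point of some $\bar\Ll_n$ and passing to a diagonal subsequence, whose $\RM^d$-projection sits in $\Ll$ and escapes every compact set.

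For the second assertion, the plan is to extend each translation $\mathfrak t_x$ to a homeomorphism $\tilde{\mathfrak t}_x$ of $\alpha\RM^d$ fixing the point at infinity, verify joint continuity of $\tilde{\mathfrak t}:\RM^d\times\alpha\RM^d\to\alpha\RM^d$ (the only non-routine case is at $(x,\infty)$, handled via $|y-x_n|\to\infty$ whenever $|y|\to\infty$ and $\{x_n\}$ remains bounded), and then transport the action to the hyperspace. Joint continuity of $(x,K)\mapsto\tilde{\mathfrak t}_x(K)$ follows because uniform continuity of $\tilde{\mathfrak t}$ on the compact space $\alpha\RM^d$ controls Hausdorff distances, allowing a standard $\varepsilon/2$ argument. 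I will conclude by restricting to the invariant subspace $\iota(\Cc(\RM^d))$ and pulling back through $\iota$. The main obstacle throughout will be the closedness of $\iota(\Cc(\RM^d))$ in $\Kk(\alpha\RM^d)$: this step genuinely depends on the interplay between the point at infinity and the Hausdorff limit, whereas the remaining ingredients reduce to standard hyperspace theory and elementary continuity in the one-point compactification.
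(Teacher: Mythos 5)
The paper does not actually prove this proposition --- it is quoted from Lenz--Stollmann \cite{LenzTheta2003} --- and your overall strategy (embed $(\Cc(\RM^d),D)$ isometrically into the hyperspace $\Kk(\alpha\RM^d)$, transfer compactness and completeness from a closed image, and push the translation action through the one-point compactification using uniform continuity) is exactly the standard route. The hyperspace reduction, the boundedness claim, and the treatment of the $\RM^d$-action are all sound.

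The gap sits precisely at the step you single out as the crux. If $\bar\Ll$ literally means the closure of $\Ll$ in $\alpha\RM^d$, then the image of $\iota$ is \emph{not} closed, and the statement you plan to prove there is false. Take $\Ll_n=\{n e_1\}$: each $\bar\Ll_n=\{n e_1\}$ is already compact, $\bar\Ll_n\to\{\infty\}$ in the Hausdorff metric, but $\{\infty\}$ is not the closure of any closed subset of $\RM^d$ (the only candidate is $\Ll=K\cap\RM^d=\emptyset$, whose closure is $\emptyset\neq\{\infty\}$). Your proposed diagonal argument --- that $\infty\in K$ forces $K\cap\RM^d$ to accumulate at infinity --- therefore cannot succeed: the points of $\bar\Ll_n$ approaching $\infty$ need only be Hausdorff-close to $\infty$ itself, not to any point of $K\cap\RM^d$. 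Worse, under this reading $(\Cc(\RM^d),D)$ is not even complete, since $\{n e_1\}$ is $D$-Cauchy with no limit, so the compactness you are trying to prove would fail. The repair is the convention: the embedding must be $\Ll\mapsto\Ll\cup\{\infty\}$ (the point at infinity is always adjoined, so that $\emptyset\mapsto\{\infty\}$), which is what \cite{LenzTheta2003} uses and what the paper's phrase ``closure of the embedding'' must be read as. With that convention the image is exactly $\{K\in\Kk(\alpha\RM^d):\infty\in K\}$, which is closed for the trivial reason that $d(\infty,K)\le \bar{\rm d}_{\rm H}(K_n,K)$ whenever $\infty\in K_n$; the inverse is $K\mapsto K\cap\RM^d$, with $(K\cap\RM^d)\cup\{\infty\}=K$ holding set-theoretically so that no accumulation argument is needed; and your troublesome sequence converges to the empty pattern. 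So the architecture of your proof is correct, but the one step you flagged as the genuine content must be replaced by fixing the convention; as written it would fail.
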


\subsection{Delone sets and and their many-body covers} We will exclusively work with discrete subsets. Specifically:

\begin{definition} Let $\Ll \subset \RM^d$ be discrete and fix $0<r<R \in \RM$. Then:
\begin{enumerate}[\rm \ \ (1)] 
\item $\Ll$ is $r$-uniformly discrete if $|B_r(x) \cap \Ll| \leq 1$ for all $x \in \RM^d$.
\item $\Ll$ is $R$-relatively dense if $|B_R(x) \cap \Ll | \geq 1$ for all $x \in \RM^d$.
\end{enumerate}
An $r$-uniform discrete and $R$-relatively dense set $\Ll$ is called an $(r,R)$-Delone set.
\end{definition}

\begin{remark}{\rm Throughout, if $S$ is a subset of a measure space, then $|S|$ denotes its measure. If $S$ is discrete, then the measure is the counting measure and $|S|$ is the cardinal of $S$. Also, $B_s(x)$ denotes the open ball in $\RM^d$ of radius $s$ and centered at $x$.
}$\Diamond$
\end{remark}

\begin{proposition}[\cite{Bellissard2000}]\label{Pro:rRDelone} The set ${\rm Del}^r_R(\RM^d)$ of $(r,R)$-Delone sets in $\RM^d$ is a compact subset of $(\Cc(\RM^d),D)$.
\end{proposition}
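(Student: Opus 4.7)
The natural strategy is to leverage the compactness of $(\Cc(\RM^d),D)$ already established: I would prove that ${\rm Del}^r_R(\RM^d)$ is a closed subset, and closedness of a subset of a compact metrizable space automatically yields the desired compactness. Closedness will be checked sequentially, so the task reduces to: if $\Ll_n \in {\rm Del}^r_R(\RM^d)$ and $\Ll_n \to \Ll$ in $D$, then $\Ll \in {\rm Del}^r_R(\RM^d)$.

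My first step would be to re-express $D$-convergence in a usable pointwise form. Since $D$ is the Hausdorff metric of closures in the one-point compactification $\alpha\RM^d$, and since the $\alpha\RM^d$-topology agrees with the Euclidean topology on bounded regions, $\Ll_n \to \Ll$ is equivalent to the Fell/Chabauty characterisation: (i) every $y \in \Ll$ is the Euclidean limit of some sequence $y_n \in \Ll_n$; and (ii) whenever a subsequence $y_{n_k} \in \Ll_{n_k}$ converges in $\RM^d$ to some $y$, then $y \in \Ll$. Any points of $\Ll_n$ escaping to infinity are absorbed by the compactifying point and do not interfere with this local description.

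Given (i) and (ii), both Delone conditions pass to the limit. For $r$-uniform discreteness: if $\Ll$ contained distinct $x, y$ with $|x-y|<r$, (i) yields $x_n, y_n \in \Ll_n$ with $x_n \to x$ and $y_n \to y$; for large $n$ these are distinct and still satisfy $|x_n - y_n| < r$, producing two elements of $\Ll_n$ in $B_r(x_n)$ and contradicting the hypothesis. For $R$-relative density: fix $z \in \RM^d$ and use the density of each $\Ll_n$ to choose $y_n \in B_R(z)\cap \Ll_n$; this bounded sequence has a subsequence converging to $y \in \overline{B_R(z)}$, and (ii) places $y \in \Ll$, so $\Ll \cap \overline{B_R(z)} \neq \emptyset$. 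This is exactly the standard Delone relative-density condition $d(\cdot,\Ll)\leq R$; the minor discrepancy between open and closed balls is conventional and can be absorbed by an arbitrarily small enlargement of $R$.

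The main obstacle is the first step, namely producing a clean dictionary between Hausdorff convergence of closures in $\alpha\RM^d$ and the Chabauty-type properties (i)--(ii); once this translation is executed carefully (the point being that no information about bounded regions is lost when one passes to the compactification and then back), the Delone conditions themselves are stable by elementary compactness arguments in bounded pieces of $\RM^d$.
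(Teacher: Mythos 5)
The paper itself offers no proof of this proposition; it is quoted from Bellissard--Herrmann--Zarrouati \cite{Bellissard2000}, so there is no in-paper argument to compare against. Your strategy --- exhibit ${\rm Del}^r_R(\RM^d)$ as a closed subset of the compact metric space $(\Cc(\RM^d),D)$, after translating $D$-convergence into the Fell/Chabauty conditions (i)--(ii) --- is exactly the standard route taken in that reference and in \cite{LenzTheta2003}, and the two limit arguments are sound in structure.

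Two points of precision, one of which is more than cosmetic. First, with the paper's definition of $r$-uniform discreteness ($|B_r(x)\cap\Ll|\leq 1$ for \emph{every} center $x$), the correct negation is the existence of distinct $x,y\in\Ll$ with $|x-y|<2r$, not $<r$; your argument survives verbatim if you instead fix the offending open ball $B_r(z)$ containing two points of $\Ll$ and push both approximating sequences into that same open ball for large $n$. Second, the open-versus-closed ball discrepancy in the relative density condition is not merely conventional: with the paper's literal definition ($B_R(x)$ open), the set is genuinely not closed. For instance, in $d=1$ the sets $\Ll_n=(2-\tfrac{2}{n})\ZM$ are $(\tfrac12,1)$-Delone for $n\geq 2$ and converge in $D$ to $2\ZM$, which violates $|B_1(1)\cap 2\ZM|\geq 1$; so with open balls the proposition as stated would be false. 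The result proved in \cite{Bellissard2000} uses closed balls $\bar B_R(x)$ for relative density, and under that convention your own argument already closes the proof cleanly: the subsequential limit of the $y_{n_k}$ lands in the compact set $\bar B_R(z)$ and condition (ii) places it in $\Ll$, with no enlargement of $R$ required. You should therefore adopt the closed-ball convention explicitly rather than absorb the discrepancy into an ``arbitrarily small enlargement of $R$,'' since that enlargement changes the constant and hence the set whose compactness is being asserted.
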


The Delone space  ${\rm Del}^r_R(\RM^d)$ accepts natural covers, which in \cite{MeslandArxiv2021} were dubbed many-body covers for reasons that will become apparent later. We describe them here.

\begin{proposition}[\cite{MeslandArxiv2021}] Fix $r <R$ and consider the set $\widehat {\rm Del}_{(r,R)}^{(N)}(\RM^d)$ of triples $(\chi_V,V,\Ll)$, where $N \in \NM^\times$ and:
\begin{itemize}
\item $\Ll \in {\rm Del}^r_R(\RM^d)$;
\item $V$ is a compact subset of $\Ll$ with $|V|=N$;
\item $\chi_V : \{1,\ldots,N\} \to V$ a bijection.  
\end{itemize}
Then $\widehat {\rm Del}_{(r,R)}^{(N)}(\RM^d)$ can be topologized such that
\begin{equation}
(\chi_V,V,\Ll) \mapsto {\mathfrak a}_N(\chi_V,V,\Ll) : = \Ll 
\end{equation}
becomes a connected (infinite) cover of ${\rm Del}^r_R(\RM^d)$.
\end{proposition}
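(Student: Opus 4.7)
The plan is to realize $\widehat{\rm Del}^{(N)}_{(r,R)}(\RM^d)$ as a subspace of the product $(\RM^d)^N \times {\rm Del}^r_R(\RM^d)$ via the injection
\[
(\chi_V, V, \Ll) \;\longmapsto\; \bigl((\chi_V(1), \ldots, \chi_V(N)),\, \Ll\bigr),
\]
whose image is $\{((x_1,\dots,x_N),\Ll):\Ll \in {\rm Del}^r_R(\RM^d),\ x_i \in \Ll,\ x_i \neq x_j \text{ for } i \neq j\}$, and to equip it with the subspace topology. Under this identification, $\mathfrak{a}_N$ is just the second projection, hence continuous, and surjective because every Delone set is $R$-relatively dense and thus infinite.

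For the covering property, I would argue local triviality using the $r$-uniform discreteness. Fix a point $((x_1^0,\dots,x_N^0), \Ll_0)$ of the total space. Since the $x_i^0$ are distinct elements of an $r$-discrete set, the balls $B_{r/3}(x_i^0)$ are pairwise disjoint. The characterization of convergence in $({\rm Del}^r_R(\RM^d), D)$ (see \cite{LenzTheta2003}) then yields an open neighborhood $U \ni \Ll_0$ such that every $\Ll \in U$ meets each $B_{r/3}(x_i^0)$ in exactly one point, which I call $x_i(\Ll)$. Writing $W := \prod_i B_{r/3}(x_i^0)$, the intersection $(W \times U) \cap \widehat{\rm Del}^{(N)}_{(r,R)}(\RM^d)$ is precisely the graph of the continuous section $\Ll \mapsto (x_i(\Ll))_i$ and projects homeomorphically onto $U$ under $\mathfrak{a}_N$. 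Running this over the various ordered $N$-tuples in $\Ll_0$ gives, on a sufficiently small common $U$, a disjoint family of open sheets each mapping homeomorphically to $U$, which is the required local trivialization.

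The connectedness argument I would split into two ingredients. First, the $\RM^d$-action lifts continuously to the cover via $\mathfrak{t}_x \cdot (\chi_V, V, \Ll) = (\chi_V - x, V - x, \Ll - x)$, producing paths in $\widehat{\rm Del}^{(N)}_{(r,R)}(\RM^d)$ whose image under $\mathfrak{a}_N$ is the orbit in ${\rm Del}^r_R(\RM^d)$. Second, the path-lifting property coming from the local homeomorphism lets one lift any continuous path in ${\rm Del}^r_R(\RM^d)$ — which itself is path-connected under generous enough parameters, as one checks by explicit deformations that move points locally and exchange them with the point at infinity in $\alpha\RM^d$ — to a path in the cover. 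Composing such lifts with small deformations of the pattern that continuously interchange the labels of the marked $N$-tuple produces a path between any two prescribed points of the cover.

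The main obstacle I expect is this last step: showing that the monodromy of loops in ${\rm Del}^r_R(\RM^d)$, together with the lifted $\RM^d$-action, acts transitively on every fiber. The local-triviality part is a direct consequence of $r$-uniform discreteness and the pattern metric and causes no real difficulty. Rearrangement of the $N$ labels, however, requires a braid-like construction whose feasibility depends on having enough ``room'' within the Delone parameters $r < R$ to perform the necessary local motions — moving one marked point past another, or feeding an extra point in from infinity — without violating either the discreteness or the relative-density bound.
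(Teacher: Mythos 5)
The paper offers no proof of this proposition---it is imported verbatim from \cite{MeslandArxiv2021}, with only the remark that for $N=1$ the topology is the subspace topology from $\RM^d\times{\rm Del}^r_R(\RM^d)$ and that the order cover $(\chi_V,V,\Ll)\mapsto(V,\Ll)$ is an $N!$-fold cover---so your proposal can only be judged on its own terms. Your choice of topology is the natural one and consistent with those remarks, and the continuity and surjectivity of $\mathfrak a_N$ together with the construction of a single sheet from $r$-uniform discreteness are sound. The first genuine gap is the ``sufficiently small common $U$'': it does not exist. A basic neighborhood of $\Ll_0$ in the metric ${\rm D}$ of Eq.~\eqref{Eq:PMetric} constrains the pattern only on a bounded region around the origin (Hausdorff closeness in $\alpha\RM^d$ gives no control near the point at infinity), whereas the fiber over $\Ll_0$ contains ordered $N$-tuples arbitrarily far from the origin. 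Hence for any fixed open $U\ni\Ll_0$ there are tuples whose putative sheet fails to surject onto $U$, and $\mathfrak a_N^{-1}(U)$ contains points belonging to none of your sheets; the even-covering property fails for the subspace topology. What your argument does establish is that $\mathfrak a_N$ is a surjective local homeomorphism with countable discrete fibers, i.e.\ an \'etale surjection---which is all that the blow-up construction of Proposition~\ref{Pro:GMor} actually consumes---but if ``cover'' is to be read literally you need either a different mechanism for even covering or a different topology, and you should say which.

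The second gap is connectedness, which is half the content of the statement and is not proved. You reduce it to transitivity of the monodromy together with path-connectedness of ${\rm Del}^r_R(\RM^d)$, and you explicitly defer the braid-like local motions that would establish transitivity; you also hedge path-connectedness of the base with ``under generous enough parameters,'' whereas the proposition assumes only $r<R$. Moreover, global path lifting is automatic only for genuine covering maps: for a bare local homeomorphism with non-compact fibers, lifts of paths need not extend over the whole interval, so gap one undermines the main tool you invoke here. A complete argument requires an explicit deformation---for instance, dragging a marked site along a path while preserving $r$-uniform discreteness and $R$-relative density, using $r<R$ quantitatively---and that construction, which you correctly identify as the crux, is precisely what is missing.
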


While we will not discuss the topology of $\widehat {\rm Del}_{(r,R)}^{(N)}(\RM^d)$ explicitly, for orientation, we want to mention that this topology is such that the map $(\chi_V,V,\Ll) \to (V,\Ll)$ is a $N!$ cover if the set of tuples $(V,\Ll)$ is equipped with the topology inherited from $\Kk(\RM^d) \times {\rm Del}_R^r(\RM^d)$. We call this map the order cover while the map ${\mathfrak a}_N$ is referred to as the $N$-body cover, for reasons that will become apparent in the following sections. The hat in $\widehat {\rm Del}_{(r,R)}^{(N)}(\RM^d)$ and other places is there to indicate that we are dealing with a cover of the familiar space ${\rm Del}^r_R(\RM^d)$.

\begin{remark}{\rm The case $N=1$ is special because $\chi_V$ is trivial and can be discarded. In this case, the topology on the tuples $(x,\Ll)$, $x \in \Ll$, is the one inherited from the embedding in $\RM^d \times {\rm Del}_R^r(\RM^d)$.
}$\Diamond$
\end{remark}

The points of $\widehat {\rm Del}_{(r,R)}^{(N)}(\RM^d)$ will be denoted by greek letters like $\xi$, $\zeta$, etc., and the triple corresponding to $\xi$ will take the form $(\chi_\xi,V_\xi,\Ll_\xi)$. The following statement assures us that the $\RM^d$ action extends over the many-body covers:

\begin{proposition}\label{Pro:RDAction1} The group action of $\RM^d$ from Eq.~\eqref{Eq:RDGroupAction} lifts to an action on $\widehat {\rm Del}_{(r,R)}^{(N)}(\RM^d)$ by homeomorphisms
\begin{equation}
\hat {\mathfrak t}_x(\xi) = \big(\mathfrak t_x \circ \chi_\xi,\mathfrak t_x(V_\xi),\mathfrak t_x(\Ll_\xi)\big ).
\end{equation}
\end{proposition}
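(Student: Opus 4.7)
The plan is to verify the four standard requirements for a continuous group action on a topological space: (i) the proposed formula produces a valid point in $\widehat{\rm Del}_{(r,R)}^{(N)}(\RM^d)$, (ii) the assignment $x\mapsto \hat{\mathfrak t}_x$ is a group homomorphism, (iii) each $\hat{\mathfrak t}_x$ is a homeomorphism, and (iv) the map $\RM^d\times \widehat{\rm Del}_{(r,R)}^{(N)}(\RM^d)\to \widehat{\rm Del}_{(r,R)}^{(N)}(\RM^d)$ is jointly continuous.

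For (i), given $\xi=(\chi_\xi,V_\xi,\Ll_\xi)$, the fact that translation is an isometry of $\RM^d$ immediately gives that $\mathfrak t_x(\Ll_\xi)\in {\rm Del}_R^r(\RM^d)$, that $\mathfrak t_x(V_\xi)$ is a compact subset of $\mathfrak t_x(\Ll_\xi)$ of cardinality $N$, and that $\mathfrak t_x\circ\chi_\xi:\{1,\dots,N\}\to\mathfrak t_x(V_\xi)$ is a bijection. Hence $\hat{\mathfrak t}_x(\xi)$ is a legitimate triple. For (ii), a direct computation using the group law $\mathfrak t_x\circ\mathfrak t_y=\mathfrak t_{x+y}$ at the level of $\Cc(\RM^d)$ from \eqref{Eq:RDGroupAction} gives the same identity in all three slots of the triple, so $\hat{\mathfrak t}_x\circ\hat{\mathfrak t}_y=\hat{\mathfrak t}_{x+y}$ and $\hat{\mathfrak t}_0=\id$. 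This automatically produces the inverse $\hat{\mathfrak t}_{-x}$ needed for the bijection in (iii).

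The only non-trivial point is the continuity claim, which decomposes into two pieces. For each fixed $x$, continuity of $\hat{\mathfrak t}_x$ follows from the covering description: under the order cover $(\chi_V,V,\Ll)\mapsto (V,\Ll)$, the space $\widehat{\rm Del}_{(r,R)}^{(N)}(\RM^d)$ is locally homeomorphic to open sets of $\Kk(\RM^d)\times {\rm Del}_R^r(\RM^d)$; both factor actions $V\mapsto \mathfrak t_x(V)$ and $\Ll\mapsto \mathfrak t_x(\Ll)$ are continuous (the first because translation is an isometry for the Hausdorff metric, the second by the already cited \eqref{Eq:RDGroupAction}), so $\hat{\mathfrak t}_x$ is continuous on each sheet of the cover. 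Joint continuity in $(x,\xi)$ then reduces to the same two factor statements upstairs, together with the continuity of the $\RM^d$-action on $(\Kk(\RM^d),{\rm d}_{\rm H})$ and on $(\Cc(\RM^d),D)$.

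The step I expect to be most delicate is the joint continuity, specifically the need to track the labeling $\chi_\xi$ along translations. Since $\chi_\xi$ takes discrete values in the finite set $\{1,\dots,N\}$, the key observation is that the labeling is locally constant on the cover, and translation by $x$ near a reference point $x_0$ moves $V_\xi$ continuously in the Hausdorff metric, so the bijection $\mathfrak t_x\circ\chi_\xi$ depends continuously on $(x,\xi)$ in the covering chart. Once this is recorded, (iv) follows and the proposition is established.
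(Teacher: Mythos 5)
The paper states Proposition~\ref{Pro:RDAction1} without any proof (it is one of the facts imported from the companion work \cite{MeslandArxiv2021}), so there is no in-paper argument to compare yours against; I can only assess your proposal on its own terms. On those terms it is essentially sound and well organized: well-definedness of the triple, the homomorphism property $\hat{\mathfrak t}_x\circ\hat{\mathfrak t}_y=\hat{\mathfrak t}_{x+y}$, and the resulting bijectivity are all correctly reduced to the fact that translation is an isometry of $\RM^d$, and your continuity argument correctly leans on the only handle the paper gives you on the topology of $\widehat{\rm Del}_{(r,R)}^{(N)}(\RM^d)$, namely that the order cover $(\chi_V,V,\Ll)\mapsto(V,\Ll)$ is an $N!$-sheeted covering of a subspace of $\Kk(\RM^d)\times{\rm Del}^r_R(\RM^d)$. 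Two small points deserve to be made explicit. First, the word ``lifts'' in the statement also asks for the intertwining relation $\mathfrak a_N\circ\hat{\mathfrak t}_x=\mathfrak t_x\circ\mathfrak a_N$; this is immediate from the third slot of your formula, but it is the defining property of a lift and should be recorded. Second, a set-theoretic lift of a continuous map through a covering is not automatically continuous — one must check that the lift respects the sheets of local trivializations — and this is exactly the point you identify with the locally constant labeling; your observation that $\mathfrak t_x\circ\chi_\xi$ varies consistently with the sheet structure is the right one, though a fully rigorous version would require the explicit topology that the paper deliberately omits. With those caveats, your argument does what the statement requires.
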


There is a second group action on the many-body covers, which will play a central role in our framework:

\begin{proposition} Let $\Ss_N$ be the group of permutations of $N$ objects, which we identify here with the group of bijections from $\{1,\ldots,N\}$ to itself. Then $\Ss_N$ acts by homeomorphisms on $\widehat {\rm Del}_{(r,R)}^{(N)}(\RM^d)$ via
\begin{equation}\label{Eq:Ls}
\Lambda_s (\xi) := \big (\chi_\xi \circ s^{-1},V_\xi,\Ll_\xi\big ), \quad s\in \mathcal{S}_N.
\end{equation}
\end{proposition}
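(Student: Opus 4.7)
The plan is to verify three things in order: (i) $\Lambda_s$ is well-defined as a map into $\widehat{\mathrm{Del}}^{(N)}_{(r,R)}(\RM^d)$; (ii) $s \mapsto \Lambda_s$ is a group homomorphism from $\Ss_N$ into $\mathrm{Sym}\big(\widehat{\mathrm{Del}}^{(N)}_{(r,R)}(\RM^d)\big)$; and (iii) each $\Lambda_s$ is continuous (whence a homeomorphism, with inverse $\Lambda_{s^{-1}}$ of the same form).

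For (i), given $\xi = (\chi_\xi, V_\xi, \Ll_\xi)$ in $\widehat{\mathrm{Del}}^{(N)}_{(r,R)}(\RM^d)$, the composition $\chi_\xi \circ s^{-1} : \{1,\ldots,N\} \to V_\xi$ is a bijection because $s^{-1}$ and $\chi_\xi$ are bijections. Since $V_\xi$ and $\Ll_\xi$ are untouched, the triple $(\chi_\xi \circ s^{-1}, V_\xi, \Ll_\xi)$ satisfies the three defining conditions of a point in the many-body cover. For (ii), a direct computation gives $\Lambda_e(\xi) = (\chi_\xi \circ e, V_\xi, \Ll_\xi) = \xi$ and
\begin{equation*}
\Lambda_s\big(\Lambda_t(\xi)\big) = \Lambda_s\big(\chi_\xi \circ t^{-1}, V_\xi, \Ll_\xi\big) = \big(\chi_\xi \circ t^{-1} \circ s^{-1}, V_\xi, \Ll_\xi\big) = \Lambda_{st}(\xi),
\end{equation*}
since $(st)^{-1} = t^{-1} s^{-1}$.

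The substantive step is (iii), and this is where the main obstacle lies: the paper characterizes the topology on $\widehat{\mathrm{Del}}^{(N)}_{(r,R)}(\RM^d)$ only through the two properties of the order cover $(\chi_V,V,\Ll) \mapsto (V,\Ll)$ being an $N!$-fold cover over the base topology inherited from $\Kk(\RM^d) \times \mathrm{Del}^r_R(\RM^d)$, and of $\mathfrak{a}_N$ being a (connected) cover. I would exploit the order-cover structure as follows. Since the order cover is a covering map, around any point $\xi$ there is an evenly covered open neighborhood $U \subset \Kk(\RM^d) \times \mathrm{Del}^r_R(\RM^d)$ of $(V_\xi, \Ll_\xi)$ whose preimage is a disjoint union $\bigsqcup_{\sigma \in \Ss_N} U_\sigma$ of open sheets, each homeomorphic to $U$ via the order cover. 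The sheet containing $\xi$ corresponds to $\sigma = e$ (with the convention $U_\sigma = \Lambda_\sigma(U_e)$), so a point $(\chi_{V},V,\Ll) \in U_\sigma$ is uniquely described by the pair $(V,\Ll) \in U$ together with the label $\sigma$.

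In this local trivialization, $\Lambda_s$ acts simply by relabeling sheets: it sends $U_\sigma$ homeomorphically onto $U_{s\sigma}$ via the identity on the $(V,\Ll)$-coordinate. Hence $\Lambda_s$ is continuous on each $U_\sigma$, and therefore on $U$, and since $\xi$ was arbitrary it is continuous on all of $\widehat{\mathrm{Del}}^{(N)}_{(r,R)}(\RM^d)$. Combined with the group law proved in (ii), each $\Lambda_s$ has a continuous two-sided inverse $\Lambda_{s^{-1}}$, so it is a homeomorphism. The only delicate point in this argument is justifying that the $\Ss_N$ labeling of sheets over $U$ can be chosen so that $\Lambda_\sigma$ does send $U_e$ onto $U_\sigma$; this is automatic once one observes that $\Ss_N$ acts freely and transitively on the fiber of the order cover (different values of $\sigma$ produce distinct bijections $\chi_\xi \circ \sigma^{-1}$), which is exactly what makes the order cover a principal $\Ss_N$-bundle and forces the local sheet permutation to be continuous.
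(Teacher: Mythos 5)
Your steps (i) and (ii) are correct: well-definedness is immediate since a composition of bijections is a bijection, and the identity $(st)^{-1}=t^{-1}s^{-1}$ gives $\Lambda_s\circ\Lambda_t=\Lambda_{st}$ exactly as you compute. (The paper itself offers no proof, only the remark identifying $\Lambda$ with the deck group of the order cover, so this algebra is all that is implicitly on record.) The genuine gap is in step (iii). Your continuity argument rests on the claim that, because $\Ss_N$ acts freely and transitively on each fibre of the order cover, $\Lambda_s$ must permute the sheets of an evenly covered neighborhood and hence be continuous. That implication is false. On the trivial double cover $\RM\times\{1,2\}\to\RM$, the fibre-preserving involution that swaps the two sheets over $(0,\infty)$ and fixes them over $(-\infty,0]$ acts freely and transitively on every fibre, yet is not continuous. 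Freeness and transitivity are purely set-theoretic and cannot force $\Lambda_\sigma(U_e)$ to be a single open sheet; your labeling convention $U_\sigma=\Lambda_\sigma(U_e)$ already presupposes that $\Lambda_\sigma$ carries the sheet $U_e$ onto a sheet, which is (locally) equivalent to the statement you are trying to prove. So the argument is circular at precisely the point you flag as ``delicate''.

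What actually closes the argument is the concrete description of the topology on $\widehat{\rm Del}_{(r,R)}^{(N)}(\RM^d)$, which this paper deliberately leaves implicit and defers to \cite{MeslandArxiv2021}: a basic neighborhood of $\xi_0=(\chi_0,V_0,\Ll_0)$ consists of the triples $(\chi,V,\Ll)$ with $\Ll$ close to $\Ll_0$ in the pattern metric, $V$ Hausdorff-close to $V_0$, and $\chi(j)$ close to $\chi_0(j)$ for every $j\in\{1,\ldots,N\}$. With this in hand, $\Lambda_s$ visibly maps the basic neighborhood of $\xi_0$ of radius $\epsilon$ bijectively onto the basic neighborhood of $\Lambda_s(\xi_0)$ of the same radius, because the condition that $\chi(s^{-1}(j))$ be close to $\chi_0(s^{-1}(j))$ for all $j$ is merely a relabeling of the condition that $\chi(j)$ be close to $\chi_0(j)$ for all $j$; continuity of $\Lambda_s$ and of its inverse $\Lambda_{s^{-1}}$ then follow at once. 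Either quote that description of the topology, or restructure the proof so that continuity is verified directly on a basis rather than deduced from the abstract $N!$-cover property, which by itself is too weak to carry the conclusion.
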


\begin{remark}{\rm The action of $\Ss_N$ described above coincide with the group of deck transformations for the order cover $(\chi_V,V,\Ll) \mapsto (V,\Ll)$.
}$\Diamond$
\end{remark}

\subsection{Hulls, transversals and their many-body covers}

\begin{definition}\label{Def:Hull} Let $\Ll_0 \subset \RM^d$ be a fixed Delone point set. The hull of $\Ll_0$ is the topological dynamical system $(\Omega_{\Ll_0},\mathfrak t,\RM^d)$, where
\begin{equation}
\Omega_{\Ll_0}= \overline{\{\mathfrak t_a(\Ll_0)=\Ll_0 - a, \ a\in \RM^d\}} \subset \Cc(\RM^d).
\end{equation}
\end{definition}

The above dynamical system is always transitive but may fail to be minimal. Indeed, $\Omega_\Ll$ may not coincide with $\Omega_{\Ll_0}$ for $\Ll \in \Omega_{\Ll_0}$, but it is always true that $\Omega_\Ll \subseteq \Omega_{\Ll_0}$ for any $\Ll \in \Omega_{\Ll_0}$.

\begin{remark}{\rm Proposition~\ref{Pro:rRDelone} assures us that all the subsets contained in $\Omega_{\Ll_0}$ are $(r,R)$-Delone point sets if $\Ll_0$ is one. Note, however, that the hull can be very well defined for just uniformly discrete patterns, a setting that is relevant for the so called bulk-defect correspondence (see \cite{ProdanJPA2021}).
}$\Diamond$
\end{remark}

As a closed subset of the space of Delone sets, $\Omega_{\Ll_0}$ accepts many-body covers:

\begin{definition} The $N$-body cover above $\Omega_{\Ll_0}$ is defined as
\begin{equation}
\widehat \Omega_{\Ll_0}^{(N)} : = \mathfrak a_N^{-1}(\Omega_{\Ll_0}).
\end{equation}
Proposition~\ref{Pro:RDAction1} assures us that $\big (\widehat \Omega_{\Ll_0}^{(N)}, \hat{\mathfrak t},\RM^d\big )$ is again a topological dynamical system.
\end{definition}

\begin{definition}\label{Def:Transversal} The canonical transversal of a hull $(\Omega_{\Ll_0},\mathfrak t,\RM^d)$ of a Delone set $\Ll_0$ is defined as
\begin{equation*}
\Xi_{\Ll_0} = \{ \Ll \in \Omega_{\Ll_0}, \ 0 \in \Ll\}.
\end{equation*}
The transversal is a closed hence compact subspace of $\Omega_{\Ll_0}$.
\end{definition}

As it was the case for the hull, $\Xi_{\Ll_0}$ accepts many-body covers, which are defined slightly different:

\begin{definition} The $N$-body cover of $\Xi_{\Ll_0}$ is defined as
\begin{equation}\label{Eq:XiCover}
\widehat \Xi_{\Ll_0}^{(N)} : = \big \{\xi \in \widehat \Omega_{\Ll_0}^{(N)}, \ \chi_\xi(1) =0\big \}.
\end{equation} 
\end{definition}

\begin{remark}\label{Re:MBCoverXi}{\rm The $N$-body cover $\widehat \Xi_{\Ll_0}^{(N)}$ is a closed subset of $\widehat {\rm Del}_{(r,R)}^{(N)}(\RM^d)$. Except for the case $N=1$, when the many-body covers reduce to the ordinary hulls and transversals, $\widehat \Omega_{\Ll_0}^{(N)}$ and $\widehat \Xi_{\Ll_0}^{(N)}$ are locally compact but not compact spaces.
}$\Diamond$
\end{remark}

\subsection{Canonical groupoid of a point pattern}

We review here the groupoid introduced by Bellissard and Kellendonk \cite{Bellissard1986,Bellissard1995,KellendonkRMP95}. We start with a few general concepts:

\begin{definition}[Generalized equivalence relations, \cite{WilliamsBook2}~p.~5]\label{Def:GGenEquiv} Let $G$ be a group and $X$ a set. Then $X \times G \times X$ can be given the structure of a groupoid by adopting the inversion function 
\begin{equation}
(x,g,y)^{-1} = (y,g^{-1},x),
\end{equation}
and by declaring that two triples $(x,g,y)$ and $(w,h,z)$ are composible if $y = w$, in which case the product is
\begin{equation}
(x,g,y)\cdot (y,h,z) : = (x,gh,z).
\end{equation} 
If $X$ is a topological space and $G$ is a topological group, then $X \times G \times X$ becomes a topological groupoid if endowed with the product topology.
\end{definition}

\begin{definition}[\cite{WilliamsBook2}~p.6] Let $X$ be a $G$-space. Then the transformation groupoid of $(X,G)$ is the topological sub-groupoid of the generalized equivalence relations introduced in Definition~\ref{Def:GGenEquiv}, consisting of the triples $(x,g,y)$ with the entries obeying the constraint $x=g\cdot y$.
\end{definition}

When applied to the dynamical system $ (\Omega_{\Ll_0},\mathfrak t, \RM^d )$, this definition gives:

\begin{proposition} The transformation groupoid $\widetilde \Gg_{\Ll_0}$ associated to the dynamical system $ (\Omega_{\Ll_0},\mathfrak t, \RM^d)$ consists of triples
\begin{equation}
\big (\mathfrak t_{x}(\Ll),x, \Ll \big ) \in \Omega_{\Ll_0} \times \RM^d \times \Omega_{\Ll_0}
\end{equation}
endowed with the multiplication
\begin{equation}
\big (\mathfrak t_{x' + x}(\Ll),x', \mathfrak t_x(\Ll) \big ) \big (\mathfrak t_x(\Ll),x, \Ll \big )  = (\mathfrak t_{x' +x}(\Ll),x'+x, \Ll \big )
\end{equation}
and inversion
\begin{equation}
\big (\mathfrak t_x(\Ll),x, \Ll \big )^{-1} = \big (\Ll,-x, \mathfrak t_x(\Ll) \big ).
\end{equation}
The topology of the transformation groupoid is that inherited from the product topology of $\Omega_{\Ll_0} \times \RM^d \times \Omega_{\Ll_0}$.
\end{proposition}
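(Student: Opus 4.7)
The statement is essentially an unpacking of the preceding two definitions: the generalized equivalence relation construction applied to $(X,G)=(\Omega_{\Ll_0},\RM^d)$, then restricted via the $G$-action constraint. So the plan is a verification, not a discovery, and the proof should read as a short bookkeeping argument rather than anything substantive.

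First, I would specialize the transformation groupoid definition. A generic triple in the ambient $\Omega_{\Ll_0}\times \RM^d \times \Omega_{\Ll_0}$ has the form $(\Ll',x,\Ll)$, and the transformation-groupoid constraint $\Ll' = x\cdot \Ll$ becomes, by the definition \eqref{Eq:RDGroupAction} of the action $\mathfrak t$, the requirement $\Ll' = \mathfrak t_x(\Ll) = \Ll - x$. This identifies the underlying set of $\widetilde \Gg_{\Ll_0}$ with triples $\big(\mathfrak t_x(\Ll),x,\Ll\big)$ parametrised by $(x,\Ll)\in \RM^d\times \Omega_{\Ll_0}$, as claimed.

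Next, I would verify the algebraic structure by substituting into the general formulas from Definition~\ref{Def:GGenEquiv}. Two admissible triples $(\mathfrak t_{x'}(\Ll_2),x',\Ll_2)$ and $(\mathfrak t_x(\Ll),x,\Ll)$ are composable precisely when the inner entries match, i.e.\ $\Ll_2 = \mathfrak t_x(\Ll)$; in that case $\mathfrak t_{x'}(\Ll_2)= \mathfrak t_{x'}\circ\mathfrak t_x(\Ll) = \mathfrak t_{x'+x}(\Ll)$, because $\mathfrak t$ is a group action. The general product rule then yields exactly the stated composition $(\mathfrak t_{x'+x}(\Ll),x'+x,\Ll)$. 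The inverse formula follows identically: the generalized equivalence relation inversion sends $(\Ll',x,\Ll)\mapsto (\Ll,-x,\Ll')$, and substituting $\Ll'=\mathfrak t_x(\Ll)$ gives precisely the stated inverse, with the source-target swap consistent because $\mathfrak t_{-x}\circ \mathfrak t_x = \id$.

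Finally, the topological statement is automatic: Definition~\ref{Def:GGenEquiv} equips $X\times G \times X$ with the product topology, and the transformation groupoid inherits the subspace topology from this product. With $X=\Omega_{\Ll_0}$ and $G=\RM^d$, this is exactly the product topology of $\Omega_{\Ll_0}\times \RM^d\times \Omega_{\Ll_0}$ restricted to the subset cut out by $\Ll'=\mathfrak t_x(\Ll)$. Continuity of the structure maps is inherited from the general construction together with continuity of the action established in the proposition supporting \eqref{Eq:RDGroupAction}.

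There is really no main obstacle here; the only point worth flagging explicitly is that the parametrisation by $(x,\Ll)\in \RM^d\times \Omega_{\Ll_0}$ is a bijection onto the set of admissible triples, so one must note that $\Omega_{\Ll_0}$ is $\mathfrak t$-invariant (clear from the definition of the hull as the closure of an $\RM^d$-orbit) in order for $\mathfrak t_x(\Ll)$ to lie in $\Omega_{\Ll_0}$ and for the triple to be an element of the stated product space.
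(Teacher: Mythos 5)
Your verification is correct and follows exactly the route the paper intends: the paper states this proposition without proof, treating it as an immediate specialization of Definition~\ref{Def:GGenEquiv} and the transformation-groupoid definition to $(X,G)=(\Omega_{\Ll_0},\RM^d)$, which is precisely the bookkeeping you carry out. Your added remark that $\Omega_{\Ll_0}$ must be $\mathfrak t$-invariant (so that $\mathfrak t_x(\Ll)$ stays in the hull) is a sensible point that the paper leaves implicit in calling $(\Omega_{\Ll_0},\mathfrak t,\RM^d)$ a topological dynamical system.
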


The source and the range maps of the groupoid $\widetilde \Gg_{\Ll_0}$ act as
\begin{equation}\label{Eq:SR}
\tilde{\mathfrak s}\big (\mathfrak t_x(\Ll),x, \Ll \big ) = (\Ll,0,\Ll), \quad \tilde{\mathfrak r}\big (\mathfrak t_x(\Ll),x, \Ll \big ) = \big (\mathfrak t_x(\Ll),0,\mathfrak t_x(\Ll)\big ),
\end{equation}
hence its space of units coincides with the hull $\Omega_{\Ll_0}$ of the pattern.

\begin{definition} The topological groupoid $\Gg_{\Ll_0}$ canonically associated to a Delone set $\Ll_0$ is the restriction of $\widetilde \Gg_{\Ll_0}$ to the transversal $\Xi_{\Ll_0} \subset \Omega_{\Ll_0}$:
\begin{equation}\label{Eq:G0}
\Gg_{\Ll_0} : = \tilde {\mathfrak s}^{-1}(\Xi_{\Ll_0}) \cap \tilde{\mathfrak r}^{-1}(\Xi_{\Ll_0}).
\end{equation}
Since $\Xi_{\Ll_0}$ is a closed sub-set, the restriction is indeed a topological groupoid.
\end{definition}

This groupoid can be presented more explicitly as:

\begin{proposition}[\cite{Bellissard1986,Bellissard1995,KellendonkRMP95}]\label{Pro:StandardG1} The topological groupoid canonically associated to a Delone set $\Ll_0$ consists of:
\begin{enumerate}[\ \rm 1.]
\item The set
\begin{equation}
\Gg_{\Ll_0} = \big \{ (x,\Ll), \ \Ll \in \Xi_{\Ll_0}, \ x \in \Ll \big \} \subset \RM^d \times \Xi_{\Ll_0},
\end{equation}
equipped with the inversion map
\begin{equation}
(x,\Ll)^{-1} = \big(-x,\mathfrak t_{x}(\Ll)\big) = \mathfrak t_{x}(0, \Ll).
\end{equation}
\item The subset of $\Gg_{\Ll_0} \times \Gg_{\Ll_0}$ of composable elements
\begin{equation}
 \Big \{\big ((x',\Ll'),(x,\Ll)\big ) \in \Gg_{\Ll_0} \times \Gg_{\Ll_0}, \  \Ll'=\mathfrak t_{x}(\Ll)\Big\}
 \end{equation}
equipped with the composition 
\begin{equation}
\big(x',\mathfrak t_x(\Ll)\big) \cdot (x,\Ll) = (x' +x, \Ll).
\end{equation}
\end{enumerate} 
The topology on $\Gg_{\Ll_0}$ is the relative topology inherited from $\RM^d \times \Xi_{\Ll_0}$.
\end{proposition}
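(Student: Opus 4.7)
The plan is to exhibit an explicit bijection between the abstract restriction $\Gg_{\Ll_0}=\tilde{\mathfrak s}^{-1}(\Xi_{\Ll_0})\cap\tilde{\mathfrak r}^{-1}(\Xi_{\Ll_0})$ and the concrete set $\{(x,\Ll)\in\RM^d\times\Xi_{\Ll_0}\colon x\in\Ll\}$, and then to transport the groupoid structure and topology of $\widetilde{\Gg}_{\Ll_0}$ along this bijection.

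First, I would unpack the restriction in terms of the definition of $\Xi_{\Ll_0}$. An element $(\mathfrak{t}_x(\Ll),x,\Ll)\in\widetilde{\Gg}_{\Ll_0}$ lies in the intersection $\tilde{\mathfrak s}^{-1}(\Xi_{\Ll_0})\cap\tilde{\mathfrak r}^{-1}(\Xi_{\Ll_0})$ precisely when both $\Ll$ and $\mathfrak{t}_x(\Ll)=\Ll-x$ belong to $\Xi_{\Ll_0}$, in view of \eqref{Eq:SR}. By Definition~\ref{Def:Transversal} this is the requirement $0\in\Ll$ and $0\in\Ll-x$, the second condition being equivalent to $x\in\Ll$. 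Thus the third entry of the triple is already constrained to lie in $\Xi_{\Ll_0}$, the second entry to lie in that $\Ll$, and the first entry is then determined as $\mathfrak{t}_x(\Ll)$. The map
\[
\Phi\colon \big(\mathfrak{t}_x(\Ll),x,\Ll\big)\longmapsto (x,\Ll)
\]
is therefore a set-theoretic bijection from $\Gg_{\Ll_0}$ onto $\{(x,\Ll)\in\RM^d\times\Xi_{\Ll_0}\colon x\in\Ll\}$, with inverse $(x,\Ll)\mapsto(\mathfrak{t}_x(\Ll),x,\Ll)$.

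Next, I would transport the groupoid operations of $\widetilde{\Gg}_{\Ll_0}$ through $\Phi$. The image of the inverse $(\mathfrak{t}_x(\Ll),x,\Ll)^{-1}=(\Ll,-x,\mathfrak{t}_x(\Ll))$ under $\Phi$ is $(-x,\mathfrak{t}_x(\Ll))$, which is exactly the formula claimed. For composability, the abstract condition in the transformation groupoid is that the middle triple be shared, i.e.\ $(\mathfrak{t}_{x'}\mathfrak{t}_x(\Ll),x',\mathfrak{t}_x(\Ll))$ composes with $(\mathfrak{t}_x(\Ll),x,\Ll)$. Translating via $\Phi$, this says $(x',\Ll')$ composes with $(x,\Ll)$ exactly when $\Ll'=\mathfrak{t}_x(\Ll)$, and the product becomes $(x'+x,\Ll)$, again as stated. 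This part is a routine verification.

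Finally, I would check that $\Phi$ is a homeomorphism when the target carries the subspace topology inherited from $\RM^d\times\Xi_{\Ll_0}$. The source is the subspace of $\Omega_{\Ll_0}\times\RM^d\times\Omega_{\Ll_0}$ cut out by the constraints identifying the first and third entry with the $\RM^d$-translate encoded by the middle component, intersected with $\Xi_{\Ll_0}\times\RM^d\times\Xi_{\Ll_0}$. Continuity of $\Phi$ is immediate as it is the restriction of a coordinate projection. Continuity of $\Phi^{-1}$ reduces to continuity of the map $(x,\Ll)\mapsto\mathfrak{t}_x(\Ll)$, which follows from the continuity of the $\RM^d$-action asserted in \eqref{Eq:RDGroupAction}. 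The only step that is not purely formal is this last point: one must know that the action map is jointly continuous on $\RM^d\times\Cc(\RM^d)$, but this has already been recorded as a homomorphism of topological groups in Proposition for the space of patterns and descends to the closed $\RM^d$-invariant subset $\Omega_{\Ll_0}$.
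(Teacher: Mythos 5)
Your proposal is correct and follows essentially the same route as the paper's proof: unpack the restriction $\tilde{\mathfrak s}^{-1}(\Xi_{\Ll_0})\cap\tilde{\mathfrak r}^{-1}(\Xi_{\Ll_0})$ via \eqref{Eq:SR} to see that both $\Ll$ and $\mathfrak t_x(\Ll)$ must lie in the transversal, deduce $x\in\Ll$, and then discard the redundant first entry of the triple. The only difference is that you spell out why the discarding map is a homeomorphism (joint continuity of the $\RM^d$-action), a point the paper leaves implicit.
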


\begin{proof}
From~\eqref{Eq:SR}, one finds
\begin{equation}
\tilde{\mathfrak s}^{-1}(\Ll,0,\Ll) = \big (\mathfrak t_{x}(\Ll),x, \Ll \big ), \quad \tilde{\mathfrak r}^{-1}(\Ll,x,\Ll) = \big (\Ll,-x,\mathfrak t_{x}(\Ll) \big ),
\end{equation}
hence $\big (\mathfrak t_{x}(\Ll),x, \Ll \big )$ belongs to the intersection~\eqref{Eq:G0} if and only if $\Ll$ and $\mathfrak t_x(\Ll)$ both belong to $\Xi_{\Ll_0}$. This automatically constrains $x$ to be on the lattice $\Ll$ and $\Gg_{\Ll_0}$ can then be presented as the sub-groupoid of $\widetilde \Gg_{\Ll_0}$ consisting of the triples
\begin{equation}
\big (\mathfrak t_x(\Ll),x, \Ll \big ) \in \widetilde \Gg_{\Ll_0}, \quad \Ll \in \Xi_{\Ll_0} \subset \Omega_{\Ll_0}, \quad x \in \Ll,
\end{equation}
endowed with the topology inherited from $\widetilde G_{\Ll_0}$, hence from $\Xi_{\Ll_0} \times \RM^d \times \Xi_{\Ll_0}$. Lastly, we can drop the redundant first entry of the triples.
\end{proof}

\begin{remark}{\rm The algebraic structure of $\Gg_{\Ll_0}$ can be also described as coming from the equivalence relation on $\Xi_{\Ll_0}$
\begin{equation*}
R_{\Xi_{\Ll_0}} = \big \{ (\Ll,\Ll') \in \Xi_{\Ll_0} \times \Xi_{\Ll_0}, \ \Ll' = \Ll -a \ \mbox{for some} \ a \in \RM^d \big \}.
\end{equation*}
However, the topology of $\Gg_{\Ll_0}$ is not the one inherited from $\Xi_{\Ll_0} \times \Xi_{\Ll_0}$.
}$\Diamond$
\end{remark}

\begin{remark}{\rm Since $\Xi_{\Ll_0}$ is an abstract transversal for $(\Omega_{\Ll_0},\RM^d)$, the transformation groupoid $\widetilde \Gg_{\Ll_0}$ and $\Gg_{\Ll_0}$ are equivalent \cite{MuhlyJOT1987}. In particular, their corresponding $C^\ast$-algebras are Morita equivalent.
}$\Diamond$
\end{remark}

\begin{remark}{\rm In a typical situation, the quantum resonators carry internal structures and, to account for them, the hull and the transversal are multiplied by a finite set. This complication can be avoided by duplicating the pattern such that the transversal of the new pattern becomes $\Xi_{\Ll_0}\times \{1,2,\ldots\}$. This will be tacitly assumed in the following.
}$\Diamond$
\end{remark}

\subsection{Invariant measures} The topological dynamical system $(\Omega_{\Ll_0},\mathfrak t,\RM^d)$, like any other topological dynamical system, accepts invariant and ergodic measures. The measure with physical meaning, denoted here by ${\rm d}\bar \PM$, is fixed by $\Ll_0$ itself:
\begin{equation}
\int_{\Omega_{\Ll_0}} {\rm d}\bar \PM(\Ll) \, \phi(\Ll) : = \lim_{V \to \RM^d}\frac{1}{|V|} \int_{V \subset \RM^d} {\rm d} y \, \phi\big ( \mathfrak t_y(\Ll_0)\big ).
\end{equation}

The real interest, however, is in invariant and ergodic measures on the transversal $\Xi_{\Ll_0}$ relative to the groupoid action. We describe them briefly here and we start from the following reassuring statement:

\begin{proposition}[\cite{ConnesSpringer1979,BellissardCMP2006}] There is a one-to-one correspondence between measures on $\Omega_{\Ll_0}$ invariant under the $\RM^d$-action and measures on the transversal $\Xi_{\Ll_0}$ invariant under the groupoid action.
\end{proposition}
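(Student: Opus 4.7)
The plan is to establish the correspondence via disintegration along local product charts of the hull, following the transverse measure theory of Connes.

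First, I would exploit the Delone property to build box neighborhoods. Since $\Ll_0$ is $(r,R)$-Delone, every $\Ll \in \Xi_{\Ll_0}$ satisfies $B_r(0) \cap \Ll = \{0\}$, and a standard continuity argument provides, for any $\epsilon < r/2$, an open neighborhood $U_\Ll \subset \Xi_{\Ll_0}$ of $\Ll$ such that
\begin{equation*}
\Phi_\Ll : U_\Ll \times B_\epsilon(0) \to \Omega_{\Ll_0}, \qquad (\Ll',x) \mapsto \mathfrak t_{-x}(\Ll'),
\end{equation*}
is a homeomorphism onto an open subset of $\Omega_{\Ll_0}$. Translating these charts by $\RM^d$ produces an open cover of $\Omega_{\Ll_0}$, which provides the flow-box structure that the rest of the argument rests on.

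Next, for the forward direction I would define the transversal measure by disintegration. Given an $\RM^d$-invariant Radon measure $\mu$ on $\Omega_{\Ll_0}$, pull it back along each chart $\Phi_\Ll$. Invariance under $\mathfrak t_x$ for $x \in B_\epsilon(0)$ forces the pullback on $U_\Ll \times B_\epsilon(0)$ to split as $\nu_{U_\Ll} \otimes dx$, where $dx$ is Lebesgue on $B_\epsilon(0)$ and $\nu_{U_\Ll}$ is a unique Radon measure on $U_\Ll$. Uniqueness of the disintegration forces the local pieces $\nu_{U_\Ll}$ to agree on overlaps, so they glue to a single Radon measure $\nu$ on $\Xi_{\Ll_0}$. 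Groupoid invariance of $\nu$ then comes for free: for any $(x,\Ll) \in \Gg_{\Ll_0}$, both $\Ll$ and $\mathfrak t_x(\Ll)$ lie in $\Xi_{\Ll_0}$, and the flow $\mathfrak t_{-(x+\cdot)}$ maps a small box around $\Ll$ onto one around $\mathfrak t_x(\Ll)$; the $\RM^d$-invariance of $\mu$ translates, via the uniqueness of disintegration, into the equality $\nu|_{U_\Ll} = \nu|_{U_{\mathfrak t_x(\Ll)}}$ under this identification.

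For the reverse direction, given a groupoid-invariant Radon measure $\nu$ on $\Xi_{\Ll_0}$, I would integrate it up to $\Omega_{\Ll_0}$ using the Voronoi partition. For $\Ll \in \Xi_{\Ll_0}$ let $V(\Ll) \subset \RM^d$ be the Voronoi cell of $0 \in \Ll$, and set
\begin{equation*}
\int_{\Omega_{\Ll_0}} f \, d\mu \;:=\; \int_{\Xi_{\Ll_0}} d\nu(\Ll) \int_{V(\Ll)} f\bigl(\mathfrak t_{-x}(\Ll)\bigr)\, dx.
\end{equation*}
Every $\Ll' \in \Omega_{\Ll_0}$ has, off a Lebesgue-null set of base points, a unique nearest site of $\Ll'$ to $0$, so this formula integrates each $\RM^d$-orbit once. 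Translation invariance of $\mu$ reduces, via the divergence-theorem-style bookkeeping along Voronoi facets, to the identity $\nu = (\mathfrak t_x)_\ast \nu$ along the finite set of groupoid arrows that relate neighboring Voronoi sites — exactly the groupoid invariance assumed on $\nu$. Finally, checking that the two constructions are mutually inverse is a routine application of Fubini in the box charts.

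The main obstacle will be the rigorous handling of the Voronoi fundamental domain: the cells $V(\Ll)$ depend measurably but not continuously on $\Ll$, and one must verify that the set of $\Ll$ where nearest-neighbor ties occur is $\nu$-negligible (so no double counting arises) and that the measure so produced is genuinely Radon, not merely Borel. Once these measure-theoretic niceties are disposed of, bijectivity of the correspondence is a direct consequence of the uniqueness clause in disintegration together with the flow-box structure established in the first step.
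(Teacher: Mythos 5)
Your plan is the classical transverse-measure correspondence and it is sound. Note first that the paper does not actually prove this proposition --- it cites \cite{ConnesSpringer1979,BellissardCMP2006} --- and only spells out the hull-to-transversal direction, by restricting $\bar{\PM}$ to the thickened transversal $\Xi_{\Ll_0}(s)$ and pushing it forward through the retraction $\mathfrak e$ that translates the unique point of $\Ll$ lying in $\bar B_s(0)$ to the origin. That construction is your flow-box disintegration in disguise: $\Xi_{\Ll_0}(s)$ is a single global flow box $\Xi_{\Ll_0}\times \bar B_s(0)$, and the normalization by $|\bar B_s(0)|$ is exactly what your splitting $\nu_{U_\Ll}\otimes dx$ produces, so in the forward direction the two arguments coincide in substance (yours is packaged in local charts, the paper's in one global tube). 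What you add is the inverse construction, which the paper leaves entirely to the references; the Voronoi-cell integration is the standard way to do it, and the technical points you flag --- measurability of $(\Ll,x)\mapsto \chi_{V(\Ll)}(x)$, Lebesgue-nullity of the facets for each fixed $\Ll$, and the uniform bounds $B_{r/2}(0)\subset V(\Ll)\subset B_R(0)$ which give finiteness and hence Radon-ness on the compact hull --- are the right ones.

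One concrete slip: with the paper's convention $\mathfrak t_x(\Lambda)=\Lambda-x$, your reconstruction formula must read $f\bigl(\mathfrak t_x(\Ll)\bigr)$, not $f\bigl(\mathfrak t_{-x}(\Ll)\bigr)$, for $x\in V(\Ll)$. Indeed, $\mathfrak t_x(\Ll)=\Ll-x$ has $-x$ as its site nearest to the origin precisely when $x\in V(\Ll)$, so $(\Ll,x)\mapsto \mathfrak t_x(\Ll)$ with $x$ ranging over $V(\Ll)$ parametrizes each orbit exactly once off the facets, whereas $(\Ll,x)\mapsto \mathfrak t_{-x}(\Ll)$ does so only when $V(\Ll)=-V(\Ll)$. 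For an asymmetric example such as $\Ll_0=3\ZM\cup(3\ZM+1)$ in $d=1$, the formula as written both misses and double-counts portions of the hull, so no hypothesis on $\nu$ could rescue translation invariance; the fix is purely a sign. Beyond that, the one step worth writing out carefully is the translation invariance of the reconstructed $\mu$: rather than facet-by-facet bookkeeping, prove that the integral is independent of the choice of uniformly bounded measurable fundamental domain $\Ll\mapsto F(\Ll)$, by decomposing the symmetric difference of two such choices over the finitely many groupoid arrows reaching into neighboring cells and invoking groupoid invariance of $\nu$; since $(\mathfrak t_a)_\ast\mu$ is the measure built from the shifted domains $V(\Ll)+a$, invariance follows at once.
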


The transition between the two spaces is achieved as follows. One considers first the sub-set $\Xi_{\Ll_0}(s)$ of $\Omega_{\Ll_0}$ consisting of those $\Ll$'s with the property that there is a closed ball $\bar B_x(s)$ of radius $s$ centered at one of the points $x$ of $\Ll$ that contains the origin of $\RM^d$. Since all $\Ll \in \Omega_{\Ll_0}$  belong to ${\rm Del}_R^r(\RM^d)$, if we take $s <r/2$, then the point sets included in  $\Xi_{\Ll_0}(s)$ have exactly one point such that $0 \in \bar B_s(x)$. Then we can define the following canonical map
\begin{equation}
\mathfrak e : \Xi_{\Ll_0}(s) \to \Xi_{\Ll_0}, \quad \mathfrak e(\Ll) = \mathfrak t_x(\Ll),
\end{equation}
where $x \in \Ll$ is the unique point mentioned above and it is easy to check that $\mathfrak e$ respects the groupoid action
\begin{equation}
\mathfrak t_y\big (\mathfrak e(\Ll)\big ) = \mathfrak e \big ( \mathfrak t_y(\Ll) \big ), \quad y \in \Ll.
\end{equation}
Then the trace of $\bar \PM$ on $\Xi_{\Ll_0}(s)$ can be pushed forward through $\mathfrak e$ to obtain the corresponding groupoid invariant measure $\PM : = \mathfrak e_\ast \bar \PM$ on $\Xi_{\Ll_0}$. If $\bar \PM$ was ergodic in the first place, then, for any measurable $\phi : \Xi_{\Ll_0} \to \CM$,
\begin{equation}\label{Eq:Measures}
\begin{aligned}
\int_{\Xi_{\Ll_0}}{\rm d}\PM(\Ll) \, \phi(\Ll) & : = \int_{\Xi_{\Ll_0}(s)}{\rm d}\bar \PM(\Ll) \, (\phi\circ \mathfrak e)(\Ll) \\
& = \lim_{V \to \RM^d}\frac{1}{|V|} \int_V {\rm d} y \, (\phi\circ \mathfrak e)\big ( \mathfrak t_y(\Ll)\big ) \\
& = \lim_{V \to \RM^d}\frac{|\bar B_s(0)|}{|V|} \sum_{x \in \Ll \cap V} \phi(\mathfrak t_x \Ll),
\end{aligned}
\end{equation}
with the equality holding $\PM$-almost surely. We normalize $\PM$ such that $\PM(\Xi_{\Ll_0})=1$ by dividing with the factor $|\bar B_s(0)| \, {\rm Dens}_{\Ll}$, where 
\begin{equation}
{\rm Dens}_{\Ll} = \lim_{V \to \RM^d}\frac{1}{|V|} \sum_{x \in \Ll \cap V} 1
\end{equation} 
is the macroscopic density of the point set $\Ll$. This quantity is non-fluctuating, {\rm i.e.} it coincides with ${\rm Dens}_{\Ll_0}$, $\PM$-almost surely. In fact, we will fix the units such that ${\rm Dens}_{\Ll_0}=1$. With this choice,
the Borel measure on $\RM^d$ defined by
\begin{equation}
|V| : = \int_{\Omega_{\Ll_0}} {\rm d}\bar \PM(\Ll) \, |\Ll \cap V|,
\end{equation}
coincides with the Haar measure of $\RM^d$. This detail will become relevant for the proof of Lemma~\ref{Lem:Id2} containing an important geometric identity.

\subsection{Canonical many-body groupoids}
\label{Sec:CanonicalG}

The groupoid canonically associated to a Delone pattern, as presented in Proposition~\ref{Pro:StandardG1}, accepts the following generalizations over the many-body covers:

\begin{definition}[\cite{MeslandArxiv2021}] The groupoid associated to the dynamics of $N$ fermions over a fixed Delone set $\Ll_0$ consists of:
\begin{enumerate}[\quad \rm 1.] 
\item The topological space 
\begin{equation}
\label{eq: setGn}
\widehat \Gg_{\Ll_0}^{(N)} = \big \{(\zeta,\xi) \in \widehat \Omega_{\Ll_0}^{(N)}\times \widehat \Omega_{\Ll_0}^{(N)}, \ \Ll_\xi = \Ll_\zeta \in \Xi_{\Ll_0}, \  \chi_\xi(1) = 0 \big \},
\end{equation}
equipped with the topology inherited from $\widehat {\rm Del}_{(r,R)}^{(n)}(\RM^d) ^{\times 2}$ and the inversion map
\begin{equation}
(\zeta,\xi)^{-1} = \hat{\mathfrak{t}}_{\chi_\zeta(1)} (\xi,\zeta).
\end{equation}

\item The set of composable elements 
\begin{equation}
\Big \{\big (\hat{\mathfrak t}_{\chi_\zeta(1)}(\zeta',\zeta),(\zeta,\xi)\big ) \in \widehat \Gg_{\Ll_0}^{(N)} \times \widehat \Gg_{\Ll_0}^{(N)} \Big \},
\end{equation}
equipped with the composition map
\begin{equation}
\hat{\mathfrak t}_{\chi_\zeta(1)}(\zeta',\zeta) \cdot (\zeta,\xi) := (\zeta',\xi).
\end{equation}
\end{enumerate}
\end{definition}

\begin{remark}{\rm It is obvious that, for $N=1$, the above groupoid reduces to the standard groupoid $\Gg_{\Ll_0}$ of the pattern.
}$\Diamond$
\end{remark}

\begin{proposition}[\cite{MeslandArxiv2021}] $\widehat \Gg_{\Ll_0}^{(N)}$ is a second-countable, Hausdorff \'etale groupoid.
\end{proposition}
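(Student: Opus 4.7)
The plan is to verify the three properties separately, relying on the topology of $\widehat{\rm Del}_{(r,R)}^{(N)}(\RM^d)$ established earlier and on the fact that $\widehat \Gg_{\Ll_0}^{(N)}$ sits inside a product of such covers. Second-countability and Hausdorffness are routine topological inheritances, while the étale property is the substantive point and reduces to the rigidity of Delone sets under small perturbations.

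For second-countability and Hausdorffness, I recall that $(\Cc(\RM^d),D)$ is a compact metric space, hence second-countable and Hausdorff, and that $\widehat{\rm Del}_{(r,R)}^{(N)}(\RM^d)$ inherits these properties. This follows because the order cover $(\chi_V,V,\Ll)\mapsto(V,\Ll)$ is a finite $N!$-fold map and its target sits inside a second-countable Hausdorff space, so any basis can be lifted finite-to-one. Consequently $\widehat\Omega_{\Ll_0}^{(N)}\times\widehat\Omega_{\Ll_0}^{(N)}$ and its subspace $\widehat \Gg_{\Ll_0}^{(N)}$ are second-countable and Hausdorff. Continuity of inversion and composition is immediate from the continuity of $\hat{\mathfrak t}$ asserted in Proposition~\ref{Pro:RDAction1}.

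For the étale property, I would identify the source and range maps of $\widehat \Gg_{\Ll_0}^{(N)}$ as
\begin{equation*}
\hat{\mathfrak s}(\zeta,\xi) = \xi, \qquad \hat{\mathfrak r}(\zeta,\xi) = \hat{\mathfrak t}_{\chi_\zeta(1)}(\zeta),
\end{equation*}
both taking values in the unit space $\widehat\Xi_{\Ll_0}^{(N)}$ identified with $\{(\xi,\xi) \in \widehat \Gg_{\Ll_0}^{(N)}\}$. It suffices to show that $\hat{\mathfrak s}$ is a local homeomorphism; $\hat{\mathfrak r}$ then inherits the same property via the continuity of inversion. Given $(\zeta,\xi) \in \widehat \Gg_{\Ll_0}^{(N)}$, I would construct a bisection around it as follows: a small perturbation $\xi'$ of $\xi$ within $\widehat\Xi_{\Ll_0}^{(N)}$ amounts to perturbing $\Ll_\xi$ in the pattern metric together with $V_\xi\subset\Ll_\xi$; by the $r$-uniform discreteness, each point of the finite set $V_\zeta\subset\Ll_\xi$ admits a unique nearest neighbor in the perturbed $\Ll_{\xi'}=\Ll_{\zeta'}$, yielding a unique labeled lift $\zeta'$ with $\chi_{\zeta'}$ inherited from $\chi_\zeta$. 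The assignment $\xi'\mapsto(\zeta',\xi')$ is then the required continuous two-sided local inverse to $\hat{\mathfrak s}$.

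The main obstacle is making this local lifting rigorous within the topology of $\widehat{\rm Del}_{(r,R)}^{(N)}(\RM^d)$, which is not spelled out explicitly in the excerpt. Concretely, one has to exhibit a neighborhood basis at $(\chi_V,V,\Ll)$ consisting of sets parameterized by products of small open balls in $\RM^d$ around each point of $V$, intersected with pattern-metric neighborhoods of $\Ll$ in ${\rm Del}_R^r(\RM^d)$, with the ball radii chosen strictly smaller than $r/2$ so that the combinatorial structure of the tuple is preserved. Once such a basis is in hand, these neighborhoods transfer directly to open bisections of $\widehat \Gg_{\Ll_0}^{(N)}$, establishing that $\hat{\mathfrak s}$ (and hence $\hat{\mathfrak r}$) is a local homeomorphism and therefore that $\widehat \Gg_{\Ll_0}^{(N)}$ is étale.
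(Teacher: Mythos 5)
The paper itself offers no proof of this proposition: it is imported verbatim from the cited reference \cite{MeslandArxiv2021}, and the surrounding text only remarks on its consequences (existence of a Haar system, separability of the groupoid $C^\ast$-algebra). So there is no in-paper argument to compare yours against, and your proposal has to be judged on its own. On that basis it is essentially sound and follows the route one would expect. Second-countability and Hausdorffness do come for free: the order cover lands in a subspace of the metric space $\Kk(\RM^d)\times {\rm Del}_R^r(\RM^d)$, the fibers of the $N$-body cover over a fixed $\Ll$ are countable (finitely many orderings of each of countably many $N$-point subsets of $\Ll$), and $\widehat\Gg_{\Ll_0}^{(N)}$ is a subspace of the product. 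Your étale argument is the correct one, and the key quantitative point is in place: if two candidate lifts of a point $v\in V_\zeta$ both lie in the $r$-uniformly discrete set $\Ll_{\xi'}$ and both sit within $r/2$ of $v$, they coincide, which gives injectivity of $\hat{\mathfrak s}$ on a small enough neighborhood; relative denseness plus convergence in the pattern metric on a bounded region containing $V_\zeta$ gives existence and continuity of the lift, hence a genuine bisection. Passing from $\hat{\mathfrak s}$ to $\hat{\mathfrak r}$ via inversion is legitimate. The gap you flag yourself --- that the topology of $\widehat{\rm Del}_{(r,R)}^{(N)}(\RM^d)$ is never made explicit --- is real, but it is precisely the detail this paper also declines to supply, deferring to \cite{MeslandArxiv2021}; to close it you would write down the neighborhood basis you describe (products of balls of radius less than $r/2$ about the points of $V$, intersected with a $D$-neighborhood of $\Ll$) and check it is compatible with the covering-map description given in the paper. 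One small omission worth a sentence in a polished write-up: local compactness of the unit space $\widehat\Xi_{\Ll_0}^{(N)}$ (asserted in Remark~\ref{Re:MBCoverXi}) is part of the standard definition of an étale groupoid and should be recorded, though it follows immediately from the cover having discrete fibers over a compact base.
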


\begin{remark}{\rm Of course, the above statement is good news because an \'etale groupoid comes with a standard system of Haar measures, hence with a standard groupoid $C^\ast$-algebra, which is necessarily separable \cite{SimsSzaboWilliamsBook2020}.  As such, the $K$-theories of these algebras are countable, a fact that gives us hopes about classifying the topological dynamical phases of $N$ fermions hopping over a given lattice.
}$\Diamond$
\end{remark}

There is an alternative characterization of the many-body groupoids, based on the following construction:

\begin{definition}[\cite{WilliamsBook2}] Let $\Gg$ be a locally compact Hausdorff groupoid with open range map. Suppose that $Z$ is locally compact Hausdorff and that $f: Z \to \Gg^{(0)}$ is  a continuous open map. Then
\begin{equation}
\Gg[Z] : =\big \{ (z,\gamma,w)\in Z \times \Gg \times Z: 
 f(z)=r(\gamma) \ {\rm and} \ s(\gamma) = f(w) \big \}
 \end{equation}
 is a topological groupoid when considered with the natural operations 
\begin{equation}
 (z,\gamma,w) (w,\eta,x)=(z,\gamma \eta,x) \ {\rm and} \ (z,\gamma,w)^{-1}= (w,\gamma^{-1},z),
\end{equation}
 and the topology inherited from $Z \times \Gg \times Z$.
 \end{definition}
 
\begin{remark}{\rm The space of units for $\Gg[Z]$ is
 $$
 \Gg[Z]^{(0)} = \big \{ (z,f(z),z), \ z \in Z\big \},
 $$
 hence it can be naturally identified with $Z$. For this reason, the map $f$ is called a blow-up of the unit space and the groupoid $\Gg[Z]$ is referred to as the blow-up of $\Gg$ through $f$.
 }$\Diamond$
 \end{remark}
 
Now, if $\hat{\mathfrak s}$ and $\hat{\mathfrak r}$ denote the source and the range maps of the groupoid $\widehat \Gg_{\Ll_0}^{(N)}$, respectively, then
\begin{equation}\label{Eq:SRM}
\hat{\mathfrak s}\big (\zeta,\xi\big ) = (\xi,\xi), \quad 
\hat{\mathfrak r}\big (\zeta,\xi\big ) = \hat{\mathfrak{t}}_{\chi_\zeta(1)} (\zeta,\zeta).
\end{equation}
As such, the space of units of the groupoid $\widehat \Gg_{\Ll_0}^{(N)}$ coincides with the $N$-body cover $\widehat \Xi_{\Ll_0}^{(N)}$ of the transversal, introduced in Eq.~\eqref{Eq:XiCover}.

 \begin{proposition}[\cite{MeslandArxiv2021}]\label{Pro:GMor} Consider the canonical groupoid $\Gg_{\Ll_0}$. Then  
 \begin{equation}\label{Eq:BlowMap}
 \widehat \Xi_{\Ll_0}^{N} \ni \xi \mapsto \mathfrak u(\xi): = \Ll_\xi \in \Xi_{\Ll_0}
 \end{equation}
 is a blow-up map for $\Gg_{\Ll_0}$ and the associated blow-up groupoid can be presented as
\begin{equation}
\begin{aligned}
\widetilde \Gg_{\Ll_0}^{(N)} = \big \{ \big (\zeta, (x,\Ll), \xi\big),&  \ (x,\Ll) \in \Gg_{\Ll_0}, \  \xi \in \mathfrak a_N^{-1}(\Ll), \\
&  \ \zeta \in \mathfrak a_N^{-1}\big(\mathfrak t_x(\Ll)\big), \ \chi_\xi(1) = \chi_\zeta(1)=0 \big \}.
\end{aligned}
\end{equation}
Furthermore, the map
\begin{equation}\label{Eq:LastIso}
\widehat \Gg_{\Ll_0}^{(N)} \ni (\zeta,\xi) \mapsto \big (\mathfrak t_{\chi_\zeta(1)}(\zeta), \big (\chi_\zeta(1),\Ll_\xi \big ),\xi  \big )\in \widetilde \Gg_{\Ll_0}^{(N)}
\end{equation}
is an isomorphism of topological groupoids.
\end{proposition}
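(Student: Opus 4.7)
The proposition splits naturally into two parts: that $\mathfrak{u}$ is a blow-up map for $\Gg_{\Ll_0}$, yielding the set-theoretic description of $\widetilde{\Gg}_{\Ll_0}^{(N)}$ given in the statement, and that~\eqref{Eq:LastIso} defines a topological groupoid isomorphism $\Phi: \widehat{\Gg}_{\Ll_0}^{(N)} \to \widetilde{\Gg}_{\Ll_0}^{(N)}$.

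\textbf{Step 1 (blow-up identification).} First I would verify that $\mathfrak{u}:\widehat{\Xi}_{\Ll_0}^{(N)} \to \Xi_{\Ll_0}$, $\xi \mapsto \Ll_\xi$, is continuous and open, since the blow-up construction requires these properties. Continuity is immediate from the cover topology, where forgetting $(\chi_\xi,V_\xi)$ is continuous by construction. For openness, the fact that $\mathfrak{a}_N$ is an (infinite) cover implies it is a local homeomorphism; combined with uniform $r$-discreteness of Delone sets (so that small perturbations of $\Ll$ admit canonical $V$-assignments near a fixed $V_0 \subset \Ll_0$), the restriction $\mathfrak{u}$ inherits the local homeomorphism property, and in particular is open. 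With $\mathfrak{u}$ a bona fide blow-up map, I would unpack the definition of $\Gg_{\Ll_0}[\widehat{\Xi}_{\Ll_0}^{(N)}]$ using the source/range formulas of Proposition~\ref{Pro:StandardG1}: a triple $(z,(x,\Ll),w)$ is admitted iff $\Ll = \Ll_w$ and $\Ll_z = \mathfrak{t}_x(\Ll_w)$, with the boundary conditions $\chi_z(1) = \chi_w(1) = 0$ built into $\widehat{\Xi}_{\Ll_0}^{(N)}$. This reproduces exactly the presentation of $\widetilde{\Gg}_{\Ll_0}^{(N)}$ given in the statement.

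\textbf{Step 2 (well-definedness and bijectivity of $\Phi$).} Well-definedness of $\Phi(\zeta,\xi) := \bigl(\hat{\mathfrak{t}}_{\chi_\zeta(1)}(\zeta),(\chi_\zeta(1),\Ll_\xi),\xi\bigr)$ reduces to three quick checks: (i) $\chi_{\hat{\mathfrak{t}}_{\chi_\zeta(1)}(\zeta)}(1) = 0$, since translation shifts $\chi$ by $-\chi_\zeta(1)$; (ii) $\chi_\zeta(1) \in V_\zeta \subset \Ll_\zeta = \Ll_\xi$, so $(\chi_\zeta(1),\Ll_\xi) \in \Gg_{\Ll_0}$; (iii) $\Ll_{\hat{\mathfrak{t}}_{\chi_\zeta(1)}(\zeta)} = \mathfrak{t}_{\chi_\zeta(1)}(\Ll_\xi)$, so the middle entry's range matches. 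The inverse is forced to be $\Phi^{-1}(\zeta',(x,\Ll),\xi) = (\hat{\mathfrak{t}}_{-x}(\zeta'),\xi)$; using $\chi_{\zeta'}(1) = 0$ from the blow-up presentation, one confirms $\chi_{\hat{\mathfrak{t}}_{-x}(\zeta')}(1) = x$ and $\Ll_{\hat{\mathfrak{t}}_{-x}(\zeta')} = \Ll = \Ll_\xi$, so $\Phi^{-1}$ lands in $\widehat{\Gg}_{\Ll_0}^{(N)}$ and is a two-sided inverse.

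\textbf{Step 3 (groupoid structure and topology).} For compatibility with multiplication, apply $\Phi$ to a composable pair $\bigl(\hat{\mathfrak{t}}_{\chi_\zeta(1)}(\zeta',\zeta),(\zeta,\xi)\bigr)$ with product $(\zeta',\xi)$. The identity $\hat{\mathfrak{t}}_{\chi_{\zeta'}(1) - \chi_\zeta(1)} \circ \hat{\mathfrak{t}}_{\chi_\zeta(1)} = \hat{\mathfrak{t}}_{\chi_{\zeta'}(1)}$ collapses the first entry of the $\Phi$-image of the left factor to $\hat{\mathfrak{t}}_{\chi_{\zeta'}(1)}(\zeta')$, while the $\Gg_{\Ll_0}$-composition rule $(x',\mathfrak{t}_x(\Ll))(x,\Ll) = (x'+x,\Ll)$ collapses the middle entries to $(\chi_{\zeta'}(1),\Ll_\xi)$; together these produce $\Phi(\zeta',\xi)$, as required. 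Compatibility with inversion is analogous and shorter. Continuity of both $\Phi$ and $\Phi^{-1}$ follows from continuity of the $\RM^d$-action (Proposition~\ref{Pro:RDAction1}) and of the evaluation $\zeta \mapsto \chi_\zeta(1)$ built into the cover topology. The algebraic bookkeeping in this step is the most error-prone part, but the conceptually hardest piece is Step 1: making openness of $\mathfrak{u}$ rigorous requires descending into the details of the $N$-body cover topology which are only sketched in the paper, and one must leverage $r$-discreteness carefully to produce honest local charts on which $\mathfrak{u}$ is literally a local homeomorphism.
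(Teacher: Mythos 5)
The paper does not actually prove this proposition: it is imported verbatim from \cite{MeslandArxiv2021}, so there is no in-paper argument to compare yours against. Judged on its own, your verification is correct and is the natural direct one: unpacking the blow-up definition with $\mathfrak s(x,\Ll)=\Ll$, $\mathfrak r(x,\Ll)=\mathfrak t_x(\Ll)$ gives exactly the stated presentation of $\widetilde\Gg_{\Ll_0}^{(N)}$, your formula for $\Phi^{-1}$ is the right one, and the computation in Step 3 (using $\hat{\mathfrak t}_{\chi_{\zeta'}(1)-\chi_\zeta(1)}\circ\hat{\mathfrak t}_{\chi_\zeta(1)}=\hat{\mathfrak t}_{\chi_{\zeta'}(1)}$ together with the composition rule of $\Gg_{\Ll_0}$) checks out. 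The one place I would tighten is the openness argument in Step 1: rather than speaking loosely of $\mathfrak u$ ``inheriting'' the local homeomorphism property, note that $\widehat\Xi_{\Ll_0}^{(N)}=\{\xi\in\mathfrak a_N^{-1}(\Xi_{\Ll_0}):\chi_\xi(1)=0\}$ is relatively \emph{open} in $\mathfrak a_N^{-1}(\Xi_{\Ll_0})$ --- if $\chi_\xi(1)=0$ and $\xi'$ is close to $\xi$ with $0\in\Ll_{\xi'}$, then $\chi_{\xi'}(1)$ is a point of $\Ll_{\xi'}$ within distance $r$ of the point $0\in\Ll_{\xi'}$, hence equals $0$ by $r$-uniform discreteness --- and the restriction of the open map $\mathfrak a_N$ to an open subset is open, with surjectivity onto $\Xi_{\Ll_0}$ clear by choosing any ordered $N$-point subset containing $0$ first. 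That turns your sketch into a complete argument without needing any further excavation of the cover topology.
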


\section{Canonical Groupoid $C^\ast$-Algebras and Associated Structures} 
\label{Sec:GAlg}

The groupoids introduced in the previous section supply canonical $C^\ast$-algebras \cite{RenaultBook}, which will be briefly reviewed here. The focus of the section, however, is on natural 2-actions of the permutation groups and identification of certain bi-equivariant sub-algebras that are directly related to the dynamics of many fermions. As we shall see, these sub-algebras can be endowed with differential structures induced by natural $\RM^d$-actions and canonical semi-finite traces that are invariant against these actions. This enables us to define standard cyclic cocycles that are connected with the physical transport coefficients of many-fermion systems.

\subsection{Groupoid $C^\ast$-algebras and their regular representations} It will be useful to spell out the algebraic operations of the reduced groupoid algebras. For this, we remind that, on the core $C_c\big (\widehat \Gg_{\Ll_0}^{(N)}\big )$ containing the compactly supported $\CM$-valued continuous functions on $\widehat \Gg_{\Ll_0}^{(N)}$, the associative multiplication works as
\begin{equation}
(f_1 \ast f_2)(\zeta,\xi) : =\sum_{(\zeta',\xi')\in \hat{\mathfrak r}^{-1}\big(\hat{\mathfrak r}(\zeta,\xi)\big )} f_{1}(\zeta',\xi') \cdot f_{2}\big ((\zeta',\xi')^{-1}(\zeta,\xi)\big ).
\end{equation}
From~\eqref{Eq:SRM}, one finds
\begin{equation}
\hat{\mathfrak r}^{-1}\big(\hat{\mathfrak r}(\zeta,\xi)\big ) = \big \{ \hat{\mathfrak t}_{\chi_\eta(1)}(\zeta,\eta), \ \eta \in \mathfrak a_N^{-1}(\Ll_\xi) \big \}
\end{equation}
and, since $\big (\hat{\mathfrak t}_{\chi_\eta(1)}(\zeta,\eta)\big )^{-1} =\hat{\mathfrak t}_{\chi_\zeta(1)}(\eta,\zeta)$, the multiplication can be written more explicitly as
\begin{equation}\label{Eq:Conv}
(f_1 \ast f_2)(\zeta,\xi) =\sum_{\eta \in \mathfrak a_N^{-1}(\Ll_\xi)} f_{1}\big(\hat{\mathfrak t}_{\chi_\eta(1)}(\zeta,\eta)\big)  \cdot f_{2} (\eta, \xi ).
\end{equation}
The $\ast$-operation is
\begin{equation}\label{Eq:Inv}
f^\ast(\zeta,\xi) = f\big((\zeta,\xi)^{-1}\big )^\ast=f\big (\hat{\mathfrak t}_{\chi_\zeta(1)}(\xi,\zeta) \big)^\ast.
\end{equation}

Now, if $f \in C_c\big (\widehat \Gg_{\Ll_0}^{(N)}\big)$ and $\varphi \in C_0\big (\widehat\Xi_{\Ll_0}^{(N)}\big)$, the $C^\ast$-algebra of $\CM$-valued continuous  and compactly supported functions over the space of units, then the product
\begin{equation}
(f\, \varphi)(\zeta,\xi) = f(\zeta,\xi) \cdot \varphi\big (\hat{\mathfrak s}(\zeta,\xi)\big )=f(\zeta,\xi) \cdot \varphi (\xi).
\end{equation}
makes $C_c\big (\widehat\Gg_{\Ll_0}^{(N)}\big)$ into a right module over $C_0\big (\widehat \Xi_{\Ll_0}^{(N)}\big)$ and the restriction map 
\begin{equation}
\rho :C_c\big (\widehat \Gg_{\Ll_0}^{(N)}\big ) \rightarrow C_0 \big(\widehat\Xi_{\Ll_0}^{(N)}\big ), \ \ (\rho f)(\xi) = f(\xi,\xi),
\end{equation}
supplies an expectation. By composing with the evaluation maps
\begin{equation}
j_{\eta}: C\big (\widehat \Xi_{\Ll_0}^{(N)}\big ) \to \CM, \ \ j_\eta(\phi) = \phi(\eta), \quad \eta \in \widehat \Xi_{\Ll_0}^{(N)},
\end{equation}
one obtains a family of states $\rho_\eta = j_\eta \circ \rho$ on $C^\ast_r\big (\widehat \Gg_{\Ll_0}^{(N)}\big) $, indexed by the transversal $\widehat \Xi_{\Ll_0}^{(N)}$. The GNS-representations corresponding to these states supply the left-regular representations of $C^\ast_r\big (\widehat \Gg_{\Ll_0}^{(N)}\big)$. They are carried by the Hilbert spaces 
\begin{equation}
\Hh_\eta : =  \ell^2\big (\hat{\mathfrak s}^{-1}(\eta)\big ) = \ell^2\big(\mathfrak a_N^{-1}(\Ll_\eta) \big ),
\end{equation}
and take the explicit form
\begin{equation}\label{Eq:PiL}
[\pi_\eta(f) \psi](\zeta) = \sum_{\xi \in \mathfrak a_N^{-1}(\Ll_\eta)}f\big(\hat{\mathfrak t}_{\chi_\xi(1)}(\zeta,\xi)\big)  \cdot \psi (\xi).
\end{equation}
We recall that these left regular representations supply the $C^\ast$-norm on $C^\ast_r\big (\widehat \Gg_{\Ll_0}^{(N)}\big)$, 
\begin{equation}
\|f\| = \sup_{\eta \in \widehat \Xi_{\Ll_0}^{(N)}} \| \pi_\eta(f)\|.
\end{equation}
These representations will be later compared with the Fock representation of the physical Hamiltonians (see section~\ref{Sec:PhysHam}).

\subsection{The 2-actions and the bi-equivariant groupoid algebras}
\label{Sec:2Action}

There is a direct consequence of Proposition~\ref{Pro:GMor}:

\begin{corollary}[\cite{WilliamsBook2},~Th.~2.52] The groupoid $C^{*}$-algebras $C^\ast_r(\Gg_{\Ll_0})$ and $C^\ast_r\big(\widehat \Gg_{\Ll_0}^{(N)}\big)$ are Morita equivalent and, as a consequence, their $K$-theories coincide.
\end{corollary}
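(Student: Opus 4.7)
The plan is to invoke the identification from Proposition~\ref{Pro:GMor} of $\widehat{\Gg}_{\Ll_0}^{(N)}$ with the blow-up groupoid $\widetilde{\Gg}_{\Ll_0}^{(N)} = \Gg_{\Ll_0}\big[\widehat{\Xi}_{\Ll_0}^{(N)}\big]$, and then to appeal to the general principle that a groupoid and its blow-up through a suitable map are Morita equivalent. Specifically, the cited Theorem~2.52 of Williams asserts that if $\Gg$ is a locally compact Hausdorff groupoid with open range map and $f:Z\to \Gg^{(0)}$ is a continuous open surjection, then $\Gg[Z]$ and $\Gg$ are equivalent as topological groupoids in the Muhly--Renault--Williams sense, yielding a $(\Gg,\Gg[Z])$-equivalence bimodule.

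The first verification I would carry out is that the blow-up map $\mathfrak u:\widehat{\Xi}_{\Ll_0}^{(N)}\to \Xi_{\Ll_0}$, $\xi \mapsto \Ll_\xi$, from \eqref{Eq:BlowMap} satisfies all the hypotheses. Continuity is built into the definition of the topology on $\widehat{\Xi}_{\Ll_0}^{(N)}$ inherited from $\widehat{\mathrm{Del}}^{(N)}_{(r,R)}(\RM^d)$. Surjectivity follows because, given any $\Ll\in\Xi_{\Ll_0}$, we have $0\in\Ll$ and so one can form $\xi=(\chi_V,V,\Ll)$ with $V$ any $N$-point subset of $\Ll$ containing $0$ and $\chi_V$ any bijection with $\chi_V(1)=0$; then $\mathfrak u(\xi)=\Ll$. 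Openness is the only nontrivial point: one checks it by observing that, locally around $\xi$, the map $\mathfrak u$ factors through the order cover $(\chi_V,V,\Ll)\mapsto(V,\Ll)$, which is a finite-sheeted covering, composed with the local trivialization of the pointed-pattern fibration $(V,\Ll)\mapsto\Ll$ furnished by the pattern topology near a Delone set. Since both factors are open, so is $\mathfrak u$. The range map of $\Gg_{\Ll_0}$ is open because $\Gg_{\Ll_0}$ is \'etale.

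Second, once the hypotheses are verified, Williams's Theorem~2.52 produces a $(\Gg_{\Ll_0},\widetilde{\Gg}_{\Ll_0}^{(N)})$-equivalence with linking space essentially given by $\widehat{\Xi}_{\Ll_0}^{(N)}\times_{\mathfrak u,\mathfrak r}\Gg_{\Ll_0}$. Transporting this equivalence through the groupoid isomorphism~\eqref{Eq:LastIso} of Proposition~\ref{Pro:GMor} yields an equivalence between $\Gg_{\Ll_0}$ and $\widehat{\Gg}_{\Ll_0}^{(N)}$.

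Finally, by the Muhly--Renault--Williams theorem, equivalence of second-countable, locally compact, Hausdorff groupoids (both \'etale here, with canonical Haar systems) induces a strong Morita equivalence of their reduced $C^\ast$-algebras $C^\ast_r(\Gg_{\Ll_0})$ and $C^\ast_r\big(\widehat{\Gg}_{\Ll_0}^{(N)}\big)$. Invariance of K-theory under strong Morita equivalence of separable $C^\ast$-algebras then yields the asserted isomorphism of K-groups. The only subtle point in the whole argument is establishing openness of $\mathfrak u$; everything else is a direct application of general groupoid machinery to the concrete identification already secured in the previous section.
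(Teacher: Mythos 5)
Your argument is exactly the route the paper takes: the corollary is presented as a direct consequence of Proposition~\ref{Pro:GMor} (the identification of $\widehat \Gg_{\Ll_0}^{(N)}$ with the blow-up of $\Gg_{\Ll_0}$ through $\mathfrak u$) combined with Williams's Theorem~2.52 and the Muhly--Renault--Williams equivalence theorem, followed by Morita invariance of $K$-theory. Your additional verification that $\mathfrak u$ is a continuous open surjection is consistent with, and fills in details of, the cited Proposition~\ref{Pro:GMor}, so the proposal is correct and essentially identical in approach.
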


The interest is, however, not entirely on $C^\ast_r\big(\widehat\Gg_{\Ll_0}^{(N)}\big)$, but rather on a bi-equivariant sub-algebra discussed next. We first recall that every topological groupoid $\Gg$ comes with a topological group of bisections $\mathcal{S}(\mathcal{G})$, consisting of continuous maps
\begin{equation*}
\mathcal{S}(\mathcal{G}):=\left\{b:\mathcal{G}^{(0)}\to \mathcal{G}: \mathfrak s\circ b=\textnormal{Id},\quad \mathfrak r\circ b \textnormal{ is a homeomorphism}\right\}
\end{equation*}
and a standard group structure. Above, $\Gg^{(0)}$ is the space of units, and $\mathfrak s$, $\mathfrak r$ are the source and range maps of the groupoid, respectively.

\begin{definition}[\cite{MeslandArxiv2021}]\label{Def:2ActionGroupoid} 
Let $\Gamma$ be a discrete group and $\mathcal{G}$ a locally compact Hausdorff groupoid. A 2-action of $\Gamma$ on $\mathcal{G}$ is a group homomorphism $\alpha:\Gamma\to \mathcal{S}(\mathcal{G})$.
\end{definition}  

\begin{proposition}[\cite{MeslandArxiv2021}] A 2-action of $\Gamma$ on $\mathcal{G}$ automatically induces commuting left and right actions of $\Gamma$ on $\Gg$ via:
\begin{equation}\label{Eq:2Action1}
\gamma_{1}\cdot g \cdot\gamma_{2}:= \alpha_{\gamma_{1}}\big(\mathfrak r(g)\big)\cdot g \cdot\alpha_{\gamma_{2}^{-1}}\big(\mathfrak s(g)\big)^{-1}.
\end{equation}
\end{proposition}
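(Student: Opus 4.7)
The plan is to verify directly that the formula defines, separately, a left and a right action, and that these two commute, by unpacking the group law of $\Ss(\mathcal{G})$ and the groupoid axioms. First I would recall the multiplication in the bisection group: for $b_1,b_2\in\Ss(\mathcal{G})$, the product is the pointwise convolution $(b_1\cdot b_2)(u)=b_1\bigl(\mathfrak{r}\circ b_2(u)\bigr)\cdot b_2(u)$, and the identity bisection $\mathbf{1}$ sends each $u\in\mathcal{G}^{(0)}$ to itself as a unit arrow. Then I would check well-definedness: since $\alpha_{\gamma_1}$ is a section of $\mathfrak{s}$, we have $\mathfrak{s}\bigl(\alpha_{\gamma_1}(\mathfrak{r}(g))\bigr)=\mathfrak{r}(g)$, so the left product $\alpha_{\gamma_1}(\mathfrak{r}(g))\cdot g$ is composable; dually, $\mathfrak{r}\bigl(\alpha_{\gamma_2^{-1}}(\mathfrak{s}(g))^{-1}\bigr)=\mathfrak{s}(g)$, so the right product is also composable.

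Next I would establish the left action axioms. The identity axiom $e\cdot g=g$ follows because $\alpha$ is a homomorphism, so $\alpha_{e}=\mathbf{1}$ and $\alpha_{e}(\mathfrak{r}(g))=\mathfrak{r}(g)$ is a unit with $\mathfrak{r}(g)\cdot g=g$. For associativity $(\gamma_{1}\gamma_{2})\cdot g=\gamma_{1}\cdot(\gamma_{2}\cdot g)$, the key calculation uses the bisection product law applied to $\alpha_{\gamma_{1}\gamma_{2}}=\alpha_{\gamma_{1}}\cdot\alpha_{\gamma_{2}}$, which expands as
\begin{equation*}
\alpha_{\gamma_{1}\gamma_{2}}(\mathfrak{r}(g))=\alpha_{\gamma_{1}}\bigl(\mathfrak{r}(\alpha_{\gamma_{2}}(\mathfrak{r}(g)))\bigr)\cdot\alpha_{\gamma_{2}}(\mathfrak{r}(g)),
\end{equation*}
and this exactly matches $\alpha_{\gamma_{1}}(\mathfrak{r}(\gamma_{2}\cdot g))\cdot\alpha_{\gamma_{2}}(\mathfrak{r}(g))$ once one notes $\mathfrak{r}(\gamma_{2}\cdot g)=\mathfrak{r}\bigl(\alpha_{\gamma_{2}}(\mathfrak{r}(g))\bigr)$. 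The right-action axioms are handled analogously: using the antihomomorphism $\gamma\mapsto\alpha_{\gamma^{-1}}(\cdot)^{-1}$ and the identity $\mathfrak{s}(g\cdot\gamma_{1})=\mathfrak{r}\bigl(\alpha_{\gamma_{1}^{-1}}(\mathfrak{s}(g))\bigr)$, one checks $g\cdot(\gamma_{1}\gamma_{2})=(g\cdot\gamma_{1})\cdot\gamma_{2}$ by expanding $\alpha_{(\gamma_{1}\gamma_{2})^{-1}}=\alpha_{\gamma_{2}^{-1}}\cdot\alpha_{\gamma_{1}^{-1}}$ and inverting.

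Finally, the commutativity of the two actions reduces to the observation that left multiplications in $\mathcal{G}$ do not affect source objects and right multiplications do not affect range objects. Concretely, $\mathfrak{r}(g\cdot\gamma_{2})=\mathfrak{r}(g)$ and $\mathfrak{s}(\gamma_{1}\cdot g)=\mathfrak{s}(g)$, so both $\gamma_{1}\cdot(g\cdot\gamma_{2})$ and $(\gamma_{1}\cdot g)\cdot\gamma_{2}$ evaluate to the common triple product
\begin{equation*}
\alpha_{\gamma_{1}}(\mathfrak{r}(g))\cdot g\cdot\alpha_{\gamma_{2}^{-1}}(\mathfrak{s}(g))^{-1},
\end{equation*}
which is unambiguously associative in $\mathcal{G}$.

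The main obstacle is not conceptual but notational: keeping track of which bisection is evaluated at which unit and correctly using the convolution product in $\Ss(\mathcal{G})$ (rather than pointwise multiplication). Once this bookkeeping is set up cleanly, all four axioms collapse to straightforward applications of the section property $\mathfrak{s}\circ\alpha_{\gamma}=\mathrm{Id}$ and the homomorphism property of $\alpha$.
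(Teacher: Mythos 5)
Your verification is correct and complete: the composability checks via the section property $\mathfrak s\circ\alpha_\gamma=\mathrm{Id}$, the use of the convolution product $(b_1\cdot b_2)(u)=b_1\bigl(\mathfrak r(b_2(u))\bigr)\cdot b_2(u)$ on $\mathcal{S}(\mathcal{G})$ to get associativity of the left action, the dual anti-homomorphism argument for the right action, and the observation that $\mathfrak r(g\cdot\gamma_2)=\mathfrak r(g)$ and $\mathfrak s(\gamma_1\cdot g)=\mathfrak s(g)$ to get commutativity all go through exactly as you describe. The paper states this proposition without proof, citing Mesland--Prodan, and the direct verification you outline is precisely the standard argument one would supply there; nothing further is needed.
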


\begin{proposition}[\cite{MeslandArxiv2021}]\label{Pro:2Action} Let $s \in\mathcal{S}_{N}$ be a permutation. The formula
\begin{equation}
\label{eq: 2actiongroupoid}
\alpha_s(\xi):=\big(\Lambda_s(\xi),\xi\big)
\end{equation}
defines a homomorphism $\alpha:\mathcal S_{N}\to \mathcal{S}(\mathcal{G}_{N})$ and thus a 2-action of $\mathcal{S}_{N}$ on $\widehat \Gg_{\Ll_0}^{(N)}$. The induced left and right actions, denoted by $s_1 \cdot (\xi,\zeta) \cdot s_2$, for $s_i \in \mathcal{S}_N$,  are given by
\begin{equation}\label{Eq:2Action2}
s_1 \cdot (\zeta,\xi) \cdot s_2  = \hat{\mathfrak{t}}_{\chi_{\xi}\circ s_2(1)}\big (\Lambda_{s_1}(\zeta),\Lambda_{s_2}^{-1}(\xi)\big ).
\end{equation}
\end{proposition}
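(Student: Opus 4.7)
The plan is to verify the three assertions in order: that each $\alpha_s$ is a genuine bisection, that $s\mapsto\alpha_s$ is a homomorphism, and that the induced two-sided action of \eqref{Eq:2Action1} takes the explicit form \eqref{Eq:2Action2}. The elementary fact underlying all three steps is that the permutation action $\Lambda_s$ and the translation $\hat{\mathfrak t}_x$ on $\widehat{\mathrm{Del}}_{(r,R)}^{(N)}(\RM^d)$ commute, because the former only relabels $\chi$ while the latter only shifts the Euclidean data; furthermore $\Lambda_s\Lambda_{s'}=\Lambda_{ss'}$. Every identity below reduces to pushing these commutations past the composition rule $\hat{\mathfrak t}_{\chi_\zeta(1)}(\zeta',\zeta)\cdot(\zeta,\xi)=(\zeta',\xi)$.

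For the bisection property, note that $(\Lambda_s(\xi),\xi)$ has $\Ll_{\Lambda_s(\xi)}=\Ll_\xi$ and $\chi_\xi(1)=0$, so $\alpha_s(\xi)\in\widehat\Gg_{\Ll_0}^{(N)}$ for every unit $\xi$. Formula \eqref{Eq:SRM} gives $\hat{\mathfrak s}\circ\alpha_s=\mathrm{Id}$, and a short computation using $\chi_{\Lambda_s(\xi)}(1)=\chi_\xi(s^{-1}(1))$ yields $\hat{\mathfrak r}\circ\alpha_s(\xi)=\hat{\mathfrak t}_{\chi_\xi(s^{-1}(1))}(\Lambda_s(\xi))$, which is continuous in $\xi$. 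A direct verification shows that $\hat{\mathfrak r}\circ\alpha_{s^{-1}}$ is its inverse, so $\hat{\mathfrak r}\circ\alpha_s$ is a homeomorphism of $\widehat\Xi_{\Ll_0}^{(N)}$ and $\alpha_s\in\mathcal S(\widehat\Gg_{\Ll_0}^{(N)})$. For the homomorphism property, one computes $(\alpha_s\cdot\alpha_{s'})(\xi)=\alpha_s\big(\hat{\mathfrak r}(\alpha_{s'}(\xi))\big)\cdot\alpha_{s'}(\xi)$; writing the left factor as $\hat{\mathfrak t}_{\chi_{\Lambda_{s'}(\xi)}(1)}(\Lambda_s(\Lambda_{s'}(\xi)),\Lambda_{s'}(\xi))$ by commuting $\Lambda_s$ past $\hat{\mathfrak t}$, the composition rule immediately delivers $(\Lambda_s\Lambda_{s'}(\xi),\xi)=\alpha_{ss'}(\xi)$.

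For the explicit form of the bi-action, apply \eqref{Eq:2Action1} with $g=(\zeta,\xi)$. The range of $g$ is the unit $\hat{\mathfrak t}_{\chi_\zeta(1)}(\zeta)$; left-multiplying by $\alpha_{s_1}$ of this unit and invoking the composition rule, after the same commutation trick, yields $(\Lambda_{s_1}(\zeta),\xi)$. The source of $g$ is $\xi$, and the inversion formula \eqref{Eq:Inv} gives $\alpha_{s_2^{-1}}(\xi)^{-1}=\hat{\mathfrak t}_{\chi_\xi(s_2(1))}(\xi,\Lambda_{s_2^{-1}}(\xi))$. Composing on the right with this arrow forces the canonical presentation $(\Lambda_{s_1}(\zeta),\xi)=\hat{\mathfrak t}_{-\chi_\xi(s_2(1))}\bigl(\hat{\mathfrak t}_{\chi_\xi(s_2(1))}(\Lambda_{s_1}(\zeta)),\hat{\mathfrak t}_{\chi_\xi(s_2(1))}(\xi)\bigr)$, and one more application of the composition rule together with the commutation of $\hat{\mathfrak t}$ and $\Lambda$ gives exactly $\hat{\mathfrak t}_{\chi_\xi(s_2(1))}(\Lambda_{s_1}(\zeta),\Lambda_{s_2^{-1}}(\xi))$, as claimed.

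The main obstacle, hence where I would allocate care, is the bookkeeping of the successive translations by $\chi$-values: the shift $\chi_\xi(s^{-1}(1))$ produced by the range map, the shift $\chi_\xi(s_2(1))$ produced by inversion, and the sign that appears when one rewrites an arrow in the canonical form $\hat{\mathfrak t}_{\chi_\zeta(1)}(\cdot,\zeta)$ needed by the composition rule. Everything is mechanical once one commits to this canonical form, but there is ample room for sign errors, and I would double-check by specializing to $N=2$ before writing the general argument.
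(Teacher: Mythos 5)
Your proposal is correct: the bisection, homomorphism, and bi-action computations all check out, and in particular the key bookkeeping identities $\chi_{\Lambda_s(\xi)}(1)=\chi_\xi(s^{-1}(1))$, $\Lambda_{s^{-1}}=\Lambda_s^{-1}$, and the commutation of $\Lambda_s$ with $\hat{\mathfrak t}_x$ are exactly what is needed to reduce everything to the composition rule. The paper itself gives no proof (it defers to the cited reference), and your direct verification from the definitions is the natural argument one would supply.
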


For a $C^{*}$-algebra $\Aa$, we denote by $UM(\Aa)$ the unitary group of the multiplier algebra of $\Aa$. There is a similar notion of 2-action of a group on $C^{*}$-algebras:

\begin{definition}[\cite{MeslandArxiv2021}] \label{2actionalgebra}
Let $\Gamma$ be a discrete group and $\Aa$ a $C^{*}$-algebra. A 2-action of $\Gamma$ on $\Aa$ is a group homomorphism $u:\Gamma \to UM(\Aa)$ from $\Gamma$ to the unitary multipliers on $\Aa$. 
\end{definition}

\begin{remark}{\rm A 2-action of $\Gamma$ on $\Aa$ induces left and right actions on $\Aa$ by setting
\begin{equation}
\gamma_{1}\cdot a \cdot \gamma_{2}:=u_{\gamma_{1}}au_{\gamma_{2}}^\ast,
\end{equation}
for $\gamma_i \in \Gamma$ and $a \in \Aa$.
}$\Diamond$
\end{remark}

\begin{lemma}[\cite{MeslandArxiv2021}] Given 2-actions of $\Gamma$ on a $C^{*}$-algebra $\Aa$ and groupoid $\mathcal{G}$,
 we write
 \begin{equation} 
 C_{c,\Gamma}(\mathcal{G},\Aa):=\Big\{f\in C_{c}(\mathcal{G},\Aa): f(\gamma_{1}\cdot g \cdot\gamma_{2})=\gamma_{1}\cdot f(g)\cdot\gamma_{2}\Big\} 
 \end{equation}
 for the space of $\Gamma$-bi-equivariant and compactly supported maps from $\mathcal{G}$ to $\Aa$. Then the subspace $C_{c,\Gamma}(\mathcal{G},\Aa)$ is a $*$-subalgebra of $C_{c}(\mathcal{G},\Aa)$.
\end{lemma}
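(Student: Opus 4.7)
The plan is to verify three closure properties separately: linearity, closure under the convolution product, and closure under the involution. Linearity of the defining condition $f(\gamma_{1}\cdot g\cdot\gamma_{2})=\gamma_{1}\cdot f(g)\cdot\gamma_{2}$ in the variable $f$ is immediate, so the subspace is automatically stable under scalar multiplication and addition.

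For closure under convolution, I would exploit that $\alpha:\Gamma\to\mathcal{S}(\mathcal{G})$ takes values in bisections, so left multiplication $h\mapsto\gamma\cdot h$ is a self-homeomorphism of $\mathcal{G}$ that restricts to a bijection between range fibers. Starting from
\[
(f_{1}\ast f_{2})(\gamma_{1}\cdot g\cdot\gamma_{2})=\sum_{h\in\mathfrak{r}^{-1}(\mathfrak{r}(\gamma_{1}\cdot g))}f_{1}(h)\,f_{2}\bigl(h^{-1}(\gamma_{1}\cdot g\cdot\gamma_{2})\bigr),
\]
I would change variables via $h=\gamma_{1}\cdot k$ with $k$ ranging over $\mathfrak{r}^{-1}(\mathfrak{r}(g))$. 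The bisection identity $\alpha_{\gamma_{1}}(\mathfrak{r}(g))^{-1}\alpha_{\gamma_{1}}(\mathfrak{r}(g))=\mathfrak{r}(g)$, together with the cancellation $\mathfrak{r}(g)\cdot g=g$, causes the inner $\alpha_{\gamma_{1}}$-factors to collapse, leaving $h^{-1}(\gamma_{1}\cdot g\cdot\gamma_{2})=(k^{-1}g)\cdot\gamma_{2}$. Applying the left-equivariance of $f_{1}$ to $\gamma_{1}\cdot k$ and the right-equivariance of $f_{2}$ to $(k^{-1}g)\cdot\gamma_{2}$ then factors the unitaries $u_{\gamma_{1}}$ and $u_{\gamma_{2}}^{\ast}$ out of the sum, yielding $(f_{1}\ast f_{2})(\gamma_{1}\cdot g\cdot\gamma_{2})=\gamma_{1}\cdot(f_{1}\ast f_{2})(g)\cdot\gamma_{2}$.

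For closure under involution, I would first establish the identity $(\gamma_{1}\cdot g\cdot\gamma_{2})^{-1}=\gamma_{2}^{-1}\cdot g^{-1}\cdot\gamma_{1}^{-1}$, which follows by direct computation from the definition of the bi-action, using $\mathfrak{r}(g^{-1})=\mathfrak{s}(g)$ and the homomorphism identity $\alpha_{\gamma^{-1}}=\alpha_{\gamma}^{-1}$ in $\mathcal{S}(\mathcal{G})$. Substituting into $f^{\ast}(g')=f(g'^{-1})^{\ast}$ and invoking bi-equivariance of $f$ gives
\[
f^{\ast}(\gamma_{1}\cdot g\cdot\gamma_{2})=\bigl(\gamma_{2}^{-1}\cdot f(g^{-1})\cdot\gamma_{1}^{-1}\bigr)^{\ast}=\bigl(u_{\gamma_{2}^{-1}}\,f(g^{-1})\,u_{\gamma_{1}^{-1}}^{\ast}\bigr)^{\ast},
\]
and the anti-multiplicativity of $\ast$ on $\Aa$ together with $u_{\gamma^{-1}}=u_{\gamma}^{\ast}$ repackages the right-hand side as $\gamma_{1}\cdot f^{\ast}(g)\cdot\gamma_{2}$. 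Continuity and compact support of $f_{1}\ast f_{2}$ and $f^{\ast}$ are automatic from the standard $C_{c}(\mathcal{G},\Aa)$-theory, since the 2-actions act by homeomorphisms.

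The main obstacle I expect is the bookkeeping in the convolution step: one must simultaneously maintain the bijection of range fibers produced by the bisection, verify that the inner unit annihilates precisely the intended factors in the associative product of $\mathcal{G}$, and recognize the residue as the right $\Gamma$-translate $(k^{-1}g)\cdot\gamma_{2}$. Once this geometric identification is secured, the algebraic steps reduce to formal application of the bi-equivariance hypothesis and the definitions of the 2-actions on $\mathcal{G}$ and $\Aa$.
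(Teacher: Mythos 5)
The paper itself gives no proof of this lemma (it is imported from \cite{MeslandArxiv2021}), so your proposal has to stand on its own. Your overall strategy --- checking linearity, convolution, and involution separately --- is the natural one, and the convolution step is correct and carefully executed: $h\mapsto \gamma_{1}\cdot h$ is a bijection of range fibers because $\mathfrak r\circ\alpha_{\gamma_{1}}$ is a homeomorphism of the unit space, the inner $\alpha_{\gamma_{1}}$-factors collapse to the unit $\mathfrak r(g)$ exactly as you say, and the multiplier unitaries then factor out of the sum. The identity $(\gamma_{1}\cdot g\cdot\gamma_{2})^{-1}=\gamma_{2}^{-1}\cdot g^{-1}\cdot\gamma_{1}^{-1}$ is also correct.

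The gap is in the last line of the involution step. Carrying out the computation you describe with the paper's stated convention $\gamma_{1}\cdot a\cdot\gamma_{2}=u_{\gamma_{1}}a u_{\gamma_{2}}^{\ast}$ gives
\begin{equation*}
\bigl(u_{\gamma_{2}^{-1}}\,f(g^{-1})\,u_{\gamma_{1}^{-1}}^{\ast}\bigr)^{\ast}
= u_{\gamma_{1}^{-1}}\,f(g^{-1})^{\ast}\,u_{\gamma_{2}^{-1}}^{\ast}
= u_{\gamma_{1}}^{\ast}\,f^{\ast}(g)\,u_{\gamma_{2}},
\end{equation*}
which is $\gamma_{1}^{-1}\cdot f^{\ast}(g)\cdot\gamma_{2}^{-1}$, not $\gamma_{1}\cdot f^{\ast}(g)\cdot\gamma_{2}$; the two agree only when $u_{\gamma}^{2}=1$ for all $\gamma$. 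So the asserted ``repackaging'' does not follow from $u_{\gamma^{-1}}=u_{\gamma}^{\ast}$ alone. The underlying issue is that $a\cdot\gamma:=a u_{\gamma}^{\ast}$ is not actually a right action, since $(a\cdot\gamma_{2})\cdot\gamma_{3}=a\cdot(\gamma_{3}\gamma_{2})$, and is therefore incompatible with the genuine right action on the groupoid whenever $u(\Gamma)$ is noncommutative --- the bi-equivariance condition itself becomes overdetermined. The reading that makes the lemma hold for arbitrary 2-actions is $a\cdot\gamma:=a u_{\gamma^{-1}}^{\ast}=a u_{\gamma}$, and with that convention your computation does close: $\bigl(u_{\gamma_{2}^{-1}}f(g^{-1})u_{\gamma_{1}^{-1}}\bigr)^{\ast}=u_{\gamma_{1}^{-1}}^{\ast}f^{\ast}(g)u_{\gamma_{2}^{-1}}^{\ast}=u_{\gamma_{1}}f^{\ast}(g)u_{\gamma_{2}}=\gamma_{1}\cdot f^{\ast}(g)\cdot\gamma_{2}$. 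In the application at hand, $\Gamma=\Ss_{N}$ and $u_{s}=(-1)^{s}$ is a central involution, so the discrepancy is invisible there; but the lemma is stated for general 2-actions, and your proof needs to identify and resolve this convention before the involution step can be declared complete.
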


\begin{definition}[\cite{MeslandArxiv2021}] Let $\Gamma$ be a discrete group. Suppose that $\mathcal{G}$ is a locally compact Hausdorff \'etale groupoid and $\Aa$ is a $C^{*}$-algebra, both of which carry a 2-action by $\Gamma$.
The $\Gamma$-bi-equivariant reduced $\Aa$-$C^{*}$-algebra $C^{*}_{r,\Gamma}(\mathcal{G},\Aa)$ of $\mathcal{G}$ is the closure of the image of $C_{c,\Gamma}(\mathcal{G},\Aa)$ inside $C^{*}_{r}(\mathcal{G},\Aa)$.
\end{definition}

Applied to our context, this gives:

\begin{definition}[\cite{MeslandArxiv2021}] The bi-equivariant groupoid $C^\ast$-algebra associated with the dynamics of $N$ fermions hopping on the Delone sets from the transversal of a $\Ll_0$ is the $\Ss_N$-bi-equivariant sub-algebra $C^{*}_{r,\mathcal{S}_N} (\widehat\Gg_{\Ll_0}^{(N)}, \CM)$, where the 2-action of $\mathcal{S}_N$ on $\CM$ is simply $\Ss_N \ni s \mapsto (-1)^s \in UM(\CM)$.
\end{definition}

The left regular representations of the bi-equivariant sub-algebra drop to representations on the Fock spaces:

\begin{proposition}\label{Pro:FockDescent} Let $|\xi \rangle \! \rangle$ denote the canonical basis of $\ell^2\big(\mathfrak a_N^{-1}(\Ll_\eta) \big )$. Then, since
\begin{equation}
\langle \! \langle \Lambda_{s_1}(\zeta) | \pi_\eta(f) | \Lambda_{s_2}^{-1}(\xi) \rangle \! \rangle = f\big(s_1 \cdot \mathfrak t_{\chi_\xi(1)}(\zeta,\xi) \cdot s_2\big ),
\end{equation}
the sub-space $\Nn$ spanned by 
\begin{equation}
|\xi\rangle \! \rangle -(-1)^s|\Lambda_s(\xi)\rangle \! \rangle, \quad \xi \in \mathfrak a_N^{-1}(\Ll_\xi), \quad s\in \Ss_N,
\end{equation} 
is contained in the kernel of the left regular representation $\pi_\eta$ of $C^{*}_{r,\mathcal{S}_N} \big (\widehat\Gg_{\Ll_0}^{(N)}, \CM\big)$. As such, $\pi_\eta$ drops to a representation on the quotient space $\ell^2\big(\mathfrak a_N^{-1}(\Ll_\eta) \big )/\Nn$, which coincides with the $N$-fermion sector of the Fock space over the lattice $\Ll_\eta$ (see section~\ref{Sec:PhysHam}).
\end{proposition}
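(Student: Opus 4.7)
The plan is to unpack the three assertions in sequence: first verify the matrix-element formula from the definition of $\pi_\eta$, then deduce the kernel containment from the bi-equivariance of $f$, and finally identify the quotient Hilbert space with the antisymmetric Fock sector. All three steps are essentially computational; the main care needed is in tracking the $\chi$-labels and in extending the kernel argument from the compactly supported bi-equivariant core to the full reduced $C^\ast$-closure.

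For the first step, I would apply the representation formula \eqref{Eq:PiL} to the delta vector $|\Lambda_{s_2}^{-1}(\xi)\rangle \! \rangle$. The sum collapses to a single term, and pairing with the delta at $\Lambda_{s_1}(\zeta)$ yields
\begin{equation*}
\langle \! \langle \Lambda_{s_1}(\zeta)|\pi_\eta(f)|\Lambda_{s_2}^{-1}(\xi)\rangle \! \rangle = f\bigl(\hat{\mathfrak t}_{\chi_{\Lambda_{s_2}^{-1}(\xi)}(1)}\bigl(\Lambda_{s_1}(\zeta),\Lambda_{s_2}^{-1}(\xi)\bigr)\bigr).
\end{equation*}
From \eqref{Eq:Ls} one has $\chi_{\Lambda_{s_2}^{-1}(\xi)} = \chi_\xi \circ s_2$, so the argument of $f$ is precisely the right-hand side of \eqref{Eq:2Action2}, identifying the matrix element with $f\bigl(s_1 \cdot \mathfrak t_{\chi_\xi(1)}(\zeta,\xi) \cdot s_2\bigr)$.

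For the second step, I would specialize $s_1=e$ and $s_2=s^{-1}$. Then the matrix-element formula, together with the bi-equivariance of $f$ under the right 2-action (which on $\CM$ is $s \mapsto (-1)^s$), gives
\begin{equation*}
\langle \! \langle \zeta|\pi_\eta(f)|\Lambda_s(\xi)\rangle \! \rangle = (-1)^s \, \langle \! \langle \zeta|\pi_\eta(f)|\xi\rangle \! \rangle.
\end{equation*}
Hence $\pi_\eta(f)\bigl(|\xi\rangle \! \rangle - (-1)^s|\Lambda_s(\xi)\rangle \! \rangle\bigr)$ has all matrix elements zero, so it vanishes. Since $\Nn$ is closed and $C_{c,\Ss_N}\bigl(\widehat \Gg_{\Ll_0}^{(N)},\CM\bigr)$ is dense in $C^{*}_{r,\Ss_N}\bigl(\widehat \Gg_{\Ll_0}^{(N)},\CM\bigr)$, continuity of the representation extends the inclusion $\Nn \subseteq \ker \pi_\eta(f)$ to the full algebra.

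For the last step, $\mathfrak a_N^{-1}(\Ll_\eta)$ is the set of bijections $\chi: \{1,\ldots,N\} \to V \subset \Ll_\eta$, i.e.\ the set of ordered $N$-tuples of distinct points of $\Ll_\eta$, and $\Ss_N$ acts via $\Lambda_s$ by permuting the ordering. The relation $|\xi\rangle \! \rangle \sim (-1)^s|\Lambda_s(\xi)\rangle \! \rangle$ identifies permuted tuples with their sign, so the quotient $\ell^2\bigl(\mathfrak a_N^{-1}(\Ll_\eta)\bigr)/\Nn$ is canonically isomorphic to $\bigwedge^N \ell^2(\Ll_\eta)$, which is the $N$-fermion sector of the Fock space over $\ell^2(\Ll_\eta)$. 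The only nontrivial bookkeeping lies in reconciling the translation parameter $\chi_\xi(s_2(1))$ produced by the representation with the form of \eqref{Eq:2Action2}; once that is in place, the rest of the argument is a direct assembly.
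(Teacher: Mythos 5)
Your proposal is correct and follows the same route the paper intends: the paper's ``proof'' is the matrix-element identity embedded in the statement itself, which you verify directly from \eqref{Eq:PiL}, \eqref{Eq:Ls} and \eqref{Eq:2Action2}, and the kernel containment and Fock-space identification then follow exactly as you describe (the specialization $s_1=e$, $s_2=s^{-1}$ together with bi-equivariance gives $\pi_\eta(f)|\Lambda_s(\xi)\rangle\!\rangle=(-1)^s\pi_\eta(f)|\xi\rangle\!\rangle$, and density of the bi-equivariant core handles the closure). The bookkeeping point you flag at the end — that $\chi_{\Lambda_{s_2}^{-1}(\xi)}(1)=\chi_\xi(s_2(1))$ matches the translation parameter in \eqref{Eq:2Action2} — does check out, so there is no gap.
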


\subsection{$\RM^d$ actions, invariant semi-finite traces and cyclic cocycles} We will step back to the groupoid $C^\ast$-algebra $C^\ast_r\big(\widehat \Gg_{\Ll_0}^{(N)}\big)$ and introduce first a differential structure:

\begin{proposition}\label{Pro:NRD} $C^\ast_r\big(\widehat \Gg_{\Ll_0}^{(N)}\big)$ accepts a natural $\RM^d$-action
\begin{equation}\label{Eq:RdAction}
[k\cdot f](\zeta,\xi) = e^{\imath N \langle k, x_\zeta - x_\xi\rangle} f(\zeta,\xi), \quad k \in \RM^d,
\end{equation}
where $x_\eta$ is the center of mass of the subset $V_\eta$ for any $\eta \in \widehat{\rm Del}_{(r,R)}^{(N)}(\RM^d)$ and $\langle , \rangle$ denotes the standard scalar product of $\RM^d$. 
\end{proposition}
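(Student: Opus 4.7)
The plan is to check that formula \eqref{Eq:RdAction} really defines a one-parameter group of $*$-automorphisms on the dense core $C_c\big(\widehat\Gg_{\Ll_0}^{(N)}\big)$ and then to show that the action is implemented, in each left regular representation, by an explicit family of unitary multiplication operators; this will automatically give continuity of the extension to the full reduced $C^\ast$-algebra.

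The key kinematic fact, which I would isolate first, is that under the translation action of Proposition~\ref{Pro:RDAction1} the center of mass transforms as $x_{\hat{\mathfrak t}_a(\eta)}=x_\eta-a$. Consequently, the phase appearing in \eqref{Eq:RdAction} depends only on the class of $(\zeta,\xi)$ modulo the diagonal $\RM^d$-translations: for any $a\in\RM^d$ the difference $x_\zeta-x_\xi$ coincides with $x_{\hat{\mathfrak t}_a(\zeta)}-x_{\hat{\mathfrak t}_a(\xi)}$. Applied to $a=\chi_\eta(1)$, this identity matches precisely the shifts that occur inside the convolution and involution formulas \eqref{Eq:Conv} and \eqref{Eq:Inv}. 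Namely, for $f_1,f_2\in C_c\big(\widehat\Gg_{\Ll_0}^{(N)}\big)$,
\begin{equation*}
(k\cdot f_1)\big(\hat{\mathfrak t}_{\chi_\eta(1)}(\zeta,\eta)\big)\,(k\cdot f_2)(\eta,\xi)=e^{\I N\langle k,\,x_\zeta-x_\eta\rangle}e^{\I N\langle k,\,x_\eta-x_\xi\rangle}\,f_1\big(\hat{\mathfrak t}_{\chi_\eta(1)}(\zeta,\eta)\big)f_2(\eta,\xi),
\end{equation*}
and the two exponentials telescope to the desired overall factor $e^{\I N\langle k,\,x_\zeta-x_\xi\rangle}$; thus $k\cdot(f_1\ast f_2)=(k\cdot f_1)\ast(k\cdot f_2)$. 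The analogous check for the $*$-operation reduces to observing that the phase simply complex-conjugates under the inversion $(\zeta,\xi)\mapsto\hat{\mathfrak t}_{\chi_\zeta(1)}(\xi,\zeta)$. The group law $(k_1+k_2)\cdot f=k_1\cdot(k_2\cdot f)$ is immediate from bilinearity of $\langle\cdot,\cdot\rangle$.

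To pass from $C_c$ to $C^\ast_r\big(\widehat\Gg_{\Ll_0}^{(N)}\big)$ I would show that the action is spatially implemented in every regular representation. Concretely, on the Hilbert space $\Hh_\eta=\ell^2\big(\mathfrak a_N^{-1}(\Ll_\eta)\big)$ carrying $\pi_\eta$, define the diagonal unitary
\begin{equation*}
U_k^\eta\,|\xi\rangle\!\rangle := e^{\I N\langle k,\,x_\xi\rangle}\,|\xi\rangle\!\rangle,\qquad \xi\in\mathfrak a_N^{-1}(\Ll_\eta).
\end{equation*}
Using Proposition~\ref{Pro:FockDescent}'s identification of the matrix elements, together with the kinematic identity $x_{\hat{\mathfrak t}_{\chi_\xi(1)}(\zeta)}-x_{\hat{\mathfrak t}_{\chi_\xi(1)}(\xi)}=x_\zeta-x_\xi$, one computes $U_k^\eta\,\pi_\eta(f)\,(U_k^\eta)^\ast=\pi_\eta(k\cdot f)$ on $C_c$, so $\|k\cdot f\|_r=\sup_\eta\|\pi_\eta(k\cdot f)\|=\|f\|_r$. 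Thus $f\mapsto k\cdot f$ is isometric for the reduced norm and extends uniquely to a $\ast$-automorphism of $C^\ast_r\big(\widehat\Gg_{\Ll_0}^{(N)}\big)$.

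Finally, strong continuity in $k$ is checked on the dense subalgebra $C_c$: for fixed $f$ with compact support $K\subset\widehat\Gg_{\Ll_0}^{(N)}$, the phase $e^{\I N\langle k,\,x_\zeta-x_\xi\rangle}-1$ is bounded uniformly in $k$ on bounded sets and tends to zero uniformly on $K$ as $k\to 0$, since $x_\zeta-x_\xi$ stays in a compact subset of $\RM^d$ on $K$; a standard estimate on the convolution norm (or simply the implementing unitaries) then gives $\|k\cdot f-f\|_r\to 0$, which propagates to all of $C^\ast_r\big(\widehat\Gg_{\Ll_0}^{(N)}\big)$ by density and equicontinuity. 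The only mildly delicate point that I expect to have to argue carefully is the control of $x_\zeta-x_\xi$ on compact subsets of the groupoid; this follows from the fact that on any compact set of $\widehat\Gg_{\Ll_0}^{(N)}$ the displacement $\chi_\zeta(1)$ is bounded and the Delone constraint bounds the diameter of $V_\zeta,V_\xi$, making the centers of mass lie in a bounded region of $\RM^d$.
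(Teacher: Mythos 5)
Your proposal is correct and rests on exactly the same key observation as the paper's proof, namely that $x_{\hat{\mathfrak t}_{y}(\zeta)} - x_{\hat{\mathfrak t}_{y}(\eta)} = x_\zeta - x_\eta$ for any $y \in \RM^d$, so that the phases telescope through the convolution formula \eqref{Eq:Conv}. You additionally spell out the $*$-compatibility, the extension to the reduced norm via the implementing unitaries $U_k^\eta$, and strong continuity, all of which the paper leaves implicit; these added details are sound.
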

\begin{proof} We need to show that the rule from Eq.~\eqref{Eq:RdAction} defines automorphisms. From \eqref{Eq:Conv}, we have
\begin{equation}
[k\cdot (f_1 \ast f_2)](\zeta,\xi) =e^{\imath N \langle k, x_\zeta - x_\xi\rangle}\sum_{\eta \in \mathfrak a_N^{-1}(\Ll_\xi)} f_{1}\big(\hat{\mathfrak t}_{\chi_\eta(1)}(\zeta,\eta)\big)  \cdot f_{2} (\eta, \xi ).
\end{equation}
Then the statement follows from the observation that
\begin{equation}
x_\zeta - x_\xi = x_{\hat{\mathfrak t}_{y}(\zeta)} - x_{\hat{\mathfrak t}_{y}(\eta)} + x_\eta - x_\xi
\end{equation}
holds for any $y \in \RM^d$.
\end{proof}

\begin{remark}{\rm As we shall see in the next section, this action is related to familiar $U(1)$-gauge transformations on the ${\rm CAR}$ algebra over $\ell^2(\Ll_0)$. As such, they are relevant to the computation of the physical transport coefficients.
}$\Diamond$
\end{remark}

\begin{definition}The $\RM^d$-action supplies a set of $d$ derivations,
\begin{equation}
(\partial_j f)(\zeta,\xi) = \imath N (x_\zeta - x_\xi)_j f(\zeta,\xi), \quad j=1,\ldots,d,
\end{equation} 
and we will denote by $C^\infty\big (\widehat \Gg_{\Ll_0}^{(N)} \big )$ the Fr\'echet subalgebra of the smooth elements from $C^\ast_r\big ( \widehat \Gg_{\Ll_0}^{(N)}\big )$.
\end{definition}

\begin{remark}{\rm Since the $\RM^d$-action $\tau$ commutes with the 2-action of the permutation group, these derivations drop on the sub-algebra $C^{*}_{r,\mathcal{S}_N} \big(\widehat \Gg_{\Ll_0}^{(N)}, \CM\big)$.
}$\Diamond$
\end{remark}

We now turn our attention to a particular semi-finite trace. First, we endow the (discrete) fibers of the blow-up map~\eqref{Eq:BlowMap} with the counting measure. The composition with a $\Gg_{\Ll_0}$-invariant measure on $\Xi_{\Ll_0}$ then supplies a $\widehat \Gg_{\Ll_0}^{(N)}$-invariant measure on  $\widehat \Xi_{\Ll_0}^{(N)}$.  

\begin{proposition} Let $\PM$ be a $\Gg_{\Ll_0}$-invariant measure on $\Xi_{\Ll_0}$. Then 
\begin{equation}\label{Eq:HatTrace}
\widehat\Tt_{\Ll_0}^{(N)}(f) : = \int_{\Xi_{\Ll_0}} {\rm d}\PM(\Ll)  \sum_{\xi \in \mathfrak u^{-1}(\Ll)} f(\xi,\xi)
\end{equation}
is a positive linear map over the core algebra $C_c\big(\widehat \Gg_{\Ll_0}^{(N)}\big)$, whose kernel includes the commutator sub-space. As such, $\widehat\Tt_{\Ll_0}^{(N)}$ can be promoted to a semi-finite trace on the $C^\ast$-algebra of the $N$-body groupoid. 
\end{proposition}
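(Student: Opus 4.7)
The plan is to realize $\widehat\Tt_{\Ll_0}^{(N)}$ as the canonical integration-of-diagonal trace associated with a $\widehat\Gg_{\Ll_0}^{(N)}$-invariant Radon measure on the unit space $\widehat\Xi_{\Ll_0}^{(N)}$, and then invoke the standard theory of invariant-measure traces on \'etale groupoid $C^\ast$-algebras. Concretely, I would first introduce the Radon measure $\widehat\PM$ on $\widehat\Xi_{\Ll_0}^{(N)}$ obtained by integrating the counting measure on the fibers of the blow-up map $\mathfrak u$ of \eqref{Eq:BlowMap} against $\PM$ on $\Xi_{\Ll_0}$, so that \eqref{Eq:HatTrace} reads $\widehat\Tt_{\Ll_0}^{(N)}(f)=\int f|_{\widehat\Xi_{\Ll_0}^{(N)}}\,d\widehat\PM$. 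Linearity is then built into the definition.

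The central technical point, and in my view the main obstacle, is the $\widehat\Gg_{\Ll_0}^{(N)}$-invariance of $\widehat\PM$. Using Proposition~\ref{Pro:GMor}, every open bisection $b$ of $\widehat\Gg_{\Ll_0}^{(N)}$ descends through the isomorphism \eqref{Eq:LastIso} to an open bisection $\bar b$ of $\Gg_{\Ll_0}$; the source-to-range map that $b$ induces on units then factors into the $\Gg_{\Ll_0}$-bisection action on $\Xi_{\Ll_0}$ (which preserves $\PM$ by hypothesis) and a bijection between fibers $\mathfrak u^{-1}(\mathfrak s\bar b)$ and $\mathfrak u^{-1}(\mathfrak r\bar b)$ (which tautologically preserves the counting measure, since the fibers are discrete). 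The bookkeeping that must be done carefully is that the order data $\chi$ is transported by $\hat{\mathfrak t}$ in a way consistent with the fiber bijection; handling this via the explicit form of \eqref{Eq:LastIso} turns the transport into a plain $\RM^d$-translation of the $\chi$-maps, where the invariance of $\PM$ can be applied directly.

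Once invariance of $\widehat\PM$ is in hand, positivity and the tracial property follow routinely. For positivity, the convolution \eqref{Eq:Conv} and involution \eqref{Eq:Inv}, together with $\chi_\xi(1)=0$ for units, yield after a short calculation
\begin{equation*}
(f^\ast\ast f)(\xi,\xi)=\sum_{\eta\in\mathfrak a_N^{-1}(\Ll_\xi)}|f(\eta,\xi)|^2\geq 0,
\end{equation*}
so $\widehat\Tt_{\Ll_0}^{(N)}(f^\ast\ast f)\geq 0$. For the tracial property, expanding $\widehat\Tt_{\Ll_0}^{(N)}(f_1\ast f_2-f_2\ast f_1)$ gives a double sum over $\xi\in\mathfrak u^{-1}(\Ll)$ and $\eta\in\mathfrak a_N^{-1}(\Ll)$, and the change of variables $(\xi,\eta)\mapsto(\hat{\mathfrak t}_{\chi_\eta(1)}\eta,\hat{\mathfrak t}_{\chi_\eta(1)}\xi)$ on the first term turns it into the second; the change of variables is permitted precisely by the invariance of $\widehat\PM$ established in the previous step.

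For the final extension claim, I would invoke the standard construction for \'etale groupoid $C^\ast$-algebras (\`a la Renault): the weight $\widehat\Tt_{\Ll_0}^{(N)}$ is positive, finite, and linear on the norm-dense $\ast$-subalgebra $C_c(\widehat\Gg_{\Ll_0}^{(N)})$, it is lower semicontinuous on the positive cone because of the pointwise formula $(f^\ast\ast f)(\xi,\xi)=\sum_\eta|f(\eta,\xi)|^2$ combined with Fatou's lemma, and its kernel contains all commutators by the preceding step. These ingredients yield a unique extension to a semi-finite tracial weight on $C^\ast_r(\widehat\Gg_{\Ll_0}^{(N)})$, faithful whenever $\PM$ has full support.
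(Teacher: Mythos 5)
Your proposal is correct and follows exactly the route the paper intends: the sentence preceding the proposition states that composing the counting measure on the fibers of the blow-up map $\mathfrak u$ with the $\Gg_{\Ll_0}$-invariant measure $\PM$ yields a $\widehat\Gg_{\Ll_0}^{(N)}$-invariant measure on $\widehat\Xi_{\Ll_0}^{(N)}$, after which the trace is the standard integration-of-the-diagonal construction for \'etale groupoids. The paper omits the verifications entirely, and the details you supply --- the fiberwise bijection argument for invariance, the identity $(f^\ast\ast f)(\xi,\xi)=\sum_\eta|f(\eta,\xi)|^2$ using $\chi_\xi(1)=0$, and the change of variables $(\xi,\eta)\mapsto(\hat{\mathfrak t}_{\chi_\eta(1)}\eta,\hat{\mathfrak t}_{\chi_\eta(1)}\xi)$ for traciality --- are all consistent with the convolution and involution formulas \eqref{Eq:Conv}--\eqref{Eq:Inv}.
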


\begin{proposition} If $\PM$ is ergodic and $f$ is in the domain of the trace, then $\PM$-almost surely,
\begin{equation}\label{Eq:TrV}
\widehat\Tt_{\Ll_0}^{(N)}(f) = \lim_{V \to \RM^d}\frac{1}{|V \cap \Ll|} \sum_{\xi \in \mathfrak a_N^{-1}(\Ll)}^{\chi_\xi(1) \in V} \langle \xi |\pi_\eta(f)|\xi\rangle,
\end{equation}
for any $\eta \in \mathfrak u^{-1}(\Ll)$.
\end{proposition}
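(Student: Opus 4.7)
The plan is to reduce the claim to a Birkhoff-type ergodic theorem for the $\RM^d$-action on $(\Omega_{\Ll_0},\bar\PM)$ restricted to the transversal $\Xi_{\Ll_0}$, after rewriting the empirical sum on the right-hand side as a spatial average of a single scalar function over the Delone orbit of $\Ll$.

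First, I would unfold the diagonal matrix element using the left-regular representation formula~\eqref{Eq:PiL}. Picking the single term $\xi'=\xi$ from the sum, one gets
\begin{equation*}
\langle \xi|\pi_\eta(f)|\xi\rangle = f\bigl(\hat{\mathfrak t}_{\chi_\xi(1)}(\xi,\xi)\bigr) = F\bigl(\hat{\mathfrak t}_{\chi_\xi(1)}(\xi)\bigr),
\end{equation*}
where $F:\widehat\Xi_{\Ll_0}^{(N)}\to\CM$ is defined by $F(\eta):=f(\eta,\eta)$. The translation $\hat{\mathfrak t}_{\chi_\xi(1)}$ sends the first entry of $\chi_\xi$ to $0$, so the translated point lies in $\mathfrak u^{-1}(\mathfrak t_{\chi_\xi(1)}(\Ll))$ as required.

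Next, I would reorganise the constrained sum on the right-hand side by foliating it along the values of $\chi_\xi(1)$. For each $x\in\Ll\cap V$, the map $\xi\mapsto \hat{\mathfrak t}_{x}(\xi)$ is a bijection between $\{\xi\in\mathfrak a_N^{-1}(\Ll):\chi_\xi(1)=x\}$ and the fibre $\mathfrak u^{-1}(\mathfrak t_x(\Ll))$. Introducing
\begin{equation*}
G(\Ll'):=\sum_{\eta\in\mathfrak u^{-1}(\Ll')}F(\eta),\qquad \Ll'\in\Xi_{\Ll_0},
\end{equation*}
the right-hand side of the proposition becomes $\lim_{V\to\RM^d}|V\cap\Ll|^{-1}\sum_{x\in\Ll\cap V}G(\mathfrak t_x(\Ll))$. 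With this reformulation, the defining integral~\eqref{Eq:HatTrace} of $\widehat\Tt_{\Ll_0}^{(N)}(f)$ is exactly $\int_{\Xi_{\Ll_0}}\mathrm{d}\PM(\Ll)\,G(\Ll)$, and the task reduces to proving
\begin{equation*}
\int_{\Xi_{\Ll_0}}\mathrm{d}\PM(\Ll)\,G(\Ll) = \lim_{V\to\RM^d}\frac{1}{|V\cap\Ll|}\sum_{x\in\Ll\cap V}G(\mathfrak t_x(\Ll)) \quad \PM\text{-a.s.}
\end{equation*}

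This is precisely the content of~\eqref{Eq:Measures}, combined with the normalisation ${\rm Dens}_{\Ll_0}=1$ and the fact that $|V\cap\Ll|/|V|\to {\rm Dens}_\Ll=1$ $\PM$-almost surely along any tempered F\o lner exhaustion $V\to\RM^d$. The genuine work is thus to apply Birkhoff's pointwise ergodic theorem to the ergodic $\RM^d$-system $(\Omega_{\Ll_0},\mathfrak t,\bar\PM)$ evaluated on the function $G\circ \mathfrak e$, after descending from $\Omega_{\Ll_0}$ to $\Xi_{\Ll_0}$ via the canonical map $\mathfrak e$.

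The main obstacle is the integrability condition required by the ergodic theorem: one must verify $G\in L^1(\Xi_{\Ll_0},\PM)$. This has to be extracted from the hypothesis that $f$ lies in the domain of the semi-finite trace $\widehat\Tt_{\Ll_0}^{(N)}$. Concretely, the trace is finite on the positive cone generated by such $f$, and applying the construction to $f^\ast f$ yields that $\xi\mapsto\sum_{\eta\in\mathfrak u^{-1}(\Ll_\xi)}|F(\eta)|^2$ is $\PM$-integrable, from which standard Cauchy--Schwarz plus the local finiteness $|\mathfrak u^{-1}(\Ll)\cap K|<\infty$ (ensured by the compact support of $f$ on the \'etale groupoid) gives the absolute integrability of $G$. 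A secondary technical point is to confirm that the F\o lner sequence $V\to\RM^d$ implicit in the statement is tempered in the sense of Lindenstrauss, which is standard for the Euclidean balls commonly used in Delone thermodynamic limits, so no further hypothesis on $V$ is needed.
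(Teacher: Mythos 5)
Your proposal is correct and follows essentially the same route as the paper's own (very terse) proof: rewrite the diagonal matrix elements via the left-regular representation formula~\eqref{Eq:PiL}, regroup the constrained sum over $\xi$ as a sum over $x\in\Ll\cap V$ of the fibrewise-summed function $G(\mathfrak t_x\Ll)$, and invoke the Birkhoff-type identity~\eqref{Eq:Measures} together with the definition~\eqref{Eq:HatTrace} of the trace. Your extra care about the $L^1$ hypothesis is welcome detail the paper omits, though the Cauchy--Schwarz detour through $f^\ast f$ is unnecessary: membership in the domain of the semi-finite trace already means $G$ is a linear combination of the integrands appearing in~\eqref{Eq:HatTrace} for positive elements of finite trace, hence integrable directly.
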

\begin{proof} We have from \eqref{Eq:Measures}, that $\PM$-almost surely,
\begin{equation}
\int_{\Xi_{\Ll_0}} {\rm d} \PM(\Ll) \, \phi(\Ll) = \lim_{V \to \RM^d}\frac{1}{|V \cap \Ll|} \sum_{x\in V\cap \Ll} \phi(\mathfrak t_x \Ll).
\end{equation}
Then the statement follows from definition~\eqref{Eq:HatTrace} of the trace and the expression~\eqref{Eq:PiL} of the left regular representations.
\end{proof}

\begin{remark}{\rm As we shall see, $\widehat\Tt_{\Ll_0}^{(N)}$ essentially matches the trace per volume defined for the Fock representation of the physical Hamiltonians.
}$\Diamond$
\end{remark}

Since the semi-finite trace is invariant against the $\RM^d$-action supplying the derivations, the following expressions supply cyclic cocycles over $C^\infty\big(\widehat \Gg_{\Ll_0}^{(N)}\big)$:
\begin{equation}\label{Eq:CCoC}
\Sigma_J(f_0,f_1,\ldots,f_{|J|}) : = \Lambda_J \sum_{\lambda \in \Ss_J} (-1)^\lambda  \ \widehat\Tt_{\Ll_0}^{(N)} \Big ( f_0 \prod_{j\in J} \partial_{\lambda_j} f_j \Big ),
\end{equation}
where $J \subseteq \{1,2,\ldots,d\}$ is a sub-set of indices. We defer the physical interpretation of these cyclic cocycles to the next section.

\section{The algebra of many-body Hamiltonians}
\label{Sec:PhysHam}

The first part of this section focuses on physical aspects, specifically on the dynamics of a gas of fermions populating a Delone set $\Ll$. The algebra of physical observables is that of canonical anticommutation relations (CAR) over the Hilbert space $\ell^2(\Ll)$. The time evolutions of the physical observables, which can be experimentally mapped and quantified, are generated by particular derivations that are analyzed here. These derivations supply the core algebra of physical Hamiltonians and the main challenge is to complete this and related algebras to $C^\ast$- and pro-$C^\ast$-algebras. This was accomplished in \cite{MeslandArxiv2021} via the essential extensions of the groupoid $C^\ast$-algebras introduced in the previous section. This solution as well as other related aspects are discussed in the second part of the section.

\subsection{The algebra of local observables}\label{Sec:ALO} The algebra of local observables for fermions hopping over a  Delone set $\Ll$ is the universal $C^\ast$-algebra ${\rm CAR}(\Ll)$ of the anti-commutation relations over the Hilbert space $\ell^2(\Ll)$ \cite{BratteliBook2}:
\begin{equation}\label{Eq:CAR}
a_x a_{x'} + a_{x'} a_x =0, \quad a_x^\ast a_{x'} + a_{x'}a_x^\ast = \delta_{x,x'}\cdot 1 , \quad x,x' \in \Ll.
\end{equation}
This algebra accepts many useful presentations, but here we will be interested in the following symmetric presentation:

\begin{proposition}{For $\xi \in \widehat {\rm Del}_{(r,R)}^{(n)}(\RM^d)$, let
\begin{equation}
\mathfrak a(\xi) :=a_{\chi_\xi(n)} \cdots a_{\chi_\xi(1)} \in {\rm CAR}(\Ll). 
\end{equation}
Then any element from ${\rm CAR}(\Ll)$ accepts a unique presentation as a convergent sum of the type}
 \begin{equation}\label{Eq:SymPres1}
 A = \sum_{n,m \in \NM} \tfrac{1}{\sqrt{n!m!}} \sum_{\zeta \in \mathfrak a_n^{-1}(\Ll)} \ \sum_{\xi \in \mathfrak a_m^{-1}(\Ll)} c(\zeta,\xi) \, \mathfrak a(\zeta)^\ast \mathfrak a(\xi),
 \end{equation}
 where the coefficients are bi-equivariant in the sense
 \begin{equation}
 c\big (\Lambda_{s_1}(\zeta), \Lambda_{s_2}^{-1}(\xi)\big) = (-1)^{s_1}c(\zeta,\xi) (-1)^{s_2}, \quad s_1 \in \Ss_n, \ s_2 \in \Ss_m.
 \end{equation}
 \end{proposition}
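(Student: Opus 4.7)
\medskip

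The plan is to reduce the statement to the classical normal-ordered basis theorem for the CAR algebra. Fix any total order $\prec$ on $\Ll$. For a finite subset $V \subset \Ll$ with $|V| = n$, let $\zeta_{V} \in \mathfrak a_n^{-1}(\Ll)$ denote the unique tuple whose bijection $\chi_{\zeta_V}$ is $\prec$-increasing; similarly define $\xi_{W}$ for $W \subset \Ll$ with $|W|=m$. Standard CAR theory (\cite{BratteliBook2}) tells us that the normal-ordered monomials $\mathfrak a(\zeta_V)^{\ast} \mathfrak a(\xi_W)$, indexed by pairs $(V,W)$ of finite subsets of $\Ll$, form a linear basis for the dense $\ast$-subalgebra $\mathrm{CAR}^{\mathrm{alg}}(\Ll)$ generated by the $a_x$'s, and the inductive limit of finite-subset CAR algebras is norm dense in $\mathrm{CAR}(\Ll)$.

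The first technical step is to establish the permutation covariance $\mathfrak a(\Lambda_s(\xi)) = (-1)^s \mathfrak a(\xi)$ and its adjoint $\mathfrak a(\Lambda_s(\xi))^{\ast} = (-1)^s \mathfrak a(\xi)^{\ast}$ for any $\xi \in \mathfrak a_n^{-1}(\Ll)$ and $s \in \mathcal S_n$. This follows inductively by decomposing $s$ into transpositions and applying the anticommutation relations \eqref{Eq:CAR}, which is legitimate because $V_\xi$ consists of $n$ pairwise distinct lattice points. A direct consequence is that for any coefficient function $c(\zeta,\xi)$ satisfying the bi-equivariance condition, the product $c(\zeta,\xi)\,\mathfrak a(\zeta)^{\ast}\mathfrak a(\xi)$ is constant along every $\mathcal S_n \times \mathcal S_m$-orbit in $\mathfrak a_n^{-1}(\Ll) \times \mathfrak a_m^{-1}(\Ll)$: the two sign flips from $\Lambda_{s_1}$ and $\Lambda_{s_2}^{-1}$ cancel exactly the $(-1)^{s_1}(-1)^{s_2}$ produced by the covariance rule. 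Since each orbit contains $n!m!$ elements, the inner double sum collapses to $n!m!\, c(\zeta_V,\xi_W)\,\mathfrak a(\zeta_V)^{\ast}\mathfrak a(\xi_W)$ for the canonical representative, and the prefactor $1/\sqrt{n!m!}$ in \eqref{Eq:SymPres1} turns this into $\sqrt{n!m!}\, c(\zeta_V,\xi_W)\,\mathfrak a(\zeta_V)^{\ast}\mathfrak a(\xi_W)$.

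For existence on the dense subalgebra, I start from the (finite) normal-ordered expansion $A = \sum_{V,W} d(V,W)\,\mathfrak a(\zeta_V)^{\ast}\mathfrak a(\xi_W)$ and \emph{define} $c(\zeta_V,\xi_W) := d(V,W)/\sqrt{n!m!}$ on canonical representatives, extending to all of $\mathfrak a_n^{-1}(\Ll) \times \mathfrak a_m^{-1}(\Ll)$ by the bi-equivariance rule. The computation above then reproduces $A$. For uniqueness, I set $A = 0$ in the symmetric presentation: the same orbit-collapsing computation yields $\sum_{V,W}\sqrt{n!m!}\,c(\zeta_V,\xi_W)\,\mathfrak a(\zeta_V)^{\ast}\mathfrak a(\xi_W) = 0$, and linear independence of the normal-ordered monomials forces $c(\zeta_V,\xi_W) = 0$ on every canonical representative, hence everywhere by bi-equivariance.

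The final and most delicate step is to extend to general $A \in \mathrm{CAR}(\Ll)$ and clarify the meaning of the convergent sum. The natural strategy is to pass through the Fock representation on the antisymmetric Fock space $\mathcal F(\ell^{2}(\Ll))$, which is faithful on $\mathrm{CAR}(\Ll)$ and under which $\mathfrak a(\zeta_V)^{\ast}\mathfrak a(\xi_W)$ acts as a rank-one operator between the $m$-particle and $n$-particle sectors, carrying a single Slater-determinant matrix element. The coefficients $c(\zeta_V,\xi_W)$ are therefore canonically determined, up to the $\sqrt{n!m!}$ normalization and a sign, by matrix elements of $A$ between occupation-number states, quantities that depend continuously on $A$ in the $C^{\ast}$-norm. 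Approximating $A$ by finite linear combinations $A_k \in \mathrm{CAR}^{\mathrm{alg}}(\Ll)$ and identifying each $A_k$ with its symmetric expansion through the Fock picture gives a coefficient function $c$ assigned to $A$, and the symmetric sum \eqref{Eq:SymPres1}, read through the Fock representation, converges to $A$ in the induced topology. The main obstacle here is precisely this last point: controlling the infinite double sum rigorously in the $C^{\ast}$-norm rather than merely in a weaker operator topology. In the dense case the sum is finite and nothing is at stake, but for a generic norm limit the effective convergence statement is that partial sums taken over all orbits with $\max(n,m) \le K$ recover $A$ as $K \to \infty$ in the topology inherited from the approximating net $A_k$. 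Handling this cleanly, uniformly in the approximation parameter, is the analytical heart of the proof.
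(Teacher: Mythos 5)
The paper states this proposition without proof (it is imported from \cite{MeslandArxiv2021}), so there is no in-paper argument to measure you against; I can only assess your proposal on its own terms. Your algebraic core is correct and is surely the intended one: $\mathfrak a\big(\Lambda_s(\xi)\big)=(-1)^s\,\mathfrak a(\xi)$ because the points of $V_\xi$ are distinct and the generators anticommute, bi-equivariance of $c$ then makes each summand constant along the (free) $\Ss_n\times\Ss_m$-orbits, each orbit contributes $n!\,m!$ identical terms, and the expansion collapses onto the normal-ordered monomials indexed by pairs of finite subsets, which form a linear basis of the algebraic CAR algebra. Existence and uniqueness on that dense $\ast$-subalgebra follow exactly as you describe, with $c(\zeta_V,\xi_W)=d(V,W)/\sqrt{n!\,m!}$ on canonical representatives.

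Two points in your final step do not hold as written. First, $\mathfrak a(\zeta_V)^\ast\mathfrak a(\xi_W)$ is not a rank-one operator on Fock space: it acts nontrivially on every Slater determinant $|S\rangle$ with $W\subseteq S$ and $V\cap(S\setminus W)=\emptyset$, i.e., on all particle sectors of number $\geq m$. Consequently a single occupation-number matrix element $\langle V|\pi_\omega(A)|W\rangle$ is a signed sum of $\sqrt{n'!\,m'!}\,c(\zeta_{V\setminus U},\xi_{W\setminus U})$ over subsets $U\subseteq V\cap W$, and extracting an individual coefficient requires inverting this (finite, triangular) system by M\"obius inversion over the subset lattice. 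That inversion does exist and shows each $c(\zeta_V,\xi_W)$ is a finite linear combination of matrix elements, hence a norm-continuous linear functional of $A$ --- which is what actually delivers uniqueness for general $A$ --- but your ``read off a single Slater-determinant matrix element'' shortcut skips the step that makes the argument work. Second, and more seriously, norm convergence of the double series for a generic $A\in{\rm CAR}(\Ll)$ is neither established by your argument nor to be expected: ${\rm CAR}(\Ll)$ is a UHF algebra and the expansion \eqref{Eq:SymPres1} is the exact analogue of a Fourier series, whose partial sums for a generic $C^\ast$-element need not converge in norm; only the coefficients are canonically determined, and the partial sums converge in weaker topologies (for instance in the $L^2$-topology of the trace, or sector by sector in the Fock representation). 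You correctly flag this as the unresolved analytical heart, but flagging it does not close it: as the proposition is literally phrased (``convergent sum''), your proof is incomplete at precisely this point, and completing it requires either specifying the topology of convergence or reformulating the conclusion as ``the coefficient function exists, is unique, is bi-equivariant, and $A$ is the norm limit of the finite truncations of an approximating net with compatible coefficients.''
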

 
 \begin{remark}{\rm The symmetric presentation puts all possible orderings of the generators on equal footing and this is desirable when working with generic Delone sets or when deformations of the underlying lattice are allowed.
 }$\Diamond$
 \end{remark}
 
\begin{remark}{\rm The first sum in \eqref{Eq:SymPres1} includes the cases $n=0$ or $m=0$. In such situations, we use the convention that $\mathfrak a_0^{-1}(\Ll) = \emptyset$ and $\mathfrak a(\emptyset)=1$, the unit of ${\rm CAR}(\Ll)$. Therefore, $c(\emptyset,\emptyset)$ in \eqref{Eq:SymPres1} is the coefficient corresponding to the unit.
}$\Diamond$
\end{remark}
 
The ${\rm GICAR}(\Ll)$ subalgebra consists of the elements that are invariant against the $U(1)$-twist of the generators $a_x \mapsto \lambda a_x$, $\lambda \in \SM^1$, hence the name of gauge invariant elements. The symmetric presentation of such elements takes the form
\begin{equation}
 A = \sum_{n \in \NM} \tfrac{1}{n!} \sum_{\zeta,\xi \in \mathfrak a_n^{-1}(\Ll)} \ c(\zeta,\xi) \, \mathfrak a(\zeta)^\ast \mathfrak a(\xi),
 \end{equation}
 
The interest, however, is not in ${\rm CAR}(\Ll)$ or ${\rm GICAR}(\Ll)$, but in the dynamics of the local observables, i.e. in group homomorphisms $\alpha : \RM \to {\rm Aut}\big({\rm CAR}(\Ll)\big)$  and their generators. The physical constraints on these generators will be described next.

\subsection{The physical Hamiltonians}\label{Sec:MConclusions} We will be dealing exclusively with dynamics that conserves the fermion number, hence with time evolutions that descend on the ${\rm GICAR}$ subalgebra. Specifically:

 \begin{definition} A Galilean and gauge invariant Hamiltonian with finite interaction range is a correspondence
\begin{equation}\label{Eq:HCorresp}
{\rm Del}^r_R(\RM^d) \ni \Ll \mapsto H_\Ll = \sum_{n \in \NM^\times} \tfrac{1}{n!} \sum_{\zeta,\xi \in \mathfrak a_n^{-1}(\Ll)}  h_n(\zeta,\xi) \, \mathfrak a^\ast(\zeta) \mathfrak a(\xi),
\end{equation}
where the coefficients $h$ are continuous $\CM$-valued maps over the many-body covers obeying the constraint  $h_n(\zeta,\xi) = \overline{h_n(\xi,\zeta)}$ and:
\begin{enumerate}[\quad c1.]
\item $h_n$'s are bi-equivariant w.r.t. $\Ss_n$;
\item $h_n$'s are equivariant w.r.t. shifts, $h_n\big(\hat{\mathfrak t}_x(\zeta,\xi)\big ) = h_n(\zeta,\xi)$, $x \in \RM^d$;
\item $h_n$'s vanish whenever the diameter of $V_\zeta  \cup V_\xi$ exceeds a fixed value ${\rm R}_{\rm i}$.
\end{enumerate}
\end{definition}

\begin{remark}{\rm It is important to stress that {\it c2-3} are motivated by the experimental realities. For example, $c2$ is a consequence of the fact that the physical processes involved in the coupling of the quantum resonators are Galilean invariant. Also, $c3$ relates to the simple fact that a finite team of experimenters can only probe the couplings (encoded in the $h_n$'s) between a finite number of sites. Any Hamiltonian that cannot be approximated by a finite range Hamiltonian is beyond the control of the experimental team. Continuity of the coefficients comes from a similar argument: The experimenters can only probe a finite number of lattices, hence they need to rely on extrapolations. As such, Hamiltonians with discontinuous coefficients are beyond their control.
}$\Diamond$
\end{remark}

The following statement gives substance to the formal series in Eq.~\eqref{Eq:HCorresp}:

\begin{proposition} Let $\{\Ll_k\}$ be a net of finite sets converging to $\Ll$ and let $H_{\Ll_k}$ be the truncation of $H_\Ll$ from Eq.~\eqref{Eq:HCorresp} to an element of ${\rm CAR}(\Ll_k)$. Then the map
$$
{\rm ad}_{H_\Ll}(A) : =  \lim_{k \rightarrow \infty} \imath [A,H_{\Ll_k}] , \quad A \in \Dd(\Ll): = \mathop{\cup}_k {\rm CAR}(\Ll_k),
$$
is a derivation that leaves $\Dd(\Ll)$ invariant. Furthermore,
\begin{equation}\label{Eq:PhysDer}
{\rm ad}_{H_\Ll}(A)^\ast : = {\rm ad}_{H_\Ll}(A^\ast), \quad \forall \ A \in \Dd(\Ll),
\end{equation}
and ${\rm ad}_{H_\Ll}$ is closable and in fact a pre-generator of a time evolution.
\end{proposition}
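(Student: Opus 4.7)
The plan is to reduce each of the three claims to an elementary fact obtained from the finite-range condition $c3$ combined with the CAR relations, and then to appeal to the standard pre-generator theorem for quasi-local dynamics. First I would show that, for $A\in {\rm CAR}(\Ll_{k_0})$, the net $\imath[A,H_{\Ll_k}]$ is eventually constant in $k$, which simultaneously produces the limit and establishes invariance of $\Dd(\Ll)$. Fix a finite $V\subset \Ll_{k_0}$ such that $A$ is a polynomial in $\{a_x,a_x^{\ast}\}_{x\in V}$. The CAR relations~\eqref{Eq:CAR} imply $[A,\mathfrak a^{\ast}(\zeta)\mathfrak a(\xi)]=0$ whenever $V_\zeta\cup V_\xi$ is disjoint from $V$, while $c3$ forces $V_\zeta\cup V_\xi$ to have diameter at most ${\rm R}_{\rm i}$. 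Since $\Ll$ is $(r,R)$-Delone, the ${\rm R}_{\rm i}$-neighborhood $N(V)$ meets $\Ll$ in a finite set, and once $\Ll_k\supset N(V)\cap\Ll$ the truncation $H_{\Ll_k}$ contains every contributing term. Hence the commutator stabilizes and the limit lies in ${\rm CAR}(N(V)\cap \Ll)\subset \Dd(\Ll)$.

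The derivation property and the identity~\eqref{Eq:PhysDer} would then follow by passing to the (eventually constant) limit in
\begin{equation*}
[AB,H_{\Ll_k}] = A[B,H_{\Ll_k}] + [A,H_{\Ll_k}]B, \qquad \bigl(\imath[A,H_{\Ll_k}]\bigr)^{\ast} = \imath[A^{\ast},H_{\Ll_k}].
\end{equation*}
The second identity uses the formal self-adjointness $H_{\Ll_k}^{\ast}=H_{\Ll_k}$, which is exactly what the assumption $h_n(\zeta,\xi)=\overline{h_n(\xi,\zeta)}$ encodes once combined with $\bigl(\mathfrak a^{\ast}(\zeta)\mathfrak a(\xi)\bigr)^{\ast}=\mathfrak a^{\ast}(\xi)\mathfrak a(\zeta)$.

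The hard part will be the final assertion, that ${\rm ad}_{H_\Ll}$ is closable and is in fact a pre-generator of a one-parameter group of $\ast$-automorphisms. Here I would invoke the standard theorem on quasi-local dynamics (see e.g.\ Bratteli-Robinson, \emph{Operator Algebras and Quantum Statistical Mechanics II}, Thm.~6.2.4): any symmetric, uniformly bounded, finite-range derivation on the local algebra extends to the generator of a strongly continuous one-parameter group of $\ast$-automorphisms of ${\rm CAR}(\Ll)$. Verifying its hypotheses reduces to a uniform bound $\sup_n\sup_{(\zeta,\xi)}|h_n(\zeta,\xi)|<\infty$ together with a uniform bound on the local coordination numbers. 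The first follows from the translational equivariance $c2$, the finite range $c3$, and continuity of $h_n$: modulo $\RM^d$-translations the set of pairs $(\zeta,\xi)$ with $\mathop{\rm diam}(V_\zeta\cup V_\xi)\le {\rm R}_{\rm i}$ is compact, because the $(r,R)$-Delone property bounds the number of admissible sites inside any ball of radius ${\rm R}_{\rm i}$. This same cardinality bound controls the admissible index $n$ and the number of interaction terms attached to any given site, so together with the symmetric derivation property established above, the hypotheses of the pre-generator theorem hold and the proposition follows.
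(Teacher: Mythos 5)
The paper offers no proof of this proposition, only the subsequent remark that ${\rm ad}_{H_\Ll}$ is an inner-limit derivation in the sense of Bratteli's monograph, which is precisely the route you take. Your argument is correct: the eventual constancy of $\imath[A,H_{\Ll_k}]$ follows from condition \emph{c3}, uniform discreteness of $\Ll$, and the fact that $\mathfrak a^\ast(\zeta)\mathfrak a(\xi)$ is an \emph{even} monomial (so it genuinely commutes, not merely graded-commutes, with observables supported on disjoint sites), while the uniform bounds needed for the Bratteli--Robinson pre-generator theorem for finite-range quasi-local interactions are verified exactly as you indicate.
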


\begin{remark}{\rm Obviously, we are dealing with outer derivations that are inner-limit \cite{BratteliBook1}. Such derivations leave the ideals of ${\rm GICAR}(\Ll)$ invariant, hence they descend on the quotients. This together with the fact that ${\rm GICAR}(\Ll)$ is a solvable $C^\ast$-algebra were the essential ingredients for the characterization of these derivations supplied in \cite{MeslandArxiv2021} and summarized in Theorem~\ref{Th:PunchLine}. 
}$\Diamond$
\end{remark}

All derivations~\eqref{Eq:PhysDer} leave their common domain invariant, hence they can be composed among themselves. This enables us to speak of:

\begin{definition}[\cite{MeslandArxiv2021}]\label{Def:CoreAlg} The core algebra of physical derivations is the sub-algebra $\dot \Sigma(\Ll) \subset {\rm End}\big (\Dd(\Ll)\big )$ generated by derivations ${\rm ad}_{Q_\Ll^n}$ corresponding to
\begin{equation}\label{Eq:Qn}
Q^n_\Ll = \tfrac{1}{n!}   \sum_{\zeta,\xi\in \mathfrak a^{-1}_{n}(\Ll)}
  q_{n} (\zeta,\xi) \mathfrak a^\ast(\zeta) \mathfrak a(\xi),
\end{equation}
where $q$'s are defined over the many-body covers and satisfy the constraints c1-3. Henceforth, an element of $\dot \Sigma (\Ll)$ can be presented as a finite sum
\begin{equation}\label{Eq:Qq}
\Qq = \sum_{\{Q\}} c_{\{Q\}} \ {\rm ad}_{Q_\Ll^{n_1}} \circ \ldots \circ {\rm ad}_{Q_\Ll^{n_k}}, \quad c_{\{Q\}} \in \CM.
\end{equation}
The multiplication is given by the composition of linear maps over $\Dd(\Ll)$.
\end{definition}

\begin{remark}{\rm The linear maps from Eq.~\eqref{Eq:Qq} can be also associated with formal sums as in Eq.~\eqref{Eq:HCorresp}, but the coefficients will no longer have finite range. The linear space spanned by ${\rm ad}_{Q_\Ll^n}$ is, however, invariant against the Lie bracket
\begin{equation}
({\rm ad}_{Q_\Ll^n},{\rm ad}_{Q_\Ll^m}) \mapsto {\rm ad}_{Q_\Ll^n} \circ {\rm ad}_{Q_\Ll^m} - {\rm ad}_{Q_\Ll^m} \circ {\rm ad}_{Q_\Ll^{n}}.
\end{equation}
Indeed, the right side has again coefficients with finite range. Hence, $\dot \Sigma(\Ll)$ is the associative envelope of this Lie algebra.
}$\Diamond$
\end{remark}

The interest is in formalizing the whole algebra $\dot \Sigma(\Ll)$ but also in its representations on the sectors with finite fermions of the Fock space. 

\begin{definition} The vacuum state over ${\rm CAR}(\Ll)$ is defined as
\begin{equation}
\omega(A) = c(\emptyset,\emptyset),
\end{equation}
where $A \in {\rm CAR}(\Ll)$ is assumed to be in its symmetric presentation~\eqref{Eq:SymPres1}. The corresponding GNS representation, also known as the Fock representation, will be denoted by $\pi_\omega$ and the equivalence class of $\mathfrak a(\xi)^\ast$ for this representation will be denoted by $|\xi\rangle :=\mathfrak a(\xi)^\ast + {\rm Null}(\omega)$, $\xi \in \mathfrak a_n^{-1}(\Ll)$ for some $n \in \NM$. When restricted to ${\rm GICAR}$ subalgebra, the Fock representation breaks into a direct sum $\pi_\omega = \bigoplus_{N\in \NM} \pi_\omega^N$, where $\pi_\omega^N$ is carried by the Hilbert space $\Ff_N^{(-)}$ linearly spanned by $|\xi\rangle$, $\xi \in \mathfrak a_N^{-1}(\Ll)$.
\end{definition}

The Fock representation extends over the core algebra $\dot \Sigma(\Ll)$. More precisely:

\begin{proposition}[\cite{MeslandArxiv2021}]\label{Pro:Vaccum} The Fock representation supplies representations of $\dot \Sigma(\Ll)$ by bounded operators on the $N$-fermion sectors $\Ff_N^{(-)}$ of the Fock space. Explicitly:
\begin{enumerate}[{\ \rm 1)}]
\item If $n >N$, then
\begin{equation*}
\pi_\eta^N(Q_\Ll^n) =0.
\end{equation*} 

\item If $n=N$, then
\begin{equation}\label{Eq:GoodRep}
\pi_\eta^N(Q_\Ll^N) = \tfrac{1}{N!}   \sum_{\zeta,\xi \in \mathfrak a_N^{-1}(\Ll)}
  q_{N}\big(\mathfrak t_{\chi_\xi(1)} (\zeta,\xi) \big ) |\zeta \rangle \langle \xi |.
\end{equation}

\item If $n<N$, then
\begin{equation*}
\pi_\eta^N(Q_\Ll^n) = \tfrac{1}{n!(N-n)!}   \sum_{\zeta,\xi\in \mathfrak a^{-1}_{n}(\Ll)} \ \sum_{\gamma \in \mathfrak a^{-1}_{N-n}(\Ll)}
  q_{n} (\xi,\zeta) |\xi \vee \gamma \rangle \langle \zeta \vee \gamma |.
\end{equation*}
\end{enumerate}
Above, $|\xi \vee \gamma \rangle$ is the equivalence class of $\mathfrak a(\xi)^\ast \mathfrak a(\gamma)^\ast$ in the GNS representation corresponding to $\omega$.
\end{proposition}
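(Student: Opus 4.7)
The plan is to prove the three cases by direct computation, evaluating the formal series defining $Q_\Ll^n$ on a generic basis vector $|\eta\rangle = \mathfrak a(\eta)^\ast |0\rangle$ of the $N$-fermion sector $\Ff_N^{(-)}$, with $\eta \in \mathfrak a_N^{-1}(\Ll)$. The central algebraic tool is the evaluation of $\mathfrak a(\xi)\mathfrak a(\eta)^\ast |0\rangle$ for $\xi \in \mathfrak a_n^{-1}(\Ll)$ via repeated use of the CAR relations~\eqref{Eq:CAR}. First, if $n > N$, then $\mathfrak a(\xi)$ is a product of $n$ annihilators while $\mathfrak a(\eta)^\ast |0\rangle$ contains only $N$ excitations; at most $N$ of the annihilators can be absorbed into the creators before the remaining ones hit the vacuum, so the whole expression vanishes identically, proving item 1.

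For $n=N$, the anticommutation relations $\{a_x, a_y^\ast\} = \delta_{x,y}$ imply that $\mathfrak a(\xi)\mathfrak a(\eta)^\ast |0\rangle$ is nonzero precisely when $V_\xi = V_\eta$, in which case it equals $(-1)^{s}|0\rangle$, where $s \in \Ss_N$ is the unique permutation with $\chi_\eta = \chi_\xi \circ s$. The $\Ss_N$-bi-equivariance of $q_N$ (condition c1) matches these signs exactly, while the shift-equivariance c2 allows one to rewrite $q_N(\zeta,\xi)$ as $q_N\big(\hat{\mathfrak t}_{\chi_\xi(1)}(\zeta,\xi)\big)$, which places the argument inside $\widehat \Gg^{(N)}_{\Ll_0}$. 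Combining with $\mathfrak a(\zeta)^\ast|0\rangle = |\zeta\rangle$ yields the rank-one structure of item 2, the $\frac{1}{N!}$ prefactor being preserved because the redundant sum over the $\Ss_N$-orbit of $\xi$ is kept.

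The case $n < N$ is handled by the same strategy, but now $\mathfrak a(\xi)\mathfrak a(\eta)^\ast |0\rangle$ produces an $(N-n)$-particle state: nonzero precisely when $V_\xi \subset V_\eta$, and equal to $\pm \mathfrak a(\gamma)^\ast |0\rangle$ for the unique $\gamma \in \mathfrak a_{N-n}^{-1}(\Ll)$ with $V_\gamma = V_\eta \setminus V_\xi$ (once an ordering convention for $\gamma$ is fixed). Applying the remaining $\mathfrak a(\zeta)^\ast$ yields the vector $|\zeta \vee \gamma\rangle$ up to the accumulated sign. Using the $\Ss_n$-bi-equivariance of $q_n$, together with the antisymmetry of $|\xi \vee \gamma\rangle$ under simultaneous permutations of the two groups of labels, one can re-express the sum in a form that is symmetric in both the $n$-tuple pair $(\zeta,\xi)$ and the complementary $(N-n)$-tuple $\gamma$; this produces the combinatorial prefactor $\frac{1}{n!(N-n)!}$. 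For each fixed $|\eta\rangle$, boundedness and convergence of the resulting operator on $\Ff_N^{(-)}$ follow from the finite interaction range condition c3, which confines $V_\xi$ to lie inside $V_\eta$ and forces $V_\zeta$ into a fixed-radius neighborhood of $V_\xi$, leaving only finitely many nonzero contributions.

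The main obstacle is the careful bookkeeping of anticommutation signs and their interaction with the bi-equivariance conditions: one must verify that the $(-1)^s$ factors arising from CAR manipulations combine consistently with those encoded in $q_n$ through c1, so that the formulas descend without ambiguity to the antisymmetric quotient in which $|\Lambda_s(\xi)\rangle = (-1)^s |\xi\rangle$ (cf. Proposition~\ref{Pro:FockDescent}). Once this sign matching is settled, the representation property — namely that $\pi_\omega^N(Q_\Ll^n) \, \pi_\omega^N(Q_\Ll^m)$ on $\Ff_N^{(-)}$ agrees with the Fock representation of the product in ${\rm End}\big(\Dd(\Ll)\big)$, and hence that the corresponding derivations ${\rm ad}_{Q_\Ll^n}$ are implemented as commutators of the explicit bounded operators just obtained — is a routine consequence of $\pi_\omega$ being a $\ast$-homomorphism on $\Dd(\Ll)$.
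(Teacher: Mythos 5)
The paper itself does not prove this proposition; it is imported verbatim from \cite{MeslandArxiv2021}, so there is no in-text argument to compare against. Judged on its own terms, your proposal follows the natural route --- evaluate the formal series on the basis vectors $|\eta\rangle$ of $\Ff_N^{(-)}$ and normal-order $\mathfrak a(\xi)\mathfrak a(\eta)^\ast|0\rangle$ via the CAR --- and the case analysis ($n>N$ kills the vacuum, $n=N$ forces $V_\xi=V_\eta$, $n<N$ forces $V_\xi\subset V_\eta$ and produces the complementary $\gamma$) is correct in outline, as is the observation that c2 is what converts $q_N(\zeta,\xi)$ into the groupoid-valued argument $q_N\big(\hat{\mathfrak t}_{\chi_\xi(1)}(\zeta,\xi)\big)$ appearing in \eqref{Eq:GoodRep}.

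Two points need strengthening. First, your justification of boundedness --- that c3 leaves ``only finitely many nonzero contributions'' for each fixed $|\eta\rangle$ --- only shows the operator is well defined on the algebraic span of the $|\eta\rangle$; it does not bound the operator norm. What you actually need is a Schur-type estimate: c3 together with the $r$-uniform discreteness of $\Ll$ gives a \emph{uniform} bound (independent of $\eta$) on the number of nonvanishing matrix elements in each row and column, and c2 plus c3 confine the relevant arguments of $q_n$ to a compact set on which continuity yields a uniform sup bound; only the combination of these two uniformities gives a bounded operator on the infinite-dimensional $\Ff_N^{(-)}$. Second, you correctly identify the sign bookkeeping --- matching the $(-1)^s$ factors produced by the CAR reordering against the bi-equivariance c1 so that the formulas descend to the quotient where $|\Lambda_s(\xi)\rangle=(-1)^s|\xi\rangle$ --- as the main obstacle, but you then only assert that it works out rather than carrying it through; since this is precisely the content that makes the combinatorial prefactors $\tfrac{1}{N!}$ and $\tfrac{1}{n!(N-n)!}$ come out right (each unordered configuration is counted $N!$, respectively $n!\,(N-n)!$, times with identical signed contributions), it should be executed explicitly rather than deferred.
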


We point to the similarity of the above representation in the case $n=N$ and the left regular representations~\eqref{Eq:PiL} of the groupoid algebra over the $N$-body covers. There are two exceptional factors at work here, which enabled us in \cite{MeslandArxiv2021} to ultimately link the two representations. Firstly, $\Ff_N^{(-)}$ is isomorphic to the Hilbert space $\ell^2\big(\mathfrak a_N^{-1}(\Ll)\big )/\Nn$ appearing in Proposition~\ref{Pro:FockDescent}, with the isomorphism supplied by $|\xi\rangle \! \rangle + \Nn \mapsto |\xi \rangle$. Secondly, the sub-space of $\pi_\omega\big(\dot \Sigma(\Ll)\big )$ spanned by $\pi_\omega(Q_\Ll^N )$, with $Q_\Ll^N$ as in~ \eqref{Eq:Qn}, is an ideal of $\pi_\omega^N\big(\dot \Sigma(\Ll)\big )$. We now can state the main conclusion of \cite{MeslandArxiv2021}:

\begin{theorem}[\cite{MeslandArxiv2021}]\label{Th:PunchLine} Let $\Ll \in \Xi_{\Ll_0}$. Then:
\begin{enumerate}[\  \rm i)]
\item All representations of the physical Hamiltonians on the finite sectors of the Fock space can be generated from a left-regular representation of an essential extension of the bi-equivariant groupoid algebras. More precisely, for any $N \in \NM^\times$, there exists an embedding
\begin{equation}
\pi_\omega^N\big (\dot \Sigma(\Ll)\big ) \rightarrowtail \pi_\eta \Big (\Mm\big (C^\ast_{r,\Ss_N}\big(\widehat \Gg_{\Ll_0}^{(N)},\CM\big)\big ) \Big ),
\end{equation}
where $\pi_\eta$ is the left regular representation corresponding to any $\eta \in \mathfrak a_N^{-1}(\Ll)$ and $\Mm$ indicates the extension to multiplier algebra. 
\item The algebra $\dot \Sigma(\Ll)$ itself accepts a completion to a pro-$C^\ast$-algebra:
$$
\dot \Sigma(\Ll) \rightarrowtail \varprojlim \  \bigoplus_{n=1}^N \pi_\eta \Big (\Mm\big (C^\ast_{r,\Ss_n}\big(\widehat \Gg_{\Ll_0}^{(n)},\CM\big)\big ) \Big ) \otimes \pi_\eta \Big (\Mm\big (C^\ast_{r,\Ss_n}\big(\widehat \Gg_{\Ll_0}^{(n)},\CM\big)\big ) \Big )^{\rm op}. 
$$
\end{enumerate}
\end{theorem}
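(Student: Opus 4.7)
The plan is to establish part (i) first by matching the Fock-space action of the generators $Q_\Ll^n$ of $\dot\Sigma(\Ll)$ with the left-regular representation of the multiplier algebra of $C^{*}_{r,\Ss_N}(\widehat \Gg_{\Ll_0}^{(N)},\CM)$, and then to bootstrap (ii) from (i) by recognising that elements of $\dot\Sigma(\Ll)$ act on the ${\rm CAR}$-bimodule as iterated commutators, which sit naturally inside the algebraic tensor product of the multiplier algebra with its opposite.

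The first step is to transport the $N$-body sector $\Ff_N^{(-)}$ onto the quotient $\ell^2(\mathfrak a_N^{-1}(\Ll))/\Nn$ of Proposition \ref{Pro:FockDescent}, using the isomorphism $|\xi\rangle\!\rangle + \Nn \mapsto |\xi\rangle$ indicated after Proposition \ref{Pro:Vaccum}. Under this identification, $\pi_\eta$ acts on $\Ff_N^{(-)}$, and comparing Proposition \ref{Pro:Vaccum}(2) with the left-regular formula \eqref{Eq:PiL} gives $\pi_\omega^N(Q_\Ll^N) = \pi_\eta(f_N)$ for a compactly supported bi-equivariant $f_N$ obtained from $q_N$ by re-indexing via $\hat{\mathfrak t}_{\chi_\xi(1)}$. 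Conditions \textit{c1-c3} on $q_N$ translate precisely into the membership $f_N \in C_{c,\Ss_N}(\widehat \Gg_{\Ll_0}^{(N)},\CM)$: \textit{c1} gives $\Ss_N$-bi-equivariance, \textit{c2} translates to invariance along the fibres of the blow-up, and \textit{c3} enforces compact support.

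The subtler case is $n < N$, where Proposition \ref{Pro:Vaccum}(3) displays $\pi_\omega^N(Q_\Ll^n)$ as a sum over spectator configurations $\gamma \in \mathfrak a_{N-n}^{-1}(\Ll)$, yielding an operator whose formal support on $\widehat \Gg_{\Ll_0}^{(N)}$ is no longer compact. The key claim to verify is that such operators act as genuine multipliers: given any $g \in C_{c,\Ss_N}(\widehat \Gg_{\Ll_0}^{(N)},\CM)$, the convolution \eqref{Eq:Conv} against the infinite-range kernel produced by $Q_\Ll^n$ collapses to a finite sum and remains in $C_{c,\Ss_N}$. This follows from the finite-range condition \textit{c3} together with the bi-equivariance of $q_n$ and the definition \eqref{Eq:2Action2} of the induced $\Ss_N$ left/right actions. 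The outcome is that $\pi_\omega^N(Q_\Ll^n) \in \pi_\eta\big(\Mm(C^{*}_{r,\Ss_N}(\widehat \Gg_{\Ll_0}^{(N)},\CM))\big)$; since multipliers form an algebra, closing under composition of generators yields the embedding of (i), with injectivity inherited from faithfulness of $\pi_\eta$ on the generating set.

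For (ii) I would expand an arbitrary $\Qq \in \dot\Sigma(\Ll)$ according to \eqref{Eq:Qq} and replace each elementary derivation ${\rm ad}_{Q_\Ll^n}$ by the ${\rm CAR}$-bimodule element $\imath(Q_\Ll^n \otimes 1 - 1 \otimes Q_\Ll^n)$. A $k$-fold composition then expands into a sum of monomials $Q_{i_1}\cdots Q_{i_p} \otimes (Q_{j_q}\cdots Q_{j_1})$ inside the bimodule algebraic tensor product, and applying the preceding two steps separately to each tensor factor places the monomial in the target of (ii); the direct sum over $n=1,\dots,N$ records the particle numbers of the constituent $Q_\Ll^{n_i}$. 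The pro-$C^{*}$ topology is the projective one induced by the truncations as $N$ decreases, and compatibility of these truncations yields the inverse limit. The main obstacle I expect is carrying out the multiplier argument uniformly in $N$: one must check that the matching between the sums over spectator configurations $\gamma$ and the $\Ss_N$-bi-equivariance of the groupoid survives the restriction maps of the inverse system, which is precisely where the 2-action formalism of Section \ref{Sec:2Action} together with the blow-up description of Section \ref{Sec:CanonicalG} become essential.
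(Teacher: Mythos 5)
You should first be aware that this paper does not actually prove Theorem~\ref{Th:PunchLine}: it is quoted from \cite{MeslandArxiv2021}, and the present text only records the two ingredients it calls ``exceptional factors'' --- the isomorphism $\Ff_N^{(-)}\cong \ell^2\big(\mathfrak a_N^{-1}(\Ll)\big)/\Nn$ and the fact that the span of the $\pi_\omega^N(Q_\Ll^N)$ is an \emph{ideal} in $\pi_\omega^N\big(\dot\Sigma(\Ll)\big)$ --- together with the remark that the derivations descend to quotients of the solvable algebra ${\rm GICAR}(\Ll)$. Your outline for part (i) is consistent with the first ingredient and with the identification $\pi_\omega^N(Q_\Ll^N)=\pi_\eta(f_N)$ via Proposition~\ref{Pro:Vaccum}(2) and Eq.~\eqref{Eq:PiL}; your reading of conditions \textit{c1--c3} is essentially right (with the caveat that for $n=N$ compact support of $f_N$ uses both the diameter bound \textit{c3} \emph{and} compactness of the transversal, since $\widehat\Xi_{\Ll_0}^{(N)}$ itself is not compact for $N>1$, cf.\ Remark~\ref{Re:MBCoverXi}). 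Your part (ii), replacing ${\rm ad}_{Q}$ by $Q\otimes 1 - 1\otimes Q^{\rm op}$, is the natural reading of the $\Mm\otimes\Mm^{\rm op}$ structure in the statement and is very plausibly what the cited proof does.

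The genuine gap is in how you obtain the \emph{embedding} into the multiplier algebra. Exhibiting $\pi_\omega^N(Q_\Ll^n)$, $n<N$, as a double centralizer of $C_{c,\Ss_N}\big(\widehat\Gg_{\Ll_0}^{(N)},\CM\big)$ only produces a $*$-homomorphism $\pi_\omega^N\big(\dot\Sigma(\Ll)\big)\to \Mm\big(C^\ast_{r,\Ss_N}(\widehat\Gg_{\Ll_0}^{(N)},\CM)\big)$; for this map to be injective you must show that the copy of the reduced bi-equivariant algebra sits inside $\pi_\omega^N\big(\dot\Sigma(\Ll)\big)$ as an \emph{essential} ideal, i.e.\ that no nonzero element of $\pi_\omega^N\big(\dot\Sigma(\Ll)\big)$ annihilates it. Your one-line appeal to ``faithfulness of $\pi_\eta$ on the generating set'' does not address this: faithfulness on generators says nothing about the algebra they generate, and the failure mode to exclude is precisely an element of the corona killing the ideal. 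This is where the unused ingredient the paper flags --- that the derivations leave the ideals of ${\rm GICAR}(\Ll)$ invariant and hence descend to the quotients of a solvable $C^\ast$-algebra --- does real work in \cite{MeslandArxiv2021}, and your proposal has no substitute for it. Relatedly, the multiplier claim itself needs a boundedness/adjointability check for the spectator sums in Proposition~\ref{Pro:Vaccum}(3) (the formal kernel has non-compact support, so norm control is not automatic from \textit{c3} alone); you assert the collapse to a finite sum but do not estimate the resulting operator. Both points are fixable, but as written the proposal establishes a homomorphism, not an embedding.
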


\begin{remark}{\rm Given the above, the pro-$C^\ast$-algebra
\begin{equation}
\Sigma(\Ll_0) : = \varprojlim \  \bigoplus_{n=1}^N \Mm\big (C^\ast_{r,\Ss_n}\big(\widehat \Gg_{\Ll_0}^{(n)},\CM\big)\big )  \otimes \Mm\big (C^\ast_{r,\Ss_n}\big(\widehat \Gg_{\Ll_0}^{(n)},\CM\big)\big ) ^{\rm op}
\end{equation}
can be seen as the sought completion of the core algebra of Hamiltonians. For this reason, we will refer to $\Sigma(\Ll_0)$ as the algebra of physical Hamiltonians over the Delone set $\Ll_0$.
}$\Diamond$
\end{remark} 

\begin{remark}{\rm There are direct physical interpretations of the essential ideal $C^\ast_{r,\Ss_N}\big(\widehat \Gg_{\Ll_0}^{(N)},\CM\big)$ and of the corona of the multiplier algebra: Under a time evolution generated by Hamiltonians from the algebra $C^\ast_{r,\Ss_N}\big(\widehat \Gg_{\Ll_0}^{(N)},\CM\big)$, the $N$ fermions evolve in a single cluster while, for a dynamics generated by Hamiltonians from the corona of the multiplier algebra, the fermions can scatter into two or more clusters.
}$\Diamond$
\end{remark}

\subsection{Transport coefficients} The anticommutation relations~\eqref{Eq:CAR} are invariant against the $U(1)$-gauge transformations $a_x \mapsto e^{\imath \langle k, x\rangle} a_x$, $k \in \RM$. Then the universality of the ${\rm CAR}$-algebra assures us that this transformation is in fact an automorphism of ${\rm CAR}(\Ll)$. Any automorphism can be lifted to inner-limit derivations, in particular to the ones generating the core algebra $\dot \Sigma(\Ll)$. It is straightforward to verify that these $U(1)$-gauge transformations coincide with the natural $\RM^d$-action over $C^\ast_{r,\Ss_N}\big(\widehat \Gg_{\Ll_0}^{(N)},\CM\big)$ introduced in Proposition~\ref{Pro:NRD}.

\vspace{0.2cm}

To clarify the physical meaning of the $\RM^d$-actions and of the associated derivations, we recall two important macroscopic physical observables, one corresponding to the fermion density $\mathfrak n$ and one to the fermion position $\mathfrak X$,
\begin{equation}
\mathfrak n = \sum_{x \in \Ll} a_x^\ast a_x, \quad \mathfrak X = \sum_{x \in \Ll} x \, a_x^\ast a_x,
\end{equation}
which can be defined for any $\Ll$ from ${\rm Del}_R^r(\RM^d)$. Note that $\mathfrak n$ belongs to the core algebra $\dot \Sigma(\Ll)$, but $\mathfrak X$ does not. Yet, we have
\begin{equation}
\partial_j A = \imath [\mathfrak X_j,A], \quad A \in {\rm CAR}(\Ll).
\end{equation}

\vspace{0.2cm}

We now consider $h \in C^\infty_{\Ss_N}\big(\widehat \Gg_{\Ll_0}^{(N)},\CM\big)$ and the corresponding Hamiltonian
\begin{equation}
H_\Ll = \tfrac{1}{N!}   \sum_{\zeta,\xi \in \mathfrak a_N^{-1}(\Ll)}
  h\big(\mathfrak t_{\chi_\xi(1)} (\zeta,\xi) \big ) \mathfrak a^\ast(\zeta) \mathfrak a(\xi)
\end{equation}
on $\ell^2(\Ll)$ with $\Ll \in \Xi_{\Ll_0}$. The macroscopic physical observables corresponding to the particle current-density vector of components is defined as
 \begin{equation}
 \mathfrak J_j : = \frac{d \mathfrak X_j}{d t} = \imath [\mathfrak X_j,H_\Ll],
\end{equation}
and we observe that 
\begin{equation}
\pi_\omega^N(\mathfrak J_j) = \pi_\eta(\partial_j h), \quad j=1,\ldots,d,
\end{equation}
for any $\eta \in \mathfrak u^{-1}(\Ll)$.

\begin{remark}{\rm Some of the above attributes must have confused the reader. Indeed, we declared in subsection~\ref{Sec:ALO} that the algebra of physical observables is ${\rm CAR}(\Ll)$. The elements $\mathfrak n$ and $\mathfrak J$ do not belong to ${\rm CAR}(\Ll)$ and this is why we used the terminology macroscopic physical observables. Their measurement requires entirely different experimental setups capable of probing the samples over large, macroscopic regions. Note that, in order to evaluate expected values for these macroscopic observables, one actually needs a state on the algebra $\dot \Sigma(\Ll)$. There is, of course, a dynamics on this algebra, generated by the inner derivation $\imath [\cdot , H_\Ll]$.
}$\Diamond$
\end{remark}

Now, let $p$ be a spectral projection of $h$, both seen as elements of $C^\ast_{r,\Ss_N}\big(\widehat \Gg_{\Ll_0}^{(N)},\CM\big)$. If $p$ belongs to the domain of the trace, then $f \mapsto \widehat \Tt_{\Ll_0}^{(N)}(f p)$ defines a state over $C^\ast_{r,\Ss_N}\big(\widehat \Gg_{\Ll_0}^{(N)},\CM\big)$, which, like any other state, can be uniquely extended over the multiplier algebra, hence over $\pi_\omega^N\big ( \dot \Sigma(\Ll)\big )$. In particular, we can evaluate the expected value of the fermion density $\mathfrak n$ and the result will be a non-zero value. Hence, 
\begin{equation}\label{Eq:TheState}
\dot \Sigma(\Ll) \ni Q \mapsto \widehat \Tt_{\Ll_0}^{(N)}\big (\pi_\omega^N(Q) \, \pi_\omega^N(p)\big)
\end{equation} 
generates a state at finite fermion density over the core algebra  $\dot \Sigma(\Ll)$, which is invariant against the dynamics. Above, $\widehat\Tt_{\Ll_0}^{(N)}$ is evaluated via the relation~\eqref{Eq:TrV}. The state~\eqref{Eq:TheState} corresponds to a uniform population of the energy spectrum covered by $p$, something that is routinely achieved in cold atom experiments, for example. Such states can be definitely achieved in electron systems if interest exists.

\vspace{0.2cm}

We can attach numerical invariants to these spectral states, by pairing the even cyclic cocycles~\eqref{Eq:CCoC} with the $K$-theoretic class of the spectral projection:
\begin{equation}\label{Eq:ChJ}
{\rm Ch}_J(p) = \Lambda_J \sum_{\lambda \in \Ss_J} (-1)^\lambda  \ \widehat\Tt_{\Ll_0}^{(N)} \Big ( p \prod_{j=1}^{|J|} \partial_{\lambda_j} p \Big ).
\end{equation}
They have a direct connection with various transport coefficients, which are all computed from the current correlation functions
\begin{equation}
\big \langle \mathfrak J_{j_1} \ldots \mathfrak J_{j_k} \big \rangle = \widehat \Tt_{\Ll_0}^{(N)}\big ((\partial_{j_1} h) \ldots (\partial_{j_k} h) \big ).
\end{equation}
By repeating the arguments from \cite{BellissardJMP1994,BellissardLNP2003}, which are purely algebraic in nature, the cocycles with $|J|=2$ can be given the physical interpretation of the linear transport coefficient for the plane corresponding to the indices $J$. Furthermore, while we have not covered yet the presence of a magnetic field, we are confident that, once that is done, the calculation from Section~5.6 of \cite{ProdanSpringer2016} can be repeated and the cocycles with $|J|\geq 4$ can be identified with specific non-linear transport coefficients.

\vspace{0.2cm}

There is definitely a whole lot to be learned about the physical interpretation of the states~\eqref{Eq:TheState}. Various strategies for completing the thermodynamic limit $N \to \infty$ have been listed in \cite{MeslandArxiv2021}. In particular, it was pointed out that, if a Hamiltonian is generated as above at a fix value of $N$, then such Hamiltonian lands in the corona of the multiplier algebra of $C^\ast_{r,\Ss_N}\big(\widehat \Gg_{\Ll_0}^{(N)},\CM\big)$, for larger $N$'s. However, such Hamiltonian can be pulled into $C^\ast_{r,\Ss_N}\big(\widehat \Gg_{\Ll_0}^{(N)},\CM\big)$ by sandwiching it between quasi-central approximations of the identity. These approximations can be crafted to control the fermion density and, essentially, generate finite-volume approximations that can be used to achieve thermodynamic limits with targeted fermion densities. Hence, the states~\eqref{Eq:TheState} are expected to be quite accurate for $N$ large enough. In this respect, we want to point out for the reader that the many-body Hamiltonians and associate states for various fractional Hall sequences constructed in \cite{ProdanPRB2009} are very close in spirit with the prescription  we just described.

\begin{remark}{\rm While the relation between the states~\eqref{Eq:TheState} and the equilibrium thermodynamic states is definitely of great importance, we want to point to the reader that the focus in materials science is firmly shifting from equilibrium to dynamical states. The activities related to the latter consists precisely in identifying and classifying Hilbert modules that are invariant under the dynamics, with the interest coming from a search and discovery of new manifestations of the so called bulk-defect correspondence principle \cite{ProdanJPA2021}. We believe that the framework supplied by the groupoid approach and the states~\eqref{Eq:TheState} designed with finite number of fermions will be an abundant source of interesting new examples, as already hinted by the numerical experiments from \cite{LiuPRB2022}.
}$\Diamond$
\end{remark}

\section{Index Theorems for the 1-Fermion Setting}

At this moment, we are not able to produce a complete proof of the index theorem for the pairings of the cocycles~\eqref{Eq:CCoC} and the K-theory of the corresponding $C^\ast$-algebras for the cases $N>1$. However, as a preparation, we reworked the proofs of the index theorems from \cite{BourneJPA2018,BourneAHP2020} for the case $N=1$, by generalizing the geometric identities from periodic lattices \cite{ProdanJPA2013,ProdanJFA2016} to generic Delone sets. The parings~\eqref{Eq:ChJ} for generic fermion numbers are very close in spirit to the case $N=1$ and, in our opinion, an index theorem is now within reach if we follow the template presented here. We recall that the challenge here is to validate the proofs in Sobolev regimes. We will showcase only the case of even dimensions because the case of odd dimensions is very similar.

\vspace{0.2cm}

Since we retreat to the case $N=1$, hence to the groupoid $\Gg_{\Ll_0}$ introduced in subsection~\ref{Sec:CanonicalG}, we can simplify the notation a bit. Indeed, in this case, the left regular representations are indexed by $\Ll \in \Xi_{\Ll_0}$, hence they will be denoted by $\pi_\Ll$. The semi-finite trace $\widehat \Tt_{\Ll_0}^{(N)}$ becomes a continuous trace, which will be denoted by $\Tt$.

\subsection{Dirac operators on Delone sets} We introduce here a class of Dirac operators over a point set $\Ll \in \Omega_{\Ll_0}$, with $\Ll_0$ fixed in ${\rm Del}_R^r(\RM^d)$, and establish some of their basic properties. For this, let $\gamma_1, \ldots, \gamma_{d}$ be an irreducible representation of the complex Clifford algebra on $\CM^n$, $n=2^\frac{d}{2}$:
\begin{equation}
\gamma_i \gamma_j + \gamma_j \gamma_i = 2 \delta_{ij}, \quad i,j=1,\ldots,d.
\end{equation}
When $d$ is even, the Clifford algebra accepts the grading $\gamma_0=-\imath^{-n}\gamma_1 \ldots \gamma_{d}$. We denote by $\mathrm{tr}_\gamma$ the ordinary trace over $\CM^n$ and let
\begin{equation}
\Hh_\Ll=\CM^n \otimes \ell^2(\Ll),
\end{equation}
which is the Hilbert space which will carry the index theorems.
We will also use the same symbol $\mathrm{Tr}$ for the ordinary trace over the Hilbert space $\Hh_\Ll$. 

\begin{definition} For $\Ll \in \Omega_{\Ll_0}$, we call $\delta_{\Ll} : \Ll \to \RM^d$ a set of acceptable shifts if the set $\{\delta_\Ll(x), \ x \in \Ll\}$ is bounded in $\RM^d$ and $x - \delta_\Ll(x) \neq 0$ for all $x \in \Ll$.
\end{definition} 

\begin{definition} The Dirac operator $D_{\Ll,\delta_\Ll}$ associated to $\Ll \in \Omega_{\Ll_0}$ and a set $\delta_\Ll$ of acceptable shifts is the unbounded self-adjoint operator
\begin{equation}
D_{\Ll,\delta_\Ll} : =\sum_{j=1}^{d} \sum_{x \in \Ll} \gamma_j \otimes \big (x-\delta_\Ll(x)\big )_j \, |x\rangle \langle x|.
\end{equation}
\end{definition}

The spectrum of the operator is
\begin{equation}
{\rm Spec}(D_{\Ll,\delta_\Ll}) = \big \{\pm|x-\delta_\Ll(x)|,\ x \in \Ll \big \}
\end{equation}
and, by construction, $0\notin {\rm Spec}(D_{\Ll,\delta_\Ll})$. This will enable us to define the phase of the Dirac operator but the consideration of the shifts has a much deeper motivation, revealed in the proof of the main index theorem. Let us add that, under the natural unitary map between $\ell^2(\Ll)$ and $\ell^2\big(\mathfrak t_y(\Ll)\big )$, $y \in \RM^d$, we have
\begin{equation}\label{Eq:UD}
D_{\Ll,\delta_\Ll} \mapsto D_{\mathfrak t_y(\Ll),\mathfrak t_y \circ \delta_\Ll \circ \mathfrak t_y^{-1}}.
\end{equation}
 and note that $\mathfrak t_y \circ \delta_\Ll \circ \mathfrak t_y^{-1}$ is again an acceptable set of shifts for $\mathfrak t_y (\Ll)$.

\begin{notation}{\rm Given a metric space $(Y,|\cdot |)$ and $y \in Y$, we will consistently use the notation $\hat y : = y/|y|$ if $|y| \neq 0$ and $\hat y : =0$ otherwise. Furthermore, we introduce the shorthand $\sum_{j=1}^d \gamma_j \otimes v_j = \gamma \cdot v$, for any $d$-tuple $v$ from a linear space.
}$\Diamond$
\end{notation}

\begin{definition} Let $\Ll \in \Omega_{\Ll_0}$ and $\delta_\Ll$ an acceptable set of shifts. Then, we define the phase of the corresponding Dirac operator as
\begin{equation}
\hat D_{\Ll,\delta_\Ll} = D_{\Ll,\delta_\Ll}/\sqrt{D^2_{\Ll,\delta_\Ll}}.
\end{equation} 
\end{definition}

Before stating the standard properties of the Dirac operators and their phases, we mention the following simple asymptotic estimate, which will come handy on several occasions:

\begin{proposition} For any $x \in \RM^d$ with $|x|=1$ and $s \in \RM$, we have 
\begin{equation}\label{Eq:Assy0}
\lim_{s \to \infty} s \big(\widehat{s x+y} - \widehat{sx}\big)  = y -\tfrac{1}{2} \langle x,  y \rangle \, x. 
\end{equation}
\end{proposition}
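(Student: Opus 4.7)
The plan is a straightforward asymptotic expansion of $\widehat{sx+y}=(sx+y)/|sx+y|$ in powers of $1/s$, comparing it with $\widehat{sx}=x$ (which holds for every $s>0$ because $|x|=1$). First I would write
\begin{equation*}
|sx+y|^{2}=s^{2}+2s\langle x,y\rangle+|y|^{2}=s^{2}\Bigl(1+\tfrac{2\langle x,y\rangle}{s}+\tfrac{|y|^{2}}{s^{2}}\Bigr),
\end{equation*}
take the positive square root for large $s$, and apply the scalar Taylor expansion $\sqrt{1+t}=1+t/2-t^{2}/8+O(t^{3})$ with $t=2\langle x,y\rangle/s+|y|^{2}/s^{2}$. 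This yields a controlled asymptotic formula for $|sx+y|$ to whatever order is needed, and, via a geometric series, a corresponding expansion of $1/|sx+y|$ in the asymptotic variable $1/s$.

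In the second step, I would multiply this scalar expansion by the vector $sx+y$ and regroup the result by orders of $1/s$. Reading off the coefficient of $1/s$ identifies the first-order correction of $\widehat{sx+y}$ to $x$. Subtracting $\widehat{sx}=x$, multiplying the remaining difference by $s$, and letting $s\to\infty$ kills all contributions of order $1/s^{2}$ or smaller and produces the stated limit. The calculation is entirely routine; the only care needed is bookkeeping the Taylor expansion to sufficient order so that every term contributing to the coefficient of $1/s$ in $\widehat{sx+y}$ is captured. A useful sanity check for fixing the numerical coefficient in front of $\langle x,y\rangle x$ is to specialize to $y\parallel x$, where the left-hand side vanishes identically because $\widehat{sx+y}=\widehat{(s+\langle x,y\rangle)x}=x$, so the right-hand side must vanish as well. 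No serious obstacle is present, and the argument lifts immediately to Eq.~\eqref{Eq:Assy0} as written.
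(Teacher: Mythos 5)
Your method (expanding $|sx+y|^{-1}$ in powers of $1/s$ and reading off the coefficient of $1/s$) is the right and essentially only way to prove an estimate of this kind; the paper itself offers no proof, dismissing the claim as a ``simple asymptotic estimate.'' But your proposal stops just short of doing the computation, and if you actually carry it out the answer is \emph{not} the right-hand side of Eq.~\eqref{Eq:Assy0} as printed. One finds
\begin{equation*}
|sx+y| = s + \langle x,y\rangle + O(s^{-1}), \qquad
\widehat{sx+y} = \Bigl(x+\tfrac{y}{s}\Bigr)\Bigl(1-\tfrac{\langle x,y\rangle}{s}+O(s^{-2})\Bigr)
= x + \tfrac{1}{s}\bigl(y-\langle x,y\rangle\,x\bigr) + O(s^{-2}),
\end{equation*}
so the limit is $y-\langle x,y\rangle\,x$, i.e.\ the orthogonal projection of $y$ onto $x^{\perp}$ (equivalently, the derivative of the normalization map at the unit vector $x$ in the direction $y$), with coefficient $1$ rather than $\tfrac{1}{2}$ in front of $\langle x,y\rangle\,x$.

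Your own sanity check already detects this: for $y\parallel x$ the left-hand side vanishes identically, which forces the coefficient to be $1$ so that $y-\langle x,y\rangle\,x=0$; with the printed coefficient $\tfrac{1}{2}$ the right-hand side equals $\tfrac{1}{2}\langle x,y\rangle\,x\neq 0$. So the closing claim that the argument ``lifts immediately to Eq.~\eqref{Eq:Assy0} as written'' contradicts the check you yourself propose. The resolution is that the factor $\tfrac{1}{2}$ in the statement is an error in the paper --- note that the later proof of Proposition~\ref{Pro:FredMod} uses $\gamma_i(\hat x)=\gamma_i-\hat x_i(\hat x\cdot\gamma)$, i.e.\ the coefficient $1$, consistent with the computation above. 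You should complete the expansion explicitly, state the corrected limit $y-\langle x,y\rangle\,x$, and flag the discrepancy rather than endorse the formula as written.
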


\begin{proposition}\label{Pro:HadD1} Let $\Ll \in \Omega_{\Ll_0}$ and $\delta_\Ll$ and $\delta'_\Ll$ be any two acceptable sets of shifts. Then: 
\begin{enumerate}[\rm i)]
\item $\hat D_{\Ll,\delta_\Ll} - \hat D_{\Ll,\delta'_\Ll}$ is a compact operator.
\item $|D_{\Ll,\delta_\Ll}|^{-d} - |D_{\Ll,\delta'_\Ll}|^{-d}$ is a trace class operator.
\end{enumerate}
\end{proposition}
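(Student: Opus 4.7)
Set $u(x):=x-\delta_\Ll(x)$ and $v(x):=x-\delta'_\Ll(x)$, both nonzero by the acceptability hypothesis. Since $\delta_\Ll$ and $\delta'_\Ll$ have bounded image, there is a constant $M>0$ such that $|u(x)-x|,|v(x)-x|\le M$ for all $x\in\Ll$; in particular $|u(x)-v(x)|\le 2M$ and $|u(x)|,|v(x)|=|x|+O(1)$ as $|x|\to\infty$. The crucial observation is that all four operators in the statement are simultaneously block-diagonal with respect to the basis $\{|x\rangle\}_{x\in\Ll}$ of $\ell^2(\Ll)$ (with $\CM^n$ the Clifford fibre), so both claims reduce to diagonal estimates combined with a density estimate for $\Ll$. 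Because $(\gamma\cdot w)^2=|w|^2\cdot 1_{\CM^n}$, the $(x,x)$-block of $\hat D_{\Ll,\delta_\Ll}$ is $\gamma\cdot\widehat{u(x)}$ and that of $|D_{\Ll,\delta_\Ll}|^{-d}$ is $|u(x)|^{-d}\cdot 1_{\CM^n}$, and likewise with primes.

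\emph{Part (i).} The $(x,x)$-block of $\hat D_{\Ll,\delta_\Ll}-\hat D_{\Ll,\delta'_\Ll}$ is $\gamma\cdot(\widehat{u(x)}-\widehat{v(x)})$, whose norm is bounded by a multiple of $\|\widehat{u(x)}-\widehat{v(x)}\|$. Writing $x=|x|\hat x$ with $|\hat x|=1$ and applying the asymptotic estimate~\eqref{Eq:Assy0} at $s=|x|$ separately to $u(x)=|x|\hat x-\delta_\Ll(x)$ and $v(x)=|x|\hat x-\delta'_\Ll(x)$, then subtracting, yields
\begin{equation*}
\|\widehat{u(x)}-\widehat{v(x)}\|=O(|x|^{-1}),
\end{equation*}
with an implicit constant depending only on $M$. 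Since $\Ll$ is $(r,R)$-Delone, only finitely many of its points lie in any bounded region, so the block-norms form a sequence tending to zero. A block-diagonal operator whose block-norms tend to zero is a norm-limit of finite-rank truncations, hence compact.

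\emph{Part (ii).} The $(x,x)$-block of $|D_{\Ll,\delta_\Ll}|^{-d}-|D_{\Ll,\delta'_\Ll}|^{-d}$ is $(|u(x)|^{-d}-|v(x)|^{-d})\cdot 1_{\CM^n}$. The mean value theorem applied to $y\mapsto|y|^{-d}$ along the segment $[u(x),v(x)]$, which avoids the origin once $|x|\ge 2M$ since $|u(x)|,|v(x)|\ge|x|-M$, gives
\begin{equation*}
\bigl||u(x)|^{-d}-|v(x)|^{-d}\bigr|\le d\,|u(x)-v(x)|\sup_{t\in[0,1]}|tu(x)+(1-t)v(x)|^{-d-1}=O\bigl(|x|^{-d-1}\bigr).
\end{equation*}
The trace norm of the diagonal difference is then bounded by
\begin{equation*}
\bigl\||D_{\Ll,\delta_\Ll}|^{-d}-|D_{\Ll,\delta'_\Ll}|^{-d}\bigr\|_{1}\le n\sum_{x\in\Ll}\bigl||u(x)|^{-d}-|v(x)|^{-d}\bigr|\le C+C'\sum_{n\ge1}n^{d-1}\cdot n^{-d-1},
\end{equation*}
where the first $C$ absorbs the finitely many near-origin terms and the shell bound $|\Ll\cap(B_{n+1}(0)\setminus B_n(0))|=O(n^{d-1})$ is used, which in turn follows from the Delone property since each point claims a ball of radius $r/2$ and so the count in any region is dominated by its volume. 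The tail sum is $O(\sum n^{-2})<\infty$, proving trace-class.

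\textbf{Main obstacle.} There is no deep difficulty: the argument is essentially bookkeeping, with the only genuine input being the pointwise asymptotic~\eqref{Eq:Assy0} and the Delone shell estimate. The one point deserving care is confirming uniformity of the $O(\cdot)$ constants in $x$, which is guaranteed because the shifts are bounded uniformly by $M$. The finitely many sites with $|x|$ below a fixed bound contribute finite-rank (hence compact) and finite-trace corrections that are harmlessly absorbed into the constants.
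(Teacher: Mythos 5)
Your proof is correct and follows exactly the route the paper indicates: the paper's own proof is a one-line reference to ``elementary estimates involving the asymptotic expression from Eq.~\eqref{Eq:Assy0}'', and your write-up supplies precisely those estimates (block-diagonality in the site basis, the $O(|x|^{-1})$ decay of the phase difference, the mean-value bound $O(|x|^{-d-1})$, and the Delone shell count), with the uniformity in $x$ correctly traced back to the bound $M$ on the shifts. Nothing further is needed.
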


\begin{proof} The statements follow from elementary estimates involving the asymptotic expression from Eq.~\eqref{Eq:Assy0} (see {\it e.g.} \cite{ProdanSpringer2016} or \cite{BourneAHP2020}).
\end{proof}

\begin{definition} Let $\Ll \in \Omega_{\Ll_0}$ and $B \in \BM\big(\ell^2(\Ll)\big)$, the algebra of bounded operators over $\ell^2(\Ll)$. We say that $B$ has finite range if there exists $M \in \RM_+$ such that
\begin{equation}
\langle x | B |x' \rangle =0, \quad \forall \ |x-x'| > M.
\end{equation}
\end{definition}

\begin{proposition}\label{Pro:CompactD} Let $\Ll \in \Omega_{\Ll_0}$ and $A \in \BM\big(\ell^2\big (\Ll)\big )$ such that $A$ is the limit in $\BM(\ell^2\big (\Ll)\big )$ of a sequence of operators with finite range.  Then $[\hat D_{\Ll,\delta_\Ll},A]$ is a compact operator.
\end{proposition}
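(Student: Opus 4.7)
The plan is to proceed in three steps: reduce to the finite-range case, extract the explicit kernel of the commutator and its decay, and conclude compactness by a finite-rank truncation. Since the compact operators form a closed two-sided ideal in $\BM(\Hh_\Ll)$ and $\hat D_{\Ll,\delta_\Ll}$ is bounded with $\|\hat D_{\Ll,\delta_\Ll}\|\leq 1$, the map $A\mapsto[\hat D_{\Ll,\delta_\Ll},A]$ is norm-continuous. Therefore it suffices to establish the statement under the stronger hypothesis that $A$ itself has finite range, with some bandwidth $M<\infty$.

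For such $A$, the Clifford factor is scalar on the $\ell^2(\Ll)$ index and a direct computation gives
\begin{equation}
\langle x|[\hat D_{\Ll,\delta_\Ll},A]|x'\rangle \;=\; \langle x|A|x'\rangle\,\gamma\!\cdot\!\bigl(\widehat{x-\delta_\Ll(x)}-\widehat{x'-\delta_\Ll(x')}\bigr),
\end{equation}
which already vanishes whenever $|x-x'|>M$. Applying the elementary estimate $|\hat y-\hat y'|\leq 2|y-y'|/|y|$, together with the boundedness of $\delta_\Ll$ and the lower bound $|x-\delta_\Ll(x)|\geq |x|-\|\delta_\Ll\|_\infty$ furnished by the acceptability assumption, one obtains, for all sufficiently large $|x|$,
\begin{equation}
\bigl\|\langle x|[\hat D_{\Ll,\delta_\Ll},A]|x'\rangle\bigr\|\;\leq\;\frac{C}{|x|}\,\mathbf 1_{|x-x'|\leq M},
\end{equation}
with a constant $C$ depending only on $\|A\|$, $M$, and $\|\delta_\Ll\|_\infty$. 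In other words, the kernel of $[\hat D_{\Ll,\delta_\Ll},A]$ is banded of width $M$ and decays uniformly like $1/|x|$.

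Let $\chi_R$ denote the orthogonal projection onto $\CM^n\otimes\ell^2(\Ll\cap B_R(0))$; since $\Ll$ is $r$-uniformly discrete, $\chi_R$ has finite rank and so does $K_R:=\chi_R[\hat D_{\Ll,\delta_\Ll},A]\chi_R$. Writing
\begin{equation}
[\hat D_{\Ll,\delta_\Ll},A]-K_R\;=\;(1-\chi_R)[\hat D_{\Ll,\delta_\Ll},A]\;+\;\chi_R[\hat D_{\Ll,\delta_\Ll},A](1-\chi_R),
\end{equation}
the Delone bound $|B_M(x)\cap\Ll|\leq C_M$ combined with the $1/|x|$-decay above bounds both the $\ell^1$-row and $\ell^1$-column sums of each remainder matrix by $CC_M/R$, so Schur's test yields $\|[\hat D_{\Ll,\delta_\Ll},A]-K_R\|\to 0$ as $R\to\infty$. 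Thus $[\hat D_{\Ll,\delta_\Ll},A]$ is a norm limit of finite-rank operators, hence compact. The only delicate point in the argument is the quantitative bookkeeping in the second step, where one must use acceptability to control the denominators appearing in the asymptotics of $\widehat{x-\delta_\Ll(x)}$ and to ensure the Clifford prefactor stays well-defined as $x$ ranges over $\Ll$; no real obstacle remains beyond this.
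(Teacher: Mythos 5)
Your proof is correct and follows essentially the same route as the paper's: reduce to the finite-range case by norm continuity of $A\mapsto[\hat D_{\Ll,\delta_\Ll},A]$ and closedness of the compacts, compute the banded kernel of the commutator, and use the decay of $\widehat{x-\delta_\Ll(x)}-\widehat{x'-\delta_\Ll(x')}$ along the band to conclude compactness. The only difference is that you spell out, via the finite-rank truncations $\chi_R$ and Schur's test, the step the paper leaves implicit (that a banded operator with uniformly decaying entries is compact), and you use the elementary bound $|\hat y-\hat y'|\leq 2|y-y'|/|y|$ in place of the asymptotic expansion~\eqref{Eq:Assy0}.
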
 

\begin{proof} We consider first an operator $B$ of finite range. Then,
\begin{equation}\label{Eq:X345}
\langle x|[\hat D_{\Ll,\delta_\Ll},B]|x'\rangle = \gamma \cdot  \big(\widehat{x+\delta_\Ll(x)} - \widehat{x'+\delta_\Ll(x')}\big) \langle x | B |x' \rangle,
\end{equation}
hence $[\hat D_{\Ll,\delta_\Ll},B]$ has finite range. Furthermore, using the asymptotic behavior described in Eq.~\eqref{Eq:Assy0}, we can concluded that $\langle x|[\hat D_{\Ll,\delta_\Ll},B|x'\rangle|$ converges to zero as $|x| \to \infty$, uniformly in $x'$ (because $B$ has finite range). This assures us that $[\hat D_{\Ll,\delta_\Ll},B]$ is a compact operator. Now, if $A$ is generic but obeys the stated assumptions, then $[\hat D_{\Ll,\delta_\Ll},A]$ can be approximated in norm by $[\hat D_{\Ll,\delta_\Ll},B_n]$, where $B_n$ is a sequence in $\BM\big(\ell^2\big (\Ll)\big )$ of finite range operators. Thus $[\hat D_{\Ll,\delta_\Ll},A]$ is a limit of compact operators, hence compact.
\end{proof}

\subsection{Dixmier trace} While singular traces are essential tools in the modern index theory \cite{ConnesGAFA1995}, in the present approach, they will only be used to establish the summability of the Fredholm modules, following \cite{BellissardJMP1994} and \cite{ProdanJPA2013} as models. A comprehensive treatment of singular traces over $\KM\big(\ell^2(\Ll)\big )$ can be found in \cite{LordGruyterBook}. Here, we briefly recall that Dixmier trace involves the algebra $\KM\big(\ell^2(\Ll)\big )$ of compact operators over $\ell^2(\Ll)$. Its unitization, as any other unital AF-algebra, can be pushed through Stratila and Voiculescu machinery \cite{StratilaSpringer1975}, to construct its m.a.s.a (maximal abelian subalgebra), which is just $C(\alpha \Ll)$, the algebra of continuous functions over the one-point compactification of $\Ll$. The associated subgroup $\Gamma$, as defined in \cite{StratilaSpringer1975}, is the subgroup of homeomorphisms of $\alpha \Ll$ sending $\infty$ to itself. Furthermore, the conditional expectation $P $ from $\CM \cdot I \oplus \KM\big(\ell^2(\Ll)\big )$ to its m.a.s.a $C(\alpha \Ll)$ is
\begin{equation}
\mathfrak P(c I+K) = G, \quad G(x) = c + \langle x |K|x\rangle.
\end{equation}
Traces on the positive cone of $\KM\big(\ell^2(\Ll)\big )$ can be generated from $\Gamma$-invariant additive maps $\tau$ on the positive cone of $C(\alpha \Ll)$, via the composition $\tau \circ \mathfrak P$. In fact, according to \cite{StratilaSpringer1975}, all trace states on a unital AF-algebra come in this way.

\vspace{0.2cm}

$\Gamma$-invariant maps can be constructed from weighted partial sums, such as
\begin{equation}\label{Eq:PSum}
s_N(f) = \frac{1}{C_N} \sum_{x \in \Ll} f(x) \mathfrak c_N(x), \quad f : \Ll \to \RM_+,
\end{equation}
where $\mathfrak c_N$ is the indicator function of the $d$-dimensional cube of size $N$ and centered at the origin. The coefficients $C_N$ can be essentially supplied by any monotone increasing function from $\NM$ to $\RM_+$. Such weighted sums supply maps $s(f): \NM \to [0,\infty]$ and any ultrafilter $\omega$ on $\NM$ supplies an ultrafilter $s(f)_\ast \omega$, the push forward ultrafilter, on the compact Hausdorff space $[0,\infty]$. Necessarily, $s(f)_\ast \omega$ has a unique limit in $[0,\infty]$ and, as such, one can define
\begin{equation}\label{Eq:ULimit}
\tau_\omega (f) : = \lim s(f)_\ast \omega.
\end{equation}
This map is additive and, furthermore, if $\omega$ is a free filter, then the limit in~\eqref{Eq:ULimit} coincides with $\lim_{N \to \infty} s_N(f)$, whenever the latter exists.

\begin{definition} The Dixmier trace ${\rm Tr}_{\rm Dix}$ over the positive cone of $\KM\big (\ell^2(\Ll)\big )$ is defined as the map $\tau_\omega \circ P$ with $\tau_\omega$ as above and with the choice 
\begin{equation}
C_N = \ln \sum_{x \in \Ll} \mathfrak c_N(x)
\end{equation}
in Eq.~\eqref{Eq:PSum} and $\omega$ assumed to be a free ultrafilter.
\end{definition}

\begin{remark}{\rm We have included the above digression because we developed an interest in singular traces over ideals of generic AF-algebras and it seems to us that the machinery invented by Stratila and Voiculescu \cite{StratilaSpringer1975} supplies the right environment and the tools for studying such singular traces.
}$\Diamond$
\end{remark} 

The statement involving the Dixmier trace, of great value for us, is:
\begin{lemma}\label{Lem:Dix1} If $g :\Xi_{\Ll_0} \to [0,\infty]$ is integrable, $\Ll \in \Xi_{\Ll_0}$ and $\varphi : \Ll \to \RM_+$ is uniformly bounded, then, $\PM$-almost surely, any positive compact operator $B$ over $\ell^2(\Ll)$ with diagonal
\begin{equation}\label{Eq:Diagonal}
\mathfrak P(B)(x) =  \varphi(x) g\big (\mathfrak t_x (\Ll)\big ) |x + \delta_\Ll(x) |^{-d}
\end{equation}
belongs to the domain of the Dixmier trace. In particular, it is trace class.
\end{lemma}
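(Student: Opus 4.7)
The plan is to verify that the Cesàro-weighted sums $s_N(\mathfrak P(B))$ are uniformly bounded in $N$, $\PM$-almost surely, so that the Dixmier functional $\tau_\omega(\mathfrak P(B))$ is finite and $B$ accordingly lies in the domain of $\mathrm{Tr}_{\mathrm{Dix}}$. Concretely, I would bound
\[
s_N(\mathfrak P(B)) \;=\; \frac{1}{\ln|\Ll\cap C_N|}\sum_{x\in \Ll\cap C_N} \varphi(x)\,g(\mathfrak t_x\Ll)\,|x+\delta_\Ll(x)|^{-d},
\]
where $C_N$ denotes the cube of side $N$ centered at the origin.

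First I would make two harmless reductions: since $\varphi$ is uniformly bounded, replace it by a constant $M$; and since the set $\{\delta_\Ll(x)\}$ is bounded, say by $R_0$, remove the finitely many points with $|x|\le 2R_0$ (only finitely many, because $\Ll$ is $r$-uniformly discrete) and use $|x+\delta_\Ll(x)|^{-d}\le 2^d|x|^{-d}$ on the remainder. The problem reduces to bounding
\[
\frac{1}{\ln|\Ll\cap C_N|}\sum_{x\in \Ll\cap C_N} g(\mathfrak t_x\Ll)\,|x|^{-d}.
\]

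Next I partition $\RM^d$ into dyadic annuli $A_k=\{2^{k-1}\le |x|_\infty<2^k\}$ and use the Delone property $|\Ll\cap C_N|\asymp N^d$, so the denominator is $\asymp d\ln N$. On each annulus $|x|^{-d}\le C\,2^{-kd}$. The key input is Birkhoff's ergodic theorem in the form of formula~\eqref{Eq:Measures}: for $g\in L^1(\Xi_{\Ll_0},\PM)$, $\PM$-almost surely,
\[
\sum_{x\in \Ll\cap B_R(0)} g(\mathfrak t_x\Ll)\;=\;|\Ll\cap B_R(0)|\Big(\int_{\Xi_{\Ll_0}} g\,d\PM + o(1)\Big) \;=\; O(R^d).
\]
Subtracting between radii $2^{k}$ and $2^{k-1}$ gives $\sum_{x\in \Ll\cap A_k} g(\mathfrak t_x\Ll)=O(2^{kd})$, and hence $2^{-kd}\sum_{x\in \Ll\cap A_k} g(\mathfrak t_x\Ll)=O(1)$ uniformly in $k$. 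Summing over the $\lfloor\log_2 N\rfloor+1$ annuli intersecting $C_N$ produces an $O(\ln N)$ numerator, which cancels the $\ln N$ denominator and yields a bound independent of $N$. Since $\tau_\omega$ is the ultrafilter limit of this bounded sequence, $\tau_\omega(\mathfrak P(B))<\infty$, and positivity of $B$ forces $B$ into the domain of the Dixmier trace (equivalently, into the Dixmier ideal $\mathcal L^{1,\infty}$, the sense in which ``trace class'' is intended here).

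The main obstacle is bookkeeping for the $\PM$-a.s.\ qualifier: Birkhoff supplies an exceptional null set $\mathcal N_g\subset \Xi_{\Ll_0}$, and one must check that it suffices that $\Ll\in \Xi_{\Ll_0}\setminus \mathcal N_g$, rather than needing a set of full measure that is uniform in some auxiliary parameter. A secondary technical point is that Birkhoff yields convergence of ball averages, whereas the dyadic partition uses annular averages; this is handled by subtracting the estimates for $B_{2^k}$ and $B_{2^{k-1}}$, which is legitimate since both satisfy the ergodic asymptotics on the same full-measure set. Everything else is standard dyadic/Delone bookkeeping.
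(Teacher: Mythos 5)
Your argument is correct, but it takes a genuinely different route from the paper's. The paper's proof is essentially a citation: it invokes \cite{BourneJPA2018}[Lemma~6.1] (see also \cite{AzamovArxiv2022}[Th.~1.2]) for the statement that $G(x)=g\big(\mathfrak t_x(\Ll)\big)\,|x+\delta_\Ll(x)|^{-d}$ is $\PM$-almost surely in the domain of $\tau_\omega$, together with the exact evaluation $\tau_\omega(G)=|\SM^{d-1}|^{-1}\int_{\Xi_{\Ll_0}}g\,{\rm d}\PM$ independently of the ultrafilter, and then absorbs the bounded factor $\varphi$. You instead reprove the finiteness part of that citation directly: the dyadic-annulus decomposition combined with the Birkhoff bound $\sum_{x\in\Ll\cap B_R(0)}g(\mathfrak t_x\Ll)=O(R^d)$ produces the $O(\ln N)$ numerator that cancels the $\ln|\Ll\cap C_N|\sim d\ln N$ normalization; since $g\geq 0$ you do not even need the subtraction between consecutive balls (the annular sum is dominated by the full ball sum, and the $O(R^d)$ constant is uniform in $R$ because a convergent sequence of ergodic averages is bounded). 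What your route does not deliver is the exact value of $\tau_\omega(G)$ and its independence of $\omega$, which require genuine convergence of the ergodic averages and an Abel summation over the annuli; the paper quotes that value, but the lemma as stated only asserts finiteness, so your proof is complete. Your two closing remarks are also the right ones: the exceptional null set depends only on $g$ (the Birkhoff failure set together with the saturation of $\{g=\infty\}$), so the quantifier order ``$\PM$-a.s., for any $B$, $\varphi$, $\delta_\Ll$'' is respected; and ``trace class'' must indeed be read as membership in the Dixmier ideal within the paper's diagonal-based definition of the trace, since the diagonal of $B$ is only logarithmically summable and $B$ is generally not in $\mathcal{L}^1$.
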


\begin{proof} From \cite{BourneJPA2018}[Lemma~6.1] (see also \cite{AzamovArxiv2022}[Th.~1.2]), one knows that $G(x) := g\big (\mathfrak t_x (\Ll)\big ) |x + \delta_\Ll(x) |^{-d}$ is $\PM$-almost surely in the domain of $\tau_\omega$, independently of the choice of the ultrafilter $\omega$. Furthermore,
\begin{equation}
\tau_\omega(G) = |\SM^{d-1}|^{-1} \int_{\Xi_{\Ll_0}} {\rm d}\PM(\Ll) \, g(\Ll).
\end{equation}
The statement follows because $\varphi$ is assumed uniformly bounded.
\end{proof}

\subsection{The Sobolev space}\label{Sec:Sobolev} The $L^s$-spaces relative to the trace $\Tt$, denoted here by $L^s(\Gg_{\Ll_0},\Tt)$, are defined as the completion of the convolution algebra $C_c(\Gg_{\Ll_0})$ under the norms
\begin{equation}
\|f\|_s = \Tt\big ( |f|^s \big )^\frac{1}{s}.
\end{equation}
The space $L^2(\Gg_{\Ll_0},\Tt)$ is the Hilbert space carrying the GNS representation of $C^\ast_r(\Gg)$ induced by $\Tt$ and $L^\infty(\Gg_{\Ll_0},\Tt)$ is the von Neumann algebra corresponding to the weak closure of $C^\ast_r(\Gg)$ in this representation. We recall that von Neumann algebras are stable under the Borel calculus, hence $\phi(f) \in  L^\infty(\Gg_{\Ll_0},\Tt)$ for any $f \in C^\ast_r(\Gg_{\Ll_0})$ and Borel function $\phi$ over the real line. In particular, the spectral projections of any self-adjoint element $h \in C^\ast_r(\Gg_{\Ll_0})$ belong to $L^\infty(\Gg_{\Ll_0},\Tt)$.

The Sobolev space that we need for the index theorem is supplied by the completion of $C_c(\Gg_{\Ll_0})$ under the norm
\begin{equation}\label{Eq:SNorm}
\|f\|_W = \|f\|_\infty + \sum_{i=1}^d \|\partial_i f\|_d.
\end{equation}
It will be denoted by $W(\Gg_{\Ll_0},\Tt)$.

\begin{proposition} $W(\Gg,\Tt)$ is a locally convex algebra.
\end{proposition}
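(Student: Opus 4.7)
The plan is to verify that multiplication is jointly continuous in the norm $\|\cdot\|_{W}$ by showing the submultiplicativity estimate $\|fg\|_W \leq \|f\|_W\,\|g\|_W$ on the dense subalgebra $C_c(\Gg_{\Ll_0})$, and then extend everything by continuity to the completion. Since $\|\cdot\|_W$ is manifestly a norm (it is the sum of a $C^\ast$-norm and finitely many seminorms that together separate points), the resulting space is a normed vector space and hence locally convex; the only substantive content is the algebraic one.

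On $C_c(\Gg_{\Ll_0})$, the key inputs are the Leibniz rule for the derivations $\partial_i$ coming from the $\RM^d$-action $\tau$ (already established in Proposition~\ref{Pro:NRD} and the surrounding discussion) and the non-commutative H\"older inequality for the semifinite $L^p$-spaces $L^s(\Gg_{\Ll_0},\Tt)$ associated with the faithful normal semifinite trace $\Tt$. Specifically, from Kunze--Segal--Nelson theory one has the two-sided bimodule estimate
\begin{equation}
\|a b\|_s \leq \|a\|_\infty \, \|b\|_s, \qquad \|a b\|_s \leq \|a\|_s \, \|b\|_\infty, \qquad a,b \in L^\infty(\Gg_{\Ll_0},\Tt)\cap L^s(\Gg_{\Ll_0},\Tt).
\end{equation}
Applying this to $\partial_i(fg) = (\partial_i f)\, g + f\,(\partial_i g)$ at $s=d$ gives
\begin{equation}
\|\partial_i(fg)\|_d \leq \|\partial_i f\|_d\,\|g\|_\infty + \|f\|_\infty\,\|\partial_i g\|_d,
\end{equation}
and combining with the $C^\ast$-submultiplicativity $\|fg\|_\infty \leq \|f\|_\infty\|g\|_\infty$ and summing over $i$ yields
\begin{equation}
\|fg\|_W \leq \|f\|_\infty\|g\|_\infty + \|g\|_\infty\sum_i \|\partial_i f\|_d + \|f\|_\infty\sum_i \|\partial_i g\|_d \leq \|f\|_W\,\|g\|_W,
\end{equation}
where the last inequality is the elementary bound $ac + bc + ad \leq (a+b)(c+d)$ for nonnegative reals.

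The remaining step is to promote this to the completion $W(\Gg_{\Ll_0},\Tt)$. Because $\|\cdot\|_W$ dominates both the $C^\ast$-norm and the graph norms of each $\partial_i$ on the $L^d$-completions, Cauchy sequences in $W$ converge in $C^\ast_r(\Gg_{\Ll_0})$ and each $\partial_i$ extends by closure to a bounded map $W(\Gg_{\Ll_0},\Tt) \to L^d(\Gg_{\Ll_0},\Tt)$; hence the Leibniz rule and the submultiplicativity estimate pass to the completion by a standard approximation argument, showing that multiplication $W \times W \to W$ is jointly continuous.

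The main obstacle I expect is purely bookkeeping rather than conceptual: one has to confirm that the non-commutative H\"older inequality is available in the semifinite setting $(C^\ast_r(\Gg_{\Ll_0}),\Tt)$ in the precise form stated above (i.e.\ that $L^d(\Gg_{\Ll_0},\Tt)$ is a two-sided $L^\infty$-bimodule with the stated norm bounds), and that the derivations $\partial_i$, originally defined on smooth elements of $C^\ast_r(\Gg_{\Ll_0})$, really do extend from $C_c(\Gg_{\Ll_0})$ to closed operators on $L^d(\Gg_{\Ll_0},\Tt)$ whose restrictions to $W(\Gg_{\Ll_0},\Tt)$ remain derivations. Both facts are standard in non-commutative integration theory, so once they are cited the estimate above closes the argument with no further work.
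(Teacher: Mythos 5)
Your proposal is correct and follows essentially the same route as the paper: both apply the Leibniz rule to $\partial_i(f_1 f_2)$, invoke the non-commutative H\"older inequality to obtain $\|(\partial_i f_1) f_2\|_d \leq \|\partial_i f_1\|_d\,\|f_2\|_\infty$ and its mirror, and then conclude submultiplicativity $\|f_1 f_2\|_W \leq \|f_1\|_W \|f_2\|_W$ by the same elementary factoring of nonnegative reals. Your extra care in first establishing the estimate on the dense subalgebra $C_c(\Gg_{\Ll_0})$ and then passing to the completion is a point the paper's proof leaves implicit, but it does not change the substance of the argument.
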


\begin{proof} $W(\Gg,\Tt)$ is already a locally convex topological vector space, hence we only need to verify that the product is jointly continuous. We have,
\begin{align}
\| f_1 f_2\|_W & = \|f_1 f_2 \|_\infty + \sum_i \|(\partial_i f_1) f_2 + f_1 (\partial_i f_2) \|_d \\
& \leq \|f_1\|_\infty \| f_2 \|_\infty + \sum_i \|(\partial_i f_1)\|_d \|f_2\|_\infty + \|f_1\|_\infty \| (\partial_i f_2) \|_d \\
& \leq \Big (\|f_1\|_\infty + \sum_{i=1,d} \|\partial_i f_1\|_d\Big )\Big (\|f_2\|_\infty + \sum_{i=1,d} \|\partial_i f_2\|_d\Big ),
\end{align}
where H\"older’s inequality was used at places. Thus, $\| f_1 f_2\|_W \leq \| f_1\|_W \|f_2\|_W$ and the statement follows.
\end{proof}

The following statements re-assess the cocycles~\eqref{Eq:CCoC} in this new context:

\begin{proposition} The relations
\begin{equation}
W(\Gg_{\Ll_0},\Tt)^{\times d} \ni (f_0,f_1,\ldots,f_d) \mapsto \Tt\big (f_0 (\partial_1 f_1) \, \ldots (\partial_d f_d)\big ).
\end{equation}
defines a jointly continuous multi-linear map over $W(\Gg_{\Ll_0},\Tt)$.
\end{proposition}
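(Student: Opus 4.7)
The plan is to reduce the statement to a standard Hölder inequality for the semi-finite trace $\Tt$. I would first establish the bound on the dense core $C_c(\Gg_{\Ll_0})$ and then invoke density to extend to all of $W(\Gg_{\Ll_0},\Tt)$, since the multilinear form is continuous with respect to the Sobolev seminorm.

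Concretely, for $f_0,f_1,\ldots,f_d\in C_c(\Gg_{\Ll_0})$, each $\partial_j f_j$ lies in the core as well, and I would apply the tracial Hölder inequality with exponents $(\infty,d,d,\ldots,d)$, for which $\tfrac{1}{\infty}+d\cdot\tfrac{1}{d}=1$. This yields
\begin{equation*}
\big|\Tt\big(f_0\,(\partial_1 f_1)\cdots(\partial_d f_d)\big)\big|\leq \|f_0\|_\infty\prod_{j=1}^{d}\|\partial_j f_j\|_d.
\end{equation*}
Since by definition of the Sobolev norm in \eqref{Eq:SNorm} one has $\|f_0\|_\infty\leq\|f_0\|_W$ and $\|\partial_j f_j\|_d\leq\|f_j\|_W$ for each $j=1,\ldots,d$, it follows that
\begin{equation*}
\big|\Tt\big(f_0\,(\partial_1 f_1)\cdots(\partial_d f_d)\big)\big|\leq \prod_{j=0}^{d}\|f_j\|_W.
\end{equation*}
Multilinearity is immediate from the linearity of $\Tt$ and of each derivation $\partial_j$, and the above estimate is precisely the joint continuity bound at the origin; by the standard argument for multilinear maps on normed spaces (writing telescoping differences), it implies joint continuity at every point. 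The same bound then provides the unique continuous multilinear extension to $W(\Gg_{\Ll_0},\Tt)^{\times(d+1)}$ by the usual density argument, since Cauchy sequences in $W$ are Cauchy in each of the norms $\|\cdot\|_\infty$ and $\|\partial_j\cdot\|_d$ being combined.

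I do not anticipate a real obstacle here: the only subtle point is making sure the tracial Hölder inequality is available for the semi-finite trace $\Tt$ on $C^\ast_r(\Gg_{\Ll_0})$, which is a standard fact in noncommutative integration (it holds for any faithful normal semi-finite trace on a von Neumann algebra, hence certainly in our setting where $L^\infty(\Gg_{\Ll_0},\Tt)$ is the weak closure of $C^\ast_r(\Gg_{\Ll_0})$ in the GNS representation of $\Tt$). The only bookkeeping item is to check that products like $(\partial_1 f_1)\cdots(\partial_d f_d)$, which a priori live in $L^1(\Gg_{\Ll_0},\Tt)$ after Hölder, remain meaningfully pairable with $f_0\in L^\infty(\Gg_{\Ll_0},\Tt)$ under $\Tt$; this is the content of Hölder with one $\infty$ slot, which is exactly what we used.
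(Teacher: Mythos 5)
Your proof is correct and follows essentially the same route as the paper: the tracial H\"older inequality with exponents $(\infty,d,\ldots,d)$ gives $\big|\Tt\big(f_0(\partial_1 f_1)\cdots(\partial_d f_d)\big)\big|\leq \|f_0\|_\infty\prod_{j}\|\partial_j f_j\|_d\leq\prod_{j}\|f_j\|_W$, from which joint continuity follows by multilinearity. The additional remarks on density and on the pairing of $L^1$ with $L^\infty$ are sound bookkeeping but do not change the argument.
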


\begin{proof} We have
\begin{align}\label{Eq:X132}
\Big | \Tt\big (f_0 (\partial_1 f_1) \, \ldots (\partial_d f_d)\big ) & \leq \|f_0\|_\infty \|(\partial_1 f_1) \, \ldots (\partial_d f_d)\|_1 \\
& \leq \|f_0\|_\infty \prod_{i=1}^{d}\|(\partial_i f_i)\|_d,
\end{align}
where H\"older’s inequality was used at places. Therefore,
\begin{align} 
\big | \Tt\big (f_0 (\partial_1 f_1) \, \ldots, (\partial_d f_d)\big ) \big |\leq \|f_0\|_W \|f_1\|_W  \ldots \| f_d\|_W,
\end{align}
and the statement follows.
\end{proof}

\begin{proposition} Consider the cyclic and jointly continuous multi-linear map
\begin{equation}
\Sigma_d (f_0,f_1,\ldots,f_d) = \Lambda_d \sum_{\sigma \in \Ss_d}(-1)^\sigma \Tt\big (f_0 (\partial_1 f_{\sigma_1}) \, \ldots (\partial_d f_{\sigma_d})\big )
\end{equation}
over the Sobolev space. Then
\begin{equation}\label{Eq:CycleCont}
|\Sigma_d (f_0,f_1,\ldots,f_d) - \Sigma_d (f'_0,f'_1,\ldots,f'_d)| \leq {\rm ct.} \sum_{i=0}^d \big (\|f_i - f'_i\|_W - \|f_i - f'_i\|_\infty \big ),
\end{equation}
where the constant in front is determined by the Sobolev norms of the entries.
\end{proposition}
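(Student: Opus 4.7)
The plan is to combine a multi-linear telescoping with the cyclicity of $\Sigma_d$ that is already part of the hypothesis. First I write
\begin{equation*}
\Sigma_d(f_0,\ldots,f_d) - \Sigma_d(f'_0,\ldots,f'_d) = \sum_{i=0}^d T_i, \quad T_i := \Sigma_d(f'_0,\ldots,f'_{i-1},f_i-f'_i,f_{i+1},\ldots,f_d),
\end{equation*}
which is multilinearity applied to a telescoping. The task then reduces to establishing $|T_i| \leq {\rm ct.}\,(\|f_i-f'_i\|_W - \|f_i-f'_i\|_\infty)$ with constant controlled by the Sobolev norms of the remaining entries.

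For $i \geq 1$, the factor $h := f_i-f'_i$ occupies one of the differentiated slots of $\Sigma_d$, so in every summand of the antisymmetrization there is exactly one derivative factor of the form $\|\partial_k h\|_d$. The H\"older argument used in the previous proposition bounds the corresponding trace by $\|f'_0\|_\infty \cdot \|\partial_k h\|_d$ times a product of factors $\|\partial_\ell g\|_d$ with $g$ running over $\{f'_1,\ldots,f'_{i-1},f_{i+1},\ldots,f_d\}$; these latter factors are controlled by the Sobolev norms of the entries and absorbed into the constant. Summing over the permutations converts the varying factor $\|\partial_k h\|_d$ into $\sum_{k=1}^d \|\partial_k h\|_d = \|h\|_W - \|h\|_\infty$, as required.

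The delicate case is $i=0$, since $f_0-f'_0$ then sits in the undifferentiated slot and the direct H\"older bound would only produce the useless factor $\|f_0-f'_0\|_\infty$. Here I invoke cyclicity: the identity $\Sigma_d(g_0,\ldots,g_d) = (-1)^d \Sigma_d(g_1,\ldots,g_d,g_0)$, applied once, moves $f_0-f'_0$ from slot $0$ to slot $d$ (a differentiated slot), while $f_1$ takes over slot $0$ and contributes only through its Sobolev norm to the constant. The bound from the preceding paragraph now applies to $T_0$ and yields the same estimate. Summing the $d+1$ bounds produces \eqref{Eq:CycleCont}.

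The main obstacle is to guarantee that every step remains valid on the Sobolev completion $W(\Gg_{\Ll_0},\Tt)$, not only on the core $C_c(\Gg_{\Ll_0})$, where cyclicity is a direct consequence of the trace invariance $\Tt(\partial_j g)=0$ and the Leibniz rule. This is handled by an approximation argument: the joint continuity of $\Sigma_d$ on $W(\Gg_{\Ll_0},\Tt)$ already established in the previous proposition, combined with the density of $C_c(\Gg_{\Ll_0}) \subset W(\Gg_{\Ll_0},\Tt)$, extends the cyclic identity to the whole Sobolev space. The same continuity ensures that the H\"older estimates used at each step survive the passage to $W(\Gg_{\Ll_0},\Tt)$ without modification.
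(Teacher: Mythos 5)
Your argument is correct and follows essentially the same route as the paper: telescope via multilinearity, use cyclicity to move the difference $f_k-f'_k$ into a differentiated slot (the paper rotates every term so the difference lands in the last slot, you rotate only the $k=0$ term, an immaterial variation), and then apply the H\"older estimate from the preceding proposition. Your added remark on extending cyclicity from $C_c(\Gg_{\Ll_0})$ to $W(\Gg_{\Ll_0},\Tt)$ by density and joint continuity is a point the paper leaves implicit, but it does not change the substance of the argument.
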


\begin{proof} We have
\begin{align}
& \Sigma_d (f_0,f_1,\ldots,f_d) - \Sigma_d (f'_0,f'_1,\ldots,f'_d) \\
& \qquad  = \sum_{k=0}^d \Sigma_d \big (f'_0, \ldots , f'_{k-1}, (f_k -f'_k),f_{k+1},\ldots, f_d\big ) \\
& \qquad  = \sum_{k=0}^d \Sigma_d \big (f_{k+1},\ldots, f_d,f'_0, \ldots , f'_{k-1}, (f_k -f'_k)\big ).
\end{align}
Then the statement follows from the estimate in Eq.~\eqref{Eq:X132}.
\end{proof}

\begin{remark}{\rm The estimate in Eq.~\eqref{Eq:CycleCont} communicates that the value of the cocycle varies continuously under deformations inside $W(\Gg,\Tt)$ that are continuous only w.r.t. the differential piece of the Sobolev norm, {\it i.e.} the second term in Eq.~\eqref{Eq:SNorm}. This observation is crucial because the deformations occurring in the applications are never continuous w.r.t. the full Sobolev norm. Note that this special property is facilitated by the multi-linear map being cyclic.
}$\Diamond$
\end{remark}

\subsection{The Fredholm module and its quantized pairing} As in the previous sections and subsections, we will fix a Delone set $\Ll_0$. The $C^\ast$-algebra $C^\ast_r(\Gg_{\Ll_0})$ can be represented on $\Hh_\Ll$ by $1 \otimes \pi_\Ll$ and, to simplify the notation, we will denote this representation by the same symbols $\pi_\Ll$. Here $\Ll$ is assumed from $\Xi_{\Ll_0}$. Likewise, the natural actions of $\gamma$'s on $\Hh_\Ll$ will be denoted by the same symbols. It is clear that $\pi_\Ll(f) \gamma_0=\gamma_0 \pi_\Ll(f)$, for any $f \in C^\ast_r(\Gg_{\Ll_0})$. As for the Dirac operators, we will restrict the acceptable shifts to uniform shifts $\delta_\Ll(x) = w$, $w \in \RM^d$ and simplify the notation $\hat D_{\Ll,\delta_\Ll}$ to $\hat D_{\Ll}(w)$.

\begin{proposition}\label{Pro:FredMod} The $\Ll$-dependent tuple 
$$
\Big ( W(\Gg_{\Ll_0},\Tt), \pi_\Ll : M_{n}(\CM) \otimes W(\Gg_{\Ll_0},\Tt) \to \BM\big(\Hh_\Ll \big), \hat D_{\Ll}(w),\gamma_0 \Big )
$$ 
is $\PM$-almost surely a $d+\epsilon$-summable even Fredholm module.
\end{proposition}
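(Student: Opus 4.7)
The plan is to verify that the stated data satisfies the four axioms of an even $(d+\epsilon)$-summable Fredholm module: (i) $\hat D_\Ll(w)$ bounded self-adjoint with $\hat D_\Ll(w)^2 = 1$; (ii) $\gamma_0$ a self-adjoint involutive grading anticommuting with $\hat D_\Ll(w)$ and commuting with the representation; (iii) $[\hat D_\Ll(w), \pi_\Ll(a)]$ compact for every $a$ in the algebra; and (iv) in fact $[\hat D_\Ll(w), \pi_\Ll(a)] \in \mathcal{L}^{d+\epsilon}$ for every $a$ and every $\epsilon > 0$, with the statement being $\PM$-almost sure.

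The algebraic axioms (i) and (ii) are routine. Since $\delta_\Ll$ is an acceptable shift, $0 \notin \mathrm{Spec}(D_\Ll(w))$, so $\hat D_\Ll(w) = D_\Ll(w)/|D_\Ll(w)|$ is well-defined, bounded self-adjoint, and squares to $1$. The grading conditions follow from the Clifford relations $\gamma_0 \gamma_j + \gamma_j \gamma_0 = 0$, and commutation with $\pi_\Ll(f)$ is built into the trivial action of $\pi_\Ll$ on the Clifford factor. The Sobolev bound $\|\pi_\Ll(f)\| \leq \|f\|_\infty \leq \|f\|_W$ makes the representation continuous into $\BM(\Hh_\Ll)$.

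The remaining task is the Schatten estimate, which I plan in three moves. First, apply Lemma~\ref{Lem:Dix1} with $\varphi \equiv 1$ and $g \equiv 1$ to conclude that, $\PM$-almost surely, $|D_\Ll(w)|^{-d}$ lies in the Dixmier ideal $\mathcal{L}^{(1,\infty)}$, hence $|D_\Ll(w)|^{-1} \in \mathcal{L}^{(d,\infty)} \subset \mathcal{L}^{d+\epsilon}$ for any $\epsilon > 0$. Second, for $f \in C_c(\Gg_{\Ll_0})$, expand the commutator as $\sum_j \gamma_j [\widehat{(X-w)_j}, \pi_\Ll(f)]$, invoke the kernel formula from the proof of Proposition~\ref{Pro:CompactD} together with the elementary inequality $|\widehat{y} - \widehat{y'}| \leq 2|y-y'|/|y|$, and dominate the commutator kernel by a $|D_\Ll(w)|^{-1}$-weighted version of the kernels of $\pi_\Ll(\partial_j f)$. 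Via noncommutative H\"older, this should yield a Schatten bound of the shape
$$\|[\hat D_\Ll(w), \pi_\Ll(f)]\|_{d+\epsilon} \leq C\,\||D_\Ll(w)|^{-1}\|_{d+\epsilon} \sum_{j=1}^d \|\pi_\Ll(\partial_j f)\|_\infty.$$
Third, extend to all $f \in W(\Gg_{\Ll_0}, \Tt)$ by density of $C_c(\Gg_{\Ll_0})$ and a Cauchy argument in $\mathcal{L}^{d+\epsilon}$.

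The main obstacle is turning the pointwise kernel bound of the second move into a rigorous Schatten estimate, since for $p > 2$ the Schatten $\mathcal{L}^p$-norm is not controlled by the size of matrix entries. I expect to bypass this either through Connes' resolvent representation $\hat D_\Ll(w) = \tfrac{2}{\pi}\int_0^\infty D_\Ll(w)(t^2 + D_\Ll(w)^2)^{-1}\, dt$, whose integrand can be rewritten as a product of a bounded factor, a $[D_\Ll(w), \pi_\Ll(f)]$ factor, and resolvent factors yielding the requisite $\mathcal{L}^{d+\epsilon}$ membership via standard Schatten H\"older together with $t$-integrability, or through a Cwikel-type estimate $\|M_v|D_\Ll(w)|^{-1}\|_{(d,\infty)} \leq C\|v\|_\infty$ proved directly from Lemma~\ref{Lem:Dix1} in the Delone setting, as in \cite{BourneJPA2018, BourneAHP2020}. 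A secondary subtlety in the third move is that the Sobolev norm controls $\partial_j f$ only at the $L^d$ level of the algebra rather than in operator norm, so the continuity statement used for the extension must reconcile the trace-per-unit-volume on $\Gg_{\Ll_0}$ with the Schatten $\mathcal{L}^{d+\epsilon}$-norm on $\Hh_\Ll$, which is precisely the content of the ergodic theorem underpinning Lemma~\ref{Lem:Dix1}.
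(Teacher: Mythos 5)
Your treatment of the algebraic axioms and the observation that $|D_{\Ll}(w)|^{-d}$ is almost surely in the Dixmier ideal (hence $|D_{\Ll}(w)|^{-1}\in\mathcal{L}^{(d,\infty)}\subset\mathcal{L}^{d+\epsilon}$) are fine, but the central Schatten estimate you propose,
\begin{equation*}
\|[\hat D_{\Ll}(w),\pi_\Ll(f)]\|_{d+\epsilon}\ \leq\ C\,\big\||D_{\Ll}(w)|^{-1}\big\|_{d+\epsilon}\sum_{j=1}^{d}\|\pi_\Ll(\partial_j f)\|_\infty,
\end{equation*}
cannot carry the proof on the stated domain, and this is a genuine gap rather than a secondary subtlety. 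An element $f\in W(\Gg_{\Ll_0},\Tt)$ only satisfies $\partial_j f\in L^d(\Gg_{\Ll_0},\Tt)$; the operator norm $\|\pi_\Ll(\partial_j f)\|_\infty$ is in general infinite, and your third move collapses because $\|f_n-f\|_W\to 0$ gives no control whatsoever on $\|\pi_\Ll(\partial_j f_n-\partial_j f)\|_\infty$, so there is no Cauchy sequence in $\mathcal{L}^{d+\epsilon}$ to speak of. The entire point of defining the Sobolev norm as $\|f\|_\infty+\sum_i\|\partial_i f\|_d$ is that the summability must be extracted from an $L^d$-condition on the derivatives; a Cwikel-type bound with an $L^\infty$ factor on the derivative only reproves the statement on the smooth subalgebra, where it is already known.

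The paper's proof is structured precisely so as to exploit the $L^d$-condition: it takes the positive operator ${\rm tr}_\gamma\big((\imath[\hat D_{\Ll}(w),\pi_\Ll(f)])^d\big)$, projects it onto the diagonal with the conditional expectation $\mathfrak P$ that is built into the very definition of the Dixmier trace, and uses the kernel formula~\eqref{Eq:X345} together with the asymptotics~\eqref{Eq:Assy0} to identify the diagonal as $\varphi(x)\,g(\mathfrak t_x\Ll)\,|x+w|^{-d}$, where $g(\Ll')$ is a Clifford trace of a product of $d$ factors drawn from $\pi_{\Ll'}(\nabla f)$ evaluated at the origin. The integrability of $g$ over $\Xi_{\Ll_0}$, which is what Lemma~\ref{Lem:Dix1} requires, is then exactly controlled by $\prod_j\|\partial_j f\|_d$ via noncommutative H\"older for the trace $\Tt$ --- i.e.\ by the Sobolev norm --- and the almost-sure conclusion comes from the ergodic theorem inside Lemma~\ref{Lem:Dix1}. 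If you insist on a factorization route, you would need a genuinely $L^d$ Cwikel-type estimate of the form $\|M_v\,|D_{\Ll}(w)|^{-1}\|_{(d,\infty)}\leq C\,\|v\|_{L^d}$ in the ergodic Delone setting (the direction pursued in \cite{AzamovArxiv2022}); that is a substantially stronger statement than the one you wrote down and is not supplied by Lemma~\ref{Lem:Dix1} as stated. As written, your argument establishes the Fredholm-module property only for $f$ with bounded derivatives, not for the full Sobolev space that the proposition claims.
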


\begin{proof} We will prove the summability. We are going to evaluate the Dixmier trace of the positive operator 
$$
{\rm tr}_\gamma\, \Big (\imath[\hat{D}_{\Ll}(w),\pi_\Ll (f)]^d \Big ), \quad f \in W(\Gg_{\Ll_0},\Tt).
$$ 
After projecting on the diagonal part, using Eq.~\eqref{Eq:X345} as well as the asymptotic behavior from Eq.~\eqref{Eq:Assy0}, we find
\begin{equation}
\mathrm{Tr}_{\mathrm{Dix}}\Big \{{\rm tr}_\gamma\, \Big ( \imath[\hat{D}_{\Ll}(w),\pi_\Ll (f)]^d \Big ) \Big \} = \tau_\omega (G),
\end{equation}
where $G : \Ll \to \RM_+$ is given by
\begin{align}
G(x) & =  \mathrm{tr}_\gamma \Big (\langle 0 |\prod_{i=1}^d \gamma(\hat x) \cdot \imath[X,\pi_{\mathfrak{t}_{x}\Ll}(f)]|0 \rangle \Big )  |x+w|^{-d} \\
& =  \mathrm{tr}_\gamma \Big (\langle 0 |\prod_{i=1}^d \gamma(\hat x) \cdot \pi_{\mathfrak t_x \Ll}(\nabla f)] |0 \rangle \Big )  |x+w|^{-d},
\end{align}
with $\gamma_i(\hat x)=\gamma_i -\hat x_i (\hat x \cdot  \gamma)$.
We can see that $G(x)$ is covered by Lemma~\ref{Lem:Dix1}, whenever $f \in W(\Gg_{\Ll_0},\Tt)$.
\end{proof}

\begin{theorem} Let $p \in W(\Gg_{\Ll_0},\Tt)$ be a projection and let $\pi_\Ll^\pm$ be the restrictions of $\pi_\Ll$ representation on the sectors of the grading $\gamma_0$. Then, $\PM$-almost surely, the operator 
\begin{equation}\label{Eq:FO}
\pi_\Ll^-(p)\hat D_{\Ll}(w) \pi_\Ll^+(p)
\end{equation}
is in the Fredholm class. Furthermore, if the hull $\Omega_{\Ll_0}$ and its measure are invariant against the inversion operation $\Ll \mapsto -\Ll$, then, $\PM$-almost surely,
\begin{equation}\label{Eq:FredholmIndex}
\mathrm{Index}\left(\pi_\Ll^-(p)\hat D_{\Ll}(w) \pi_\Ll^+(p)\right)=\Lambda_d \sum_{\lambda \in \Ss_d}(-1)^\lambda \ {\mathcal T} \Big ( p \prod_{i=1}^{d} \partial_{\lambda_i}p \Big ),
\end{equation}
where $\Lambda_d = \frac{(2\imath \pi)^\frac{d}{2}}{(d/2)!}$.
\end{theorem}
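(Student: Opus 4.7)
First I would establish the Fredholm property of $\pi_\Ll^-(p)\hat D_\Ll(w)\pi_\Ll^+(p)$. For $p \in W(\Gg_{\Ll_0},\Tt)$, Proposition~\ref{Pro:CompactD} (combined with the density of finite-range representatives in $\pi_\Ll\bigl(W(\Gg_{\Ll_0},\Tt)\bigr)$) ensures that $[\hat D_\Ll(w),\pi_\Ll(p)]$ is compact, hence $\pi_\Ll(p)$ sits in the algebra against which $(\Hh_\Ll, \hat D_\Ll(w), \gamma_0)$ is a Fredholm module. The off-diagonal compression is therefore Fredholm. With the $(d+\epsilon)$-summability guaranteed $\PM$-almost surely by Proposition~\ref{Pro:FredMod}, Connes' index formula for even Fredholm modules produces
\begin{equation*}
\mathrm{Index}\bigl(\pi_\Ll^-(p)\hat D_\Ll(w)\pi_\Ll^+(p)\bigr)
= c_d \,\mathrm{Tr}\Bigl(\gamma_0\,\pi_\Ll(p)\,[\hat D_\Ll(w),\pi_\Ll(p)]^{d}\Bigr),
\end{equation*}
with an explicit combinatorial constant $c_d$ that will ultimately be absorbed into $\Lambda_d$.

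The core of the argument is then to average over the shift $w$. By Proposition~\ref{Pro:HadD1}(i) the difference $\hat D_\Ll(w)-\hat D_\Ll(w')$ is compact, so the Fredholm index on the left is independent of $w$, and nothing is changed by replacing it by its mean $|V|^{-1}\int_V dw\,(\cdot)$ with $V\to\RM^d$. On the right, inserting the matrix-element formula of Eq.~\eqref{Eq:X345} and resolving the trace on the diagonal yields a sum of the shape
\begin{equation*}
c_d\sum_{x_0,x_1,\ldots,x_d\in\Ll}\!\!\Bigl[\tfrac{1}{|V|}\!\int_V\! dw\,\mathrm{tr}_\gamma\Bigl(\gamma_0\!\prod_{i=1}^{d}\gamma\cdot\bigl(\widehat{x_i-w}-\widehat{x_{i+1}-w}\bigr)\Bigr)\Bigr] p_{x_0x_1}\cdots p_{x_dx_0},
\end{equation*}
with cyclic indexing $x_{d+1}:=x_0$. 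Applying the generalized geometric identity (Lemma~\ref{Lem:Id2}) evaluates the bracketed $w$-integral in the limit $V\to\RM^d$ to $\Lambda_d\sum_{\lambda\in\Ss_d}(-1)^\lambda\prod_{i=1}^{d}(x_i-x_0)_{\lambda_i}$, with the inversion invariance of $\PM$ killing the lower-order boundary contributions coming from the asymptotic expansion in Eq.~\eqref{Eq:Assy0}.

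Once this identity is inserted, each factor $(x_i-x_0)_{\lambda_i}$ is precisely the one produced by the derivation $\partial_{\lambda_i}$ acting inside the left regular representation $\pi_\Ll$ on the groupoid algebra. Collecting the contributions into a diagonal trace of $\pi_\Ll\bigl(p(\partial_{\lambda_1} p)\cdots(\partial_{\lambda_d} p)\bigr)$ and invoking Birkhoff's theorem in the form of Eq.~\eqref{Eq:TrV} converts the spatial average at the site $x_0$ into the trace $\Tt$, yielding the right-hand side of \eqref{Eq:FredholmIndex} with the stated constant $\Lambda_d=(2\I\pi)^{d/2}/(d/2)!$ $\PM$-almost surely.

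The main technical obstacle lies in pushing this computation to full Sobolev regularity. Smooth projections in $C^\infty(\Gg_{\Ll_0})$ are dense in $W(\Gg_{\Ll_0},\Tt)$, but only in a topology that does not obviously control the trace-class estimates uniformly; one must therefore combine the joint continuity of the cocycle on $W(\Gg_{\Ll_0},\Tt)$ (established in subsection~\ref{Sec:Sobolev}) with the continuity of the Fredholm index under compact-norm perturbations on the left-hand side. The delicate step is to certify that a smooth approximant $p_n\to p$ in the Sobolev norm induces a norm-convergent sequence of compressions $\pi_\Ll^-(p_n)\hat D_\Ll(w)\pi_\Ll^+(p_n)$, so that both sides converge along the same sequence and the identity extends from the smooth case, where it is essentially classical via Lemma~\ref{Lem:Id2}, to the Sobolev case, $\PM$-almost surely in $\Ll$.
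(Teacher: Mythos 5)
Your overall route is the same as the paper's: Fredholmness from the $\PM$-almost-sure $(d+\epsilon)$-summability of the module (Proposition~\ref{Pro:FredMod}), the Connes--Chern pairing ${\rm Tr}\big(\gamma_0\,\pi_\Ll(p)[\hat D_\Ll(w),\pi_\Ll(p)]^d\big)$ for the index, insensitivity of the index to $w$ and to $\Ll$ as the license to average, and then Lemma~\ref{Lem:Id2} plus Birkhoff to produce the cocycle. One remark on your last paragraph: the paper never approximates $p$ by smooth projections; the Fredholm module and the pairing formula are set up directly on $W(\Gg_{\Ll_0},\Tt)$, so that step is not needed.

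The genuine problem is the volume bookkeeping in your central display. For fixed $x_0,\dots,x_d$ the \emph{unnormalized} integral $\int_{\RM^d}{\rm d}w\;{\rm tr}_\gamma\big(\gamma_0\prod_{i}\gamma\cdot(\widehat{x_i-w}-\widehat{x_{i+1}-w})\big)$ already converges to $\Lambda_d\sum_\lambda(-1)^\lambda\prod_i(x_i-x_0)_{\lambda_i}$ --- that is precisely the content of identity~\eqref{Eq:Id1}, and Lemma~\ref{Lem:Id2} is its unnormalized hull-averaged lattice version $\int_{\Omega_{\Ll_0}}{\rm d}\bar\PM(\Ll)\sum_{w\in\Ll}(\cdots)$, which agrees with the Lebesgue integral only because the measure is normalized so that ${\rm Dens}_{\Ll_0}=1$. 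Consequently your bracketed quantity $|V|^{-1}\int_V{\rm d}w\,(\cdots)$ tends to $0$ term by term as $V\to\RM^d$, not to $\Lambda_d\sum_\lambda(\cdots)$. If you nevertheless replace it by that finite limit and then \emph{also} normalize the $x_0$-sum by $|V\cap\Ll|^{-1}$ when invoking Eq.~\eqref{Eq:TrV}, you have spent the factor $1/|V|$ twice, and the two sides of \eqref{Eq:FredholmIndex} cannot balance. There is only one such factor available: it must be assigned to the Birkhoff normalization of the base-point sum, converting the absolutely convergent operator trace $\sum_{x_0}\langle x_0|\cdots|x_0\rangle$ into the trace per volume $\Tt$, while the geometric factor must receive the unnormalized integral recentered at $x_0$. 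Organizing this is exactly the purpose of the paper's decoupling step: the change of variables $y_i=x_i-x$ turns the matrix-element factor into a function of $\mathfrak t_x(\Ll)\in\Xi_{\Ll_0}$, whose average against the normalized measure $\PM$ yields $\Tt\big(p\prod_i\partial_{\lambda_i}p\big)$, and the residual dependence on the base point and on $w$ sweeps the hull $\Omega_{\Ll_0}$, producing the left-hand side of Lemma~\ref{Lem:Id2} with no $1/|V|$ attached. Your proposal needs this redistribution made explicit; as written, the limits you take drive the right-hand side to zero.
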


\begin{proof} Proposition~\ref{Pro:FredMod} assures us that the operator~\eqref{Eq:FO} is indeed Fredholm, $\PM$-almost surely. We observe first that the index in Eq.~\eqref{Eq:FredholmIndex} is insensitive to the choice of $w \in \RM^d$, as long as the latter supplies an acceptable set of shifts, which is the case almost surely with respect to the Lebesgue measure of $\RM^d$. Next, we also observe that the index is $\PM$-almost surely insensitive to the choice of $\Ll$ from $\Xi_{\Ll_0}$. To prove the claim, it is enough to demonstrate that the index is invariant against the groupoid action on $\Xi_{\Ll_0}$, specifically, against replacing $\Ll$ by $\mathfrak t_y(\Ll)$, $y \in \Ll$. For this, we perform a unitary transformation from $\ell^2(\Ll)$ to $\ell^2\big ( \mathfrak t_y(x)\big )$ and use the statement from Eq.~\eqref{Eq:UD} to observe that the Fredholm operator transforms into
\begin{equation}
 \pi_{\mathfrak t_y(\Ll)}^-(p)\hat D_{\mathfrak t_y(\Ll)}(y+w) \pi_{\mathfrak t_y(\Ll)}^+(p).
\end{equation}
According to Proposition~\ref{Pro:HadD1}, this is a compact perturbation of 
\begin{equation}
 \pi_{\mathfrak t_y(\Ll)}^-(p)\hat D_{\mathfrak t_y(\Ll)}(w) \pi_{\mathfrak t_y(\Ll)}^+(p),
\end{equation}
hence the index in Eq.~\eqref{Eq:FredholmIndex} is indeed invariant against $\Ll \mapsto \mathfrak t_y(\Ll)$. 

We now start the computation of the index. Proposition~\ref{Pro:FredMod} assures us that, $\PM$-almost surely, the index is supplied by the pairing with the Connes-Chern character,
\begin{equation}
{\rm Ind} \big (\pi_\Ll^-(p)\hat D_{\Ll}(w) \pi_\Ll^+(p) \big ) = \tfrac{1}{2} \, {\rm Tr} \big \{ \gamma_0 \hat D_{\Ll}(w) [\hat D_{\Ll}(w), \pi_\Ll(p)]^{d+1} \big \}.  
\end{equation}
We express the trace as the absolutely convergent sum
\begin{equation}
\sum_{x \in \Ll} {\rm tr}_\gamma \big ( \gamma_0  \langle x | \hat D_{\Ll}(w) [\hat D_{\Ll}(w), \pi_\Ll(p)]^{d+1} |x\rangle\big )  
\end{equation}
and observe that, by using the elementary identities,
$$
\hat D_{\Ll}(w) [\hat D_{\Ll}(w), \pi_\Ll(p)] = -[\hat D_{\Ll}(w), \pi_\Ll(p)] \hat D_{\Ll}(w), \quad \hat D_{\Ll}(w)|x\rangle = |x\rangle \, (\gamma \cdot \widehat{x + w}),
$$
as well as the cyclic property of ${\rm tr}_\gamma$, the summand can be processed as
\begin{align*}
 {\rm tr} \big ( \gamma_0  \langle x | & \hat D_{\Ll}(w)  \big (\hat D_{\Ll}(w) \, \pi_\Ll(p) -\pi_\Ll(p) \, \hat D_{\Ll}(w)\big )[\hat D_{\Ll}(w), \pi_\Ll(p)]^d |x\rangle\big ) \\
& \qquad \qquad = 2\, {\rm tr} \big ( \gamma_0  \langle x | \pi_\Ll(p) [\hat D_{\Ll}(w), \pi_\Ll(p)]^d |x\rangle\big ).
\end{align*}
Therefore,
\begin{equation}
{\rm Ind} \big (\pi_\Ll^-(p)\hat D_{\Ll}(w) \pi_\Ll^+(p) \big ) =  \sum_{x \in \Ll} {\rm tr}_\gamma\big ( \gamma_0 \langle x |\pi_\Ll(p) [\hat D_{\Ll}(w), \pi_\Ll(p)]^d |x\rangle \big ),  
\end{equation}
and note that the sum remains absolutely convergent. We now insert resolutions of the identity and write the right side as
\begin{equation}\label{Eq:Inter5}
\begin{aligned}
  & \sum_{x \in \Ll} \sum_{x_i \in \Ll} \Big (\prod_{i=0}^{d}\langle x_i | \pi_\Ll(p)|x_{i+1} \rangle \Big )
 {\rm tr}_\gamma \Big ( \gamma_0 \prod_{i=1}^d \gamma \cdot \big ( \widehat{x_i +w} - \widehat{x_{i+1} +w}\big ) \Big ),
\end{aligned}
\end{equation}
where $x_0 = x_{d+1}=x$. Our next goal is to de-couple the sums over $x$ and over $x_i$'s by using the insensitivity to the choice of $\Ll$ or $w$. For this, it is convenient to introduce the new coordinates $y_i$ that live on $\mathfrak t_x(\Ll)$ and such that $x_i = x + y_i$. Then the right side of Eq.~\eqref{Eq:Inter5} becomes
\begin{equation}
\begin{aligned}
&  \sum_{x \in \Ll} \sum_{y_i \in {\mathfrak t_x(\Ll)}} \Big (\prod_{i=0}^{d}\langle y_i | \pi_{\mathfrak t_x(\Ll)}(p)|y_{i+1} \rangle \Big ) \\
& \qquad \qquad \qquad \qquad \qquad \times {\rm tr}_\gamma \Big ( \gamma_0 \prod_{i=1}^d \gamma \cdot \big ( \widehat{x+y_i +w} -  \widehat{x+y_{i+1} +w}\big )\Big ).
\end{aligned}
\end{equation}
where $y_0 = y_{d+1}=0$. Consider  a term in the above sum that corresponds to a particular $x \in \Ll$  and let $\Ll' = \mathfrak t_x (\Ll)$. Note that $y:= \mathfrak t_x(0) = -x$ belongs to $\Ll'$. Then such term can be completely written in term of the shifted lattice $\Ll'$, as
\begin{equation}
\begin{aligned}
&\sum_{y_i \in \Ll'} \Big (\prod_{i=0}^{d}\langle y_i | \pi_{\Ll'}(p)|y_{i+1} \rangle \Big )  {\rm tr}_\gamma \Big ( \gamma_0 \prod_{i=1}^d \gamma \cdot \big (\widehat{y_i +w-y} - \widehat{y_{i+1} +w-y}\big ).
\end{aligned}
\end{equation}
Note also that $y-w$ belongs to $\mathfrak t_w(\Ll')$. Thus, if we let $\Ll'$ sample the entire $\Xi_{\Ll_0}$ and take into account the invariance of the index, we can write
\begin{equation}
\begin{aligned}
 {\rm Ind} \big (\pi_\Ll^-(p) & \hat D_{\Ll}(w) \pi_\Ll^+(p) \big )  = \int_{\Xi_{\Ll_0}} {\rm d} \PM(\Ll')\sum_{y_i \in \Ll'} \Big (\prod_{i=0}^{d}\langle y_i | \pi_{\Ll'}(p)|y_{i+1} \rangle \Big ) \\
& \times \sum_{y \in \mathfrak t_{w} (\Ll')} {\rm tr}_\gamma \Big ( \gamma_0 \prod_{i=1}^d \gamma \cdot  \big (\widehat{y_i -y} - \widehat{y_{i+1} -y}\big )\Big ).
\end{aligned}
\end{equation}
By giving values to $w$, we can sample the entire hull $\Omega_{\Ll_0}$ in the second line and, at this point, we indeed achieved the decoupling we mentioned at the beginning:
\begin{equation}
\begin{aligned}
 {\rm Ind} \big (\pi_\Ll^-(p)\hat D_{\Ll}(w) & \pi_\Ll^+(p) \big )  = \int_{\Xi_{\Ll_0}} {\rm d} \PM(\Ll')\sum_{y_i \in \Ll'} \Big (\prod_{i=0}^{d}\langle y_i | \pi_{\Ll'}(p)|y_{i+1} \rangle \Big ) \\
& \times \int_{\Omega_{\Ll_0}} {\rm d} \bar \PM(\Ll'')\sum_{y \in \Ll''} {\rm tr}_\gamma \Big ( \gamma_0 \prod_{i=1}^d \gamma \cdot  \big (\widehat{y_i -y} - \widehat{y_{i+1} -y}\big ) \Big ).
\end{aligned}
\end{equation}
Then the statement follows from the geometric identity stated below.
\end{proof}

\begin{lemma}\label{Lem:Id2} Let $y_1,\ldots, y_{d+1}$ be points of $\mathbb{R}^{d}$ with $y_{d+1}={0}$. Then, if the hull $\Omega_{\Ll_0}$ and its measure are invariant against the inversion operation $\Ll \mapsto -\Ll$, the following identity holds:
\begin{eqnarray}\label{Identity}
\int\limits_{\Omega_{\Ll_0}}{\rm d}\bar \PM(\Ll) \ \sum_{w \in \Ll} \mathrm{tr}_\gamma \Big (\gamma_0 \prod_{i=1}^{d}\gamma \cdot \left(\widehat{y_i-w} -\widehat{y_{i+1}-w} \right )\Big ) 
 =\Lambda_d \sum_{\lambda} (-1)^\lambda \prod_{i=1}^{d} (y_i)_{\lambda_i}.
\end{eqnarray}
\end{lemma}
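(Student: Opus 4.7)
The plan is to reduce the identity to the already-established Euclidean formula \eqref{Eq:Id1}. Setting
$$
F(w) := \mathrm{tr}_\gamma\Big(\gamma_0 \prod_{i=1}^{d} \gamma \cdot(\widehat{y_i-w}-\widehat{y_{i+1}-w})\Big),
$$
the right-hand side of \eqref{Identity} equals $\int_{\RM^d} F(w)\,dw$ by \eqref{Eq:Id1}, so the lemma boils down to establishing the ergodic-to-Lebesgue exchange
$$
\int_{\Omega_{\Ll_0}} d\bar\PM(\Ll)\sum_{w\in\Ll}F(w) = \int_{\RM^d} F(w)\,dw.
$$
The two ingredients needed are a summability estimate on $F$ and a suitable extension of the defining density relation $|V| = \int d\bar\PM(\Ll)\,|\Ll\cap V|$ beyond indicator functions.

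First, use the asymptotic \eqref{Eq:Assy0} to expand, for $|w|\to\infty$,
$$
\widehat{y_i-w}-\widehat{y_{i+1}-w}=|w|^{-1}\Delta_i(\hat w)+O(|w|^{-2}),
$$
where $\Delta_i(\hat w)$ is (up to a scalar) the projection of $y_i-y_{i+1}$ onto the hyperplane orthogonal to $\hat w$. Total antisymmetry of the Clifford trace $\mathrm{tr}_\gamma(\gamma_0\gamma_{j_1}\cdots\gamma_{j_d})\propto \epsilon_{j_1\cdots j_d}$ makes the naive leading contribution to $F(w)$ proportional to $|w|^{-d}\det(\Delta_1(\hat w),\ldots,\Delta_d(\hat w))$, and this determinant vanishes identically because the $d$ vectors $\Delta_i(\hat w)$ all lie in the same $(d-1)$-dimensional hyperplane. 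Consequently $|F(w)|=O(|w|^{-d-1})$, which is Lebesgue integrable on $\RM^d$ and, since $(r,R)$-Delone sets have polynomially bounded growth, absolutely summable over any $\Ll\in\Omega_{\Ll_0}$. The inversion symmetry $I_*\bar\PM=\bar\PM$ can additionally be used at this stage to pass to the symmetrization $\tfrac12\bigl(F(w)+F(-w)\bigr)$ inside the $\bar\PM$-integral, which matches the structure of the target determinantal expression and streamlines the dominated-convergence bounds below.

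Second, perform the density exchange. The defining relation $|V|=\int d\bar\PM(\Ll)\,|\Ll\cap V|$ extends through characteristic functions, simple functions, monotone convergence and decomposition into positive and negative parts to
$$
\int_{\Omega_{\Ll_0}} d\bar\PM(\Ll)\sum_{w\in\Ll\cap V} g(w) = \int_V g(w)\,dw
$$
for every $g\in L^1(\RM^d)\cap L^\infty(\RM^d)$. Letting $V$ exhaust $\RM^d$ and applying dominated convergence through the $|w|^{-d-1}$ bound from the previous step yields the exchange with $g=F$, and invoking \eqref{Eq:Id1} concludes the proof. The main obstacle is the pointwise vanishing of the $|w|^{-d}$ leading term: this rests on a careful identification of the projection coefficient in \eqref{Eq:Assy0} and on the antisymmetric structure of the Clifford trace that forces the $d$ co-planar vectors $\Delta_i(\hat w)$ to produce a vanishing determinant. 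Once that cancellation is in hand, the rest reduces to standard measure-theoretic bookkeeping and a direct appeal to \eqref{Eq:Id1}.
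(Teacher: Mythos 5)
Your proof is correct, but it follows a genuinely different route from the one in the paper. You reduce the Delone identity to the known continuum identity~\eqref{Eq:Id1} via a Campbell-type first-moment exchange: since $\bar\PM$ is translation invariant and normalized so that $V\mapsto\int{\rm d}\bar\PM(\Ll)\,|\Ll\cap V|$ is Lebesgue measure (a fact the paper records at the end of the subsection on invariant measures), the exchange $\int_{\Omega_{\Ll_0}}{\rm d}\bar\PM(\Ll)\sum_{w\in\Ll}F(w)=\int_{\RM^d}F(w)\,{\rm d}w$ holds for every Borel $F\in L^1(\RM^d)$, and your cancellation of the $|w|^{-d}$ leading term is sound: the $d$ projected vectors all lie in the hyperplane $\hat w^\perp$, the form $\mathrm{tr}_\gamma(\gamma_0(\gamma\cdot v_1)\cdots(\gamma\cdot v_d))$ is proportional to $\det(v_1,\ldots,v_d)$, so $F(w)=O(|w|^{-d-1})$ and everything is absolutely convergent. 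The paper instead re-proves the geometric identity from scratch on the Delone set: it converts the Clifford trace into oriented simplex volumes, compares each simplex $\mathfrak S_j(w)$ with a unit-ball sector $B_1^j(w)$ (the difference decaying like $|w|^{-(d+1)}$ by the same asymptotic mechanism you exploit), kills the correction term using the inversion symmetry of $\bar\PM$, and reduces the remainder to counting lattice points inside the simplex $\mathfrak S$, which produces $\mathrm{Vol}(B_1)\,\mathrm{Vol}(\mathfrak S)$ directly from the first-moment identity applied only to the indicator of $\mathfrak S$. What each buys: the paper's argument is self-contained and geometrically transparent, while yours is shorter and modular but outsources the combinatorial core to~\eqref{Eq:Id1}; notably, your route never uses the hypothesis that $\Omega_{\Ll_0}$ and $\bar\PM$ are inversion invariant (the oddness that the paper needs is supplied for free by the inversion invariance of Lebesgue measure once the Campbell exchange is in place), so you actually prove a slightly stronger statement. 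Two small points you should make explicit: the Campbell exchange needs $F$ defined everywhere (the convention $\hat 0=0$ at the finitely many points $w=y_i$ is harmless because the first moment measure, being Lebesgue, charges no singleton), and the matching of the constant $\Lambda_d$ relies on the normalization $\mathrm{Dens}_{\Ll_0}=1$ fixed in the paper.
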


\begin{proof} As in \cite{ProdanJPA2013}, the proof relies on a geometric interpretation of the trace
\begin{equation}\label{GeomId}
\mathrm{tr}_\gamma\big (\gamma_0 (\gamma \cdot y_1) \cdots (\gamma \cdot y_d)\}=\alpha_d \  \mathrm{Vol}\big[{0},{y}_1,\ldots,{y}_{d}\big], \quad \alpha_d = (2\imath)^\frac{d}{2} d!,
\end{equation}
where $[{y}_0,{y}_1,\ldots,{y}_{d}]$ is the simplex with vertices at ${y}_0$, ${y}_1$, $\ldots$, ${y}_{d}$ and $\mathrm{Vol}$ indicates the oriented volume of such simplex. Note that ${y}_0={0}$ in Eq.~\ref{GeomId}. Expanding the left hand side of Eq.~\ref{Identity}, we obtain
\begin{equation}\label{X1}
\alpha_d \int_{\Omega_{\Ll_0}}{\rm d}\bar \PM(\Ll) \ \sum_{w \in \Ll}\sum_{j=1}^{d+1}(-1)^{j+1} \mathrm{Vol}\big[{\bm 0},\widehat{y_1-w},\ldots, \widehat{\underline{y_j-w}},\ldots,\widehat{y_{d+1}-w}\big],
\end{equation}
where the underline means the entry is omitted. It is convenient to translate the simplexes and move the first vertex to a proper place to work with the simplexes
\begin{equation}
\mathfrak{S}_j(w)=[w+\widehat{y_1-w},\ldots, w,\ldots,w+\widehat{y_{d+1}-w}],
\end{equation}
where $w$ is located at the $j$-th position. We will also denote by $\mathfrak{S}$ the simplex
\begin{equation}
\mathfrak{S}=[y_1, \ldots, y_{d+1}],
\end{equation}
where we recall that $y_{d+1}$ coincides with the origin. To summarize, 
\begin{equation}\label{Intermetzo}
\begin{aligned}
\int_{\Omega_{\Ll_0}}{\rm d}\bar \PM(\Ll) \sum_{w \in \Ll} \ \mathrm{tr}_\gamma \Big (\gamma_0 \prod_{i=1}^{d} & \gamma \cdot \big(\widehat{ {y}_i-{w}}-\widehat{ {y}_{i+1}-{w}}\big )\Big )  \\
&  \quad =\alpha_d \int_{\Omega_{\Ll_0}}{\rm d}\bar \PM(\Ll) \sum_{w \in \Ll}\sum_{j=1}^{d+1} \mathrm{Vol}\{\mathfrak{S}_j({w})\}.
\end{aligned}
\end{equation}
The factor $(-1)^{j+1}$ disappeared because we changed the order of the vertices and note that, if $w$ is located inside $\mathfrak{S}$, then the orientations of the simplexes $\mathfrak{S}_j(w)$ coincide with that of $\mathfrak{S}$, because the former can be continuously deformed into the latter while keeping the volume finite (see Fig.~\ref{Fig:Fig5}(a) for a demonstration).

\begin{figure}
\center
  \includegraphics[width=12cm]{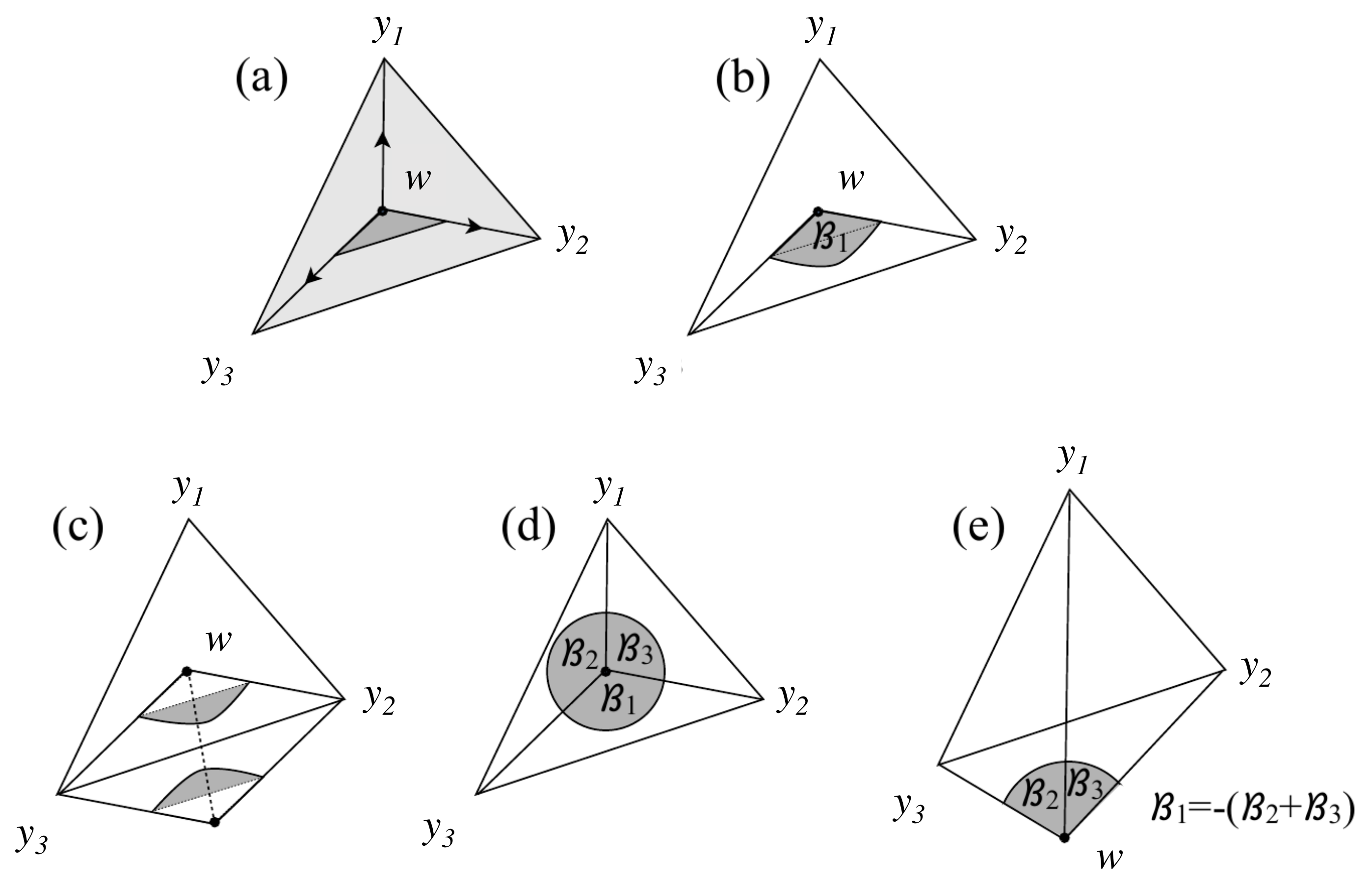}\\
  \caption{\small (a) Illustration of the simplexes $\mathfrak{S}=(y_1,y_2,y_3)$ (light gray) and $\mathfrak{S}_1(w)=(w,w+\widehat{y_2-w},w+\widehat{y_3-w})$ (darker gray), together with the interpolation process that takes $\mathfrak{S}_1(w)$ into $\mathfrak{S}$. (b) Illustration of the ball sector $B_1^1(w)$ (labeled as ${\Bb}_1$ in the figure) corresponding to the simplex $\mathfrak{S}_1(w)$. (c) Illustration of the inversion operation of $w$ relative to the center of the segment $(y_2,y_3)$. The volume of $B_1^1(w)-\mathfrak{S}_1(w)$ (shaded in gray) changes sign under this operation. (d) The ball sectors $B^1_1(w)$, $B_1^2(w)$ and $B_1^3(w)$(denoted as $\Bb_j$ in the figure) have same orientation and they add up to the full unit disk when $w$ is inside $\mathfrak{S}$. (e) When $w$ is outside of $\mathfrak{S}$, the ball sector $B_1^1(w)$ has the opposite orientation of $B_1^2(w)$ and $B_1^3(w)$ and, as a consequence, they add up to zero.}
  \label{Fig:Fig5}
\end{figure} 

Excepting $w$, the vertices of the simplexes $\mathfrak{S}_j(w)$ are all located on the unit sphere centered at $w$. As such, for any given simplex $\mathfrak{S}_j(w)$, the facets stemming from $w$ define a sector of the unit ball $B_1(w)$. This sector will be denoted by $B_1^j(w)$. This construction is illustrated in Fig.~\ref{Fig:Fig5}(b) for the 2-dimensional case. The orientations of $B_1^j(w)$'s are considered to be the same as those of $\mathfrak{S}_j(w)$'s. Now, an important observation is that
\begin{equation}\label{Asymptotic}
\mathrm{Vol}\{\mathfrak{S}_j(w)\}-\mathrm{Vol}\{ B^j_1(w)\} \sim |w|^{-(d+1)}
\end{equation}
asymptotically as $|w|\rightarrow \infty$. Consequently, we can write
\begin{eqnarray}
\int_{\Omega_{\Ll_0}}{\rm d}\bar \PM(\Ll) \sum_{w \in \Ll} \sum_{j=1}^{d+1} \mathrm{Vol}\{ \mathfrak{S}_j ({w})\}= \int_{\Omega_{\Ll_0}}{\rm d}\PM(\Ll) \sum_{w \in \Ll} \sum_{j=1}^{d+1} \mathrm{Vol}\{ B_1^j (w)\}\\
\indent +\sum_{j=1}^{d+1} \int_{\Omega_{\Ll_0}}{\rm d}\bar \PM(\Ll) \sum_{w \in \Ll} \big ( \mathrm{Vol}\{ \mathfrak{S}_j ({w})\}-\mathrm{Vol}\{ B_1^j ({w}) \} \big ), \nonumber
\end{eqnarray}
with the integrals in the second line being absolutely convergent. In fact, 
\begin{equation}\label{Fact1}
\int_{\Omega_{\Ll_0}}{\rm d} \bar \PM(\Ll) \sum_{w \in \Ll}\big ( \mathrm{Vol}\{ \mathfrak{S}_j (w)\}-\mathrm{Vol}\{ B_1^j (w) \} \big )=0
\end{equation}
for all $j=\overline{1,d+1}$, because, given our assumption, the integrand is odd under the inversion of $w$ relative to the center of the facet $\{ x_1,\ldots,\underline{x_j},\ldots,x_{d+1}\}$ of the simplex $\mathfrak{S}$ and these two cases are sampled with equal weight. This fact is illustrated in Fig.~\ref{Fig:Fig5}(c) for the 2-dimensional case. As for the remaining term,
\begin{equation}\label{Fact2}
 \sum_{j=1}^{d+1} \mathrm{Vol}\{ B_1^j (w)\}=\left \{
 \begin{array}{ll}
 \mathrm{Vol}\big(B_1(w)\big ) & \ \mbox{if} \ w \ {\rm inside} \ {\mathfrak S},\medskip \\
 0 & \ \mbox{if} \ w \ \mbox{outside} \ \mathfrak{S}.
 \end{array}
 \right .
 \end{equation}
This is illustrated in Figs.~\ref{Fig:Fig5}(d-e) for the 2-dimensional case. Then the right side of Eq.~\eqref{Intermetzo} reduces to
\begin{equation}
\begin{aligned}
\alpha_d \mathrm{Vol}\big(B_1(0)\big) \int_{\Omega_{\Ll_0}}{\rm d}\PM(\Ll) \ |\Ll \cap {\mathfrak S}|  = \alpha_d \mathrm{Vol}\big(B_1(0)\big)  |{\mathfrak S}| .
\end{aligned}
\end{equation}
We write the volume of the simplex $\mathfrak{S}$ as a determinant, in which case
\begin{equation}
\alpha_d \mathrm{Vol}\big(B_1(0)\big)  {\rm Vol}({\mathfrak S}) = \frac{(2\imath \pi)^\frac{d}{2}}{(d/2)!}{\rm Det}\big (y_1,y_2,\ldots,y_d\big )
\end{equation}
and the statement follows.
\end{proof}

\subsection{Concluding remarks and outlook} The main difficulty in extending the above proof to the cases with $N >1$ comes from the fact that the center of mass of the subsets live on a different, much finer lattice than $\Ll$. A more difficult task is to map the range of the pairings and this is where our efforts will go in the near future. For this, we need to develop new tools to compute the K-theories of the bi-equivariant subalgebra $C^\ast_{r,\Ss_N}(\widehat \Gg_{\Ll_0}^{(N)},\CM)$. In the meantime, we plan to examine the problem numerically.

\bibliographystyle{amsplain}

\end{document}